\theoremstyle{plain} 
\newtheorem{theorem}{Theorem}[section]
\newtheorem{lemma}[theorem]{Lemma}
\newtheorem{proposition}[theorem]{Proposition}
\newtheorem{property}[theorem]{Proposition}
\newtheorem{corollary}[theorem]{Corollary}
\newtheorem{claim}[theorem]{Claim}
\theoremstyle{definition} 
\newtheorem{definition}[theorem]{Definition}
\newcommand{\E}{E(G)}
\newcommand{\V}{V(G)}
\newcommand{\Gr}{G=(\V,\E)}
\newcommand{\ChaCli}{ChainClique}
\newcommand{\intpreq}[3]{#1 \cap #3 \subseteq #2}
\newcommand{\nintpreq}[3]{#1 \cap #3 \not\subseteq #2}
\newcommand{\capc}{\mathcal{C}}
\newcommand{\siginv}{\sigma^{-1}}
\title
{Maximal cliques structure for cocomparability graphs and applications}
\author{
    J\'er\'emie Dusart\thanks{IRIF, CNRS  \&  Universit\'e Paris Diderot, Paris, France}, Michel
     Habib$^{\ast}$\thanks{Gang, Inria Paris}
    and Derek G. Corneil\thanks{Department of Computer Science,
    University of Toronto, Toronto, Ontario, Canada}
}
\date\today                                        
\begin{document}

\maketitle

\begin{abstract}
A cocomparability graph is a graph whose complement admits a transitive orientation. An interval graph is the intersection graph of a family of intervals on the real line. In this paper we investigate the relationships between interval and cocomparability graphs. This study is motivated by recent results \cite{CDH13, DusH13} that show that for some problems, the algorithm used  on interval graphs can also be used with small modifications on cocomparability graphs. Many of these algorithms are based on graph searches that preserve cocomparability orderings.  

First we propose a characterization of cocomparability graphs via a lattice structure on the set of their maximal cliques.  Using this characterization we can prove that every maximal interval subgraph of a cocomparability graph $G$  is also a maximal chordal subgraph of $G$.
Although the size of this lattice of maximal cliques can be exponential in the size of the graph, it can be used as a framework to design and prove algorithms on cocomparability graphs.
In particular  we show  that  a new graph search, namely Local Maximal Neighborhood Search (LocalMNS) leads to an $O(n+mlogn)$ time algorithm to find a 
maximal interval subgraph of a cocomparability graph. Similarly we propose a linear time algorithm to compute all simplicial vertices in a cocomparability graph. In both cases we improve on the current state of knowledge.

\end{abstract} 

\textbf{Keywords:}  (co)-comparability graphs, interval graphs, posets, maximal antichain lattices,  maximal clique lattices, graph searches.


\section{Introduction}\label{background}

This paper is devoted to the study of  \emph{cocomparability graphs}, which are the complements of comparability graphs. \emph{A comparability graph} is simply an undirected  graph that admits a transitive acyclic orientation of its edges. Comparability graphs are well-studied and arise naturally in the process of modeling real-life problems, especially those involving partial orders. For a survey see \cite{GOL,APPMM}. We also consider \emph{interval graphs} which are the intersection graphs of a family of intervals on the real line. Comparability graphs and cocomparability graphs are well-known subclasses of perfect graphs \cite{GOL}; and  interval graphs are a  well-known subclass of cocomparability graphs \cite{BLS99}.
Clearly a given cocomparability graph $G$ together with an acyclic
transitive orientation of the edges of $\overline{G}$ (the
corresponding comparability graph) can be equivalently represented by
a poset $P_G$; thus new results in any of these three areas
immediately translate to the other two areas. In this paper, we will  often omit the translations but it is important to keep in mind that they exist.

A triple $a,b,c$ of vertices forms an \emph{asteroidal triple} if the vertices are pairwise 
independent, and every pair remains connected when the third vertex and its neighborhood are 
removed from the graph. An \emph{asteroidal triple free graph} 
(AT-free for short) is a graph with no asteroidal triples. It is well-known that  AT-free graphs 
strictly contain cocomparability graphs, see \cite{GOL}.

A classical way to characterize a cocomparability graph is by  means of an \emph{umbrella-free} total ordering of its vertices. In an ordering $\sigma$ of $G$'s vertices, an \emph{umbrella} is a triple of vertices $x,~y,~z$ such that $x <_\sigma y <_\sigma z$, $xy,~yz \notin \E$, and $xz \in \E$. It has been observed in \cite{KrSt93} that a graph is a cocomparability graph if and only if it admits an umbrella-free ordering. We will also call an umbrella-free ordering a \emph{cocomp ordering}. In a similar way, interval graphs are characterized by interval orderings, where an \emph{interval ordering} $\sigma$ is an ordering of the graph's vertices that does not admit a triple of vertices $x,~y,~z$ such that $x <_\sigma y <_\sigma z$, $xy \notin \E$, and $xz \in \E$. (Notice that an interval ordering is a cocomp ordering.)  Other characterizations of interval graphs appear in theorem \ref{th:GilHof}.

The paper studies  the relationships shared by interval and cocomparability graphs and is motivated by some recents results:

\begin{itemize}
\item For the Minimum Path Cover (MPC) Problem  (a minimum set of paths such that each vertex of $G$ belongs
  to exactly one path in the set), Corneil, Dalton and  Habib showed that the greedy MPC algorithm for interval graphs, when applied to a Lexicographic Depth First Search (LDFS) cocomp ordering provides a certifying solution for cocomparability graphs (see \cite{CDH13}).  


\item For the problem of producing a cocomp ordering (assuming the graph is cocomparability) Dusart and  Habib showed that the multisweep Lexicographic Breadth First Search (LBFS)$^+$ algorithm to find an interval ordering also finds a cocomp ordering (\cite{DusH13}).  Note that $O(|V(G)|)$ LBFSs must be used in order to guarantee these results.
\end{itemize}

Other similar results
 will be surveyed in subsection \ref{list}. From these results, some natural questions arise: Do  cocomparability graphs have some kind of hidden interval structure that allows the ``lifting'' of some interval graph algorithms to cocomparability graphs? What is the role played by graph searches LBFS and LDFS and are there other searches/problems where similar results hold?

As mentioned previously, interval graphs form a strict subclass of cocomparability graphs. It is also known that every minimal triangulation of a cocomparability graph is an interval graph \cite{MohRo1996,Meister05}. In section \ref{sec:lattstruct} of this article, we will show that we can equip the set of maximal cliques of a cocomparability graph with a lattice structure where every chain of the lattice forms an interval graph.  Note that condition (iii) of  theorem \ref{th:GilHof} states that a graph $G$ is an interval graph if and only if the maximal cliques of $G$ can be linearly ordered so that for every vertex $x$, the cliques containing $x$ appear consecutively.  Thus, through the lattice, a cocomparability graph can be seen as a special composition of interval graphs.  
In particular, given a cocomparability graph $G$ with $P$ a transitive orientation of $\overline{G}$, the lattice ${\cal MA}(P)$ is formed on the set of maximal antichains of $P$ (i.e., the maximal cliques of $G$).  A graph $H=(V(H),E(H))$ with $E(H) \subseteq \E$ is a \emph{maximal chordal} (respectively \emph{interval}) \emph{subgraph} if and only if $H$ is a chordal graph and $\forall S \subseteq \E-E(H)$, $S \neq \emptyset$, $H'=(V(H),E(H)\cup S)$ is not a chordal (respectively interval) graph.  Our final result of subsection \ref{sec:maximalinter} states that 
every maximal interval subgraph of a cocomparability graph is also a maximal chordal subgraph.

In sections \ref{ALG} and \ref{sec:simplicial} we turn our attention to algorithmic applications of the theory previously developed on 
the lattice ${\cal MA}(P)$.  In section \ref{ALG} we present algorithm Chainclique which on input a graph $G$ and a total ordering
$\sigma$ of $V(G)$ returns an ordered set of cliques that collectively form an interval subgraph of $G$.  We then introduce a  new graph search (LocalMNS) that is very close to Maximal Neighborhood Search (MNS), which is a generalization of MCS, LDFS and LBFS.  We
show that Chainclique with $\sigma$ being a LocalMNS cocomparability ordering of $G$ 
returns a maximal interval subgraph of the cocomparability graph $G$;
this algorithm also gives us a way to compute a minimal interval extension of a partial order (definitions given in subsection \ref{note}). 
Section \ref{sec:simplicial} uses Chainclique to compute the set of simplicial vertices in a cocomparability graph.  

Concluding remarks appear in section \ref{CONCL}.




\subsection{Notation}\label{note}
In this article, for graphs we follow standard notation; see, for instance, \cite{GOL}. All the graphs considered here are finite, undirected,  simple and with no loops. 
An edge between vertices $u$ and $v$ is denoted by $uv$, and in this case vertices $u$ and $v$ are said to be 
\emph{adjacent}. $\overline{G}$ denotes the \emph{complement} of $\Gr$, i.e., $\overline{G} = (\V, \overline{\E})$, where $uv \in \overline{\E}$ if and only if $u\neq v$ and $uv \notin E(G)$. Let $S \subseteq \V$ be a set of vertices of $G$. 
Then, the subgraph of $G$ induced by $S$ is denoted by $G[S]$, i.e., $G[S] = (S,F)$, where for any two vertices $u, v \in S$, $uv \in F$ if and only if $uv \in \E$. 
 The set $N(v)=\{u \in \V | uv \in  \E\}$ is called the \emph{neighborhood} of the vertex $v \in \V$ in $\Gr$.  A vertex $v$ is \emph{simplicial} 
 if $G [N(v) \cup \{v\}]$ is a clique.
 An \emph{ordering} $\sigma$ of $\V$ is a permutation of $\V$ where $\sigma(i)$ is the i'th vertex in $\sigma$; $\sigma^{-1}(x)$ denotes the position of $x$ in $\sigma$.  For two vertices $u,~v$, we write that $ u <_\sigma v$ if and only if $\sigma^{-1}(u) < \sigma^{-1}(v)$. For two vertices $u$, $v \in \V$, we say that $u$ is \emph{left} (respectively \emph{right}) of $v$ in $\tau$ if $u <_\tau v$ (respectively $v <_\tau u$).
	
For partial orders we use the following notation. A \emph{partial order} (also known as a \emph{poset}) $P =(X, \leq_{P})$ is an ordered pair with a finite set $X$, the \emph{ground set} of $P$, and with a binary relation $\leq_P$ satisfying reflexivity, anti-symmetry and transitivity. For $x,~y \in X$, $x \neq y$, if $x \leq_{P} y $ or $y \leq_{P} x $ then $x,y$ are \emph{comparable}, otherwise they are \emph{incomparable} and denoted by $x \parallel_P y$.  We will also use the \emph{covering relation} in $P$ denoted by $\prec_P$, satisfying $x \prec_P y$
if and only if   $x \leq_Py$ and  $\forall z $ such that $x \leq_P z \leq_P y$  then $x=z$ or $z=y$.  In such a case we say that ``$y$ \emph{covers} $x$'' or that ``$x$ \emph {is covered by} $y$''. 

A {\it chain} (respectively an {\it antichain}) is a partial order in
which every pair (respectively no pair) is comparable.  As mentioned previously, a given cocomparability graph $G$ and a transitive
orientation of the edges of $\overline G$ can be equivalently represented by a poset $P_G$.  Note that a chain (respectively an antichain) in $P_G$ 
corresponds to an independent set (respectively a clique) in $G$.
An \emph{extension}  of a partial order $P=(X, \leq_{P})$ is a partial order $ P'= (X, \leq_{P'})$, where for $u, v \in X$, $u\leq_{P} v$ implies $u \leq_{P'} v$. In particular, if $ P'$ is a chain  then $P'$ is called a \emph{linear extension} of $P$. 
An \emph{interval extension} of a partial order is an extension that is also an interval order (\emph{interval orders} are acyclic transitive orientations of the complement of interval graphs).
$P^-$ is the poset obtained from $P$ by reversing all comparabilities.

A \emph{lattice} is a particular partial order $L=(L, \leq_{L})$\footnote{Note that we use the same symbol, namely $L$ for both the lattice and the ground set of the lattice; the exact
meaning will be clear from the context.} for which  each two-element subset $\{a,b\} \subseteq L$ has a \emph{join} (i.e., least upper bound) and a \emph{meet} (i.e., greatest lower bound), denoted by $a \vee b$ and $a \wedge b$ respectively.  This definition makes $\wedge$ and $\vee$  binary operations on $L$.
All the lattices considered here are assumed to be finite.  A \emph{distributive} lattice is one in which the operations of join and meet distribute over each other;  a \emph{modular} lattice is a lattice that satisfies the following self-dual condition: $x \leq b$ implies $x \vee (a \wedge b) = (x \vee a) \wedge b$ for all $a$.
For other definitions  on  lattices, 
the reader is referred to  \cite{Birkhoff, G68, DP02, S02, CLM12}.


\section{Lattice characterization of cocomparability graphs}\label{sec:lattstruct}

\subsection{The maximal antichain lattice of a partial order }\label{sec:mapreview}

It is known from Birkhoff \cite{Birkhoff}\footnote{Indeed Birkhoff studied the ideal lattice of a partial order, but there exists a natural bijection between ideals and antichains.} that ${\cal A}(P)$, the set of all antichains  of a partial order $P$, can be equipped with a lattice structure using the following relation between antichains: if $A,~B$ are two antichains in $P$ then $A \leq_{{\cal A}(P)} B$ if and only if $\forall a\in A$, $\exists b \in B$ with $a \leq_{P} b$. Furthermore, it is also well-known that the lattice ${\cal A}(P) = ({\cal A}(P),\leq_{{\cal A}(P)})$ is a distributive lattice. 

We now consider the relation $\leq_{{\cal MA}(P)}$, which is the restriction of the relation $\leq_{{\cal A}(P)}$ to the set of all maximal 
(with respect to set inclusion) antichains of $P$ denoted by ${\cal MA}(P)$. Let us  now review the main results known about ${\cal MA}(P)=({\cal MA}(P),\leq_{{\cal MA}(P)})$. 

The next lemma shows that the definition of $\leq_{{\cal MA}(P)}$ can be written symmetrically  in the 2 antichains $A$ and $B$, since they are maximal antichains.

\begin{lemma}\label{reversedef}\cite{Berhendt}
Let $A, B$ be two maximal antichains of a partial order $P$. 

$A \leq_{{\cal MA}(P)} B$ if and only if $\forall   a \in A$, $\exists b  \in B$ with $ a \leq_{P} b$ if and only if $\forall b \in B$, $\exists a  \in A$ with $ a \leq_{P} b$.  

\end{lemma}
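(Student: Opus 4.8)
The plan is to prove both "if and only if" directions and, by symmetry, it suffices to establish one implication: if $A \leq_{{\cal MA}(P)} B$ (i.e., $\forall a \in A$, $\exists b \in B$ with $a \leq_P b$), then $\forall b \in B$, $\exists a \in A$ with $a \leq_P b$. The reverse implication follows by applying this to $P^-$, noting that $A \leq_{{\cal MA}(P)} B$ in $P$ corresponds to $B \leq_{{\cal MA}(P^-)} A$ in $P^-$, so the roles of $A$ and $B$ get swapped.

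First I would fix an arbitrary $b \in B$ and argue by contradiction, supposing that no $a \in A$ satisfies $a \leq_P b$. The key observation is that $A$ is a maximal antichain, so $b$ must be comparable to some element of $A$ — otherwise $A \cup \{b\}$ would be a larger antichain. Since we have assumed $a \not\leq_P b$ for all $a \in A$, the only remaining possibility is that there exists $a_0 \in A$ with $b <_P a_0$ (strict, since $b \notin A$ would make equality impossible, or handle $b \in A$ trivially). Now apply the hypothesis $A \leq_{{\cal MA}(P)} B$ to this $a_0$: there exists $b_0 \in B$ with $a_0 \leq_P b_0$. By transitivity, $b <_P a_0 \leq_P b_0$, so $b <_P b_0$ with $b, b_0 \in B$. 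This contradicts the fact that $B$ is an antichain (in particular its elements are pairwise incomparable). Hence some $a \in A$ must satisfy $a \leq_P b$, completing the implication.

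The main subtlety to handle carefully is the case analysis around the element $b$ and equality/strictness: one must ensure that "$b$ comparable to some $a \in A$" combined with "$a \not\leq_P b$ for all $a \in A$" genuinely forces a strict relation $b <_P a_0$, and that this does not collapse when $b$ itself happens to lie in $A \cap B$. If $b \in A$, then trivially $a = b \in A$ witnesses $a \leq_P b$ and there is nothing to prove, so we may assume $b \notin A$; then comparability of $b$ with $a_0 \in A$ together with $b \neq a_0$ and the negated hypothesis forces $b <_P a_0$ as claimed. I expect this bookkeeping to be the only real obstacle — the core argument is the short transitivity-plus-antichain contradiction — and it is presumably why the original source \cite{Berhendt} is cited rather than the proof being reproduced.
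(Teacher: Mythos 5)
Your proof is correct. The paper itself gives no proof of this lemma (it simply cites Behrendt), so there is nothing to compare against; your argument — under the negation, maximality of $A$ forces $b <_P a_0$ for some $a_0 \in A$, and then transitivity through the witness $b_0$ for $a_0$ produces two distinct comparable elements of the antichain $B$ — is the standard one, and the reduction of the converse to the dual order $P^-$ is handled correctly.
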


Some helpful variations:

\begin{lemma}\label{variation1}
Let $A, B$ be two maximal antichains of a partial order $P$. 

$A <_{{\cal MA}(P)} B$ if and only if $\forall   a \in A$, $\exists b  \in B$ with $ a <_{P} b$ if and only if $\forall b \in B$, $\exists a  \in A$ with $ a <_{P} b$.  

\end{lemma}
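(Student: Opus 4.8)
The plan is to reduce the statement about the strict relation $<_{{\cal MA}(P)}$ to the already-established symmetric characterization of $\leq_{{\cal MA}(P)}$ in Lemma \ref{reversedef}. First I would observe that $A <_{{\cal MA}(P)} B$ means $A \leq_{{\cal MA}(P)} B$ together with $A \neq B$ (as maximal antichains, hence as sets). So throughout I may assume $A \leq_{{\cal MA}(P)} B$ and $A \neq B$, and I want to upgrade each of the two equivalent ``$\leq_P$'' conditions from Lemma \ref{reversedef} to the corresponding ``$<_P$'' condition, and conversely check that either ``$<_P$'' condition trivially implies $A \leq_{{\cal MA}(P)} B$ and $A\neq B$.

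For the forward direction, suppose $A \leq_{{\cal MA}(P)} B$ and $A \neq B$; I want to show $\forall a \in A\ \exists b \in B$ with $a <_P b$. Fix $a \in A$. By definition there is $b \in B$ with $a \leq_P b$; if $a <_P b$ we are done, so suppose the only such $b$'s satisfy $a = b$, i.e. $a \in B$. Now I would argue that $a$ is then ``stuck'': I claim every element of $A$ lies in $B$ and vice versa, contradicting $A \neq B$. The key step is to show that if $a \in A \cap B$ then $a$ cannot be strictly below any element of $B$ and cannot be strictly above any element of $A$ — more precisely, the set $A \cap B$ behaves like a common ``pivot'' and maximality of both antichains forces $A = B$. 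Concretely: take $a \in A \cap B$. For any $a' \in A$, since $A$ is an antichain $a \parallel_P a'$ or $a=a'$; apply $A \leq_{{\cal MA}(P)} B$ to $a'$ to get $b' \in B$ with $a' \leq_P b'$. Using Lemma \ref{reversedef} in the symmetric form, and the fact that $B$ is an antichain containing $a$, I would show $b'$ is forced to be incomparable to $a$ unless $a' = b'$; iterating this comparison argument across all of $A$ and $B$ and invoking maximality (no antichain properly contains a maximal antichain) yields $A \subseteq B$ and $B \subseteq A$, hence $A = B$, the desired contradiction. Therefore some $b \in B$ has $a <_P b$. The ``$\forall b \in B\ \exists a \in A$ with $a <_P b$'' version follows identically by the symmetry already provided by Lemma \ref{reversedef} (or by applying the argument to $P^-$).

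For the reverse direction, if $\forall a \in A\ \exists b \in B$ with $a <_P b$, then in particular $a \leq_P b$ for such $b$, so $A \leq_{{\cal MA}(P)} B$ by definition; and $A \neq B$ because picking any $a \in A$ and the corresponding $b \in B$ with $a <_P b$, if we had $A = B$ then $a, b$ would be two comparable elements of the same antichain, which is impossible unless $a = b$, contradicting $a <_P b$. Hence $A <_{{\cal MA}(P)} B$. The other ``$<_P$'' condition implies $A <_{{\cal MA}(P)} B$ by the same reasoning with the roles of $A$ and $B$ interchanged.

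The main obstacle is the step in the forward direction where I must rule out the ``degenerate'' case $a \in A \cap B$: showing that a shared element of two comparable maximal antichains forces the antichains to coincide. This is where maximality is essential and where a careful, slightly fiddly argument — propagating incomparabilities from the pivot $a$ through both antichains using Lemma \ref{reversedef} — is needed; everything else is immediate from the definitions.
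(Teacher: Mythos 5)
There is a genuine problem, and it sits exactly at the step you yourself flag as the main obstacle. In your forward direction you fix $a\in A$, note that the $b\in B$ with $a\leq_P b$ might be $a$ itself, and then claim that $a\in A\cap B$ together with $A\leq_{{\cal MA}(P)}B$ and $A\neq B$ leads to a contradiction, i.e.\ that two distinct comparable maximal antichains must be disjoint. That claim is false, and the paper itself depends on its falsity: Proposition \ref{decompA} partitions $A$ into $A\cap B$, $S_{min}(A,B)$ and $S_{max}(A,B)$ precisely because $A\cap B$ can be nonempty. Concretely, take $P$ on $\{x,y,z\}$ with the single strict relation $x<_P y$ and $z$ incomparable to both; then $A=\{x,z\}$ and $B=\{y,z\}$ are distinct maximal antichains with $A<_{{\cal MA}(P)}B$ and $z\in A\cap B$. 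Worse, this same example shows that the lemma as literally stated is false: $z\in A$ has no $b\in B$ with $z<_P b$, since $z\parallel_P y$ and $z=z$. So no argument can complete your forward direction; the ``degenerate case'' is not degenerate but a counterexample. (Your reverse direction is fine as written, since its hypothesis is simply stronger than what is needed for $A\leq_{{\cal MA}(P)}B$ and $A\neq B$.)

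The statement that is actually true, and the one the paper uses later (see how Proposition \ref{structchain} extracts $x\in A-C_S$ with $x<_{P_\sigma}y$ for $y\in C_S-A$), restricts the quantifiers to the symmetric difference: $A<_{{\cal MA}(P)}B$ if and only if $A\neq B$ and for every $a\in A-B$ there exists $b\in B-A$ with $a<_P b$ (equivalently, for every $b\in B-A$ there exists $a\in A-B$ with $a<_P b$). With that restriction your reduction strategy goes through cleanly: for $a\in A-B$, maximality of $B$ yields some $b\in B$ comparable to $a$; Lemma \ref{powerMA} forces $a\leq_P b$, and $a\neq b$ because $a\notin B$, so $a<_P b$ and $b\in B-A$ since $A$ is an antichain. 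Conversely, the strict condition on $A-B$ plus reflexivity on $A\cap B$ gives $A\leq_{{\cal MA}(P)}B$, and $A\neq B$ follows as you argue. You should either prove this corrected statement or flag the discrepancy; as written, both the lemma and your proof of it fail on elements of $A\cap B$.
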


\begin{lemma}\label{powerMA}
Let $A, B$ be two maximal antichains of a partial order $P$ such that $A \leq_{{\cal MA}(P)} B$.

If $x \in A$, $y \in B$ then $x \leq_{P} y$ or $x \parallel_{P} y$.
\end{lemma}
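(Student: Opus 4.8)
The plan is to argue by contradiction, assuming that the conclusion fails, i.e., that there exist $x \in A$ and $y \in B$ with $y <_P x$ (this is the only possibility excluded by the statement, since $x \leq_P y$ and $x \parallel_P y$ together with $x = y$ exhaust the remaining cases, and $x = y$ already gives $x \leq_P y$). The goal is then to derive a violation of the fact that $B$ is an antichain.

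First I would invoke Lemma \ref{reversedef}: since $A \leq_{{\cal MA}(P)} B$, applying the first characterization to the element $x \in A$ yields some $b \in B$ with $x \leq_P b$. Next I would chain this with the assumed comparability: from $y <_P x$ and $x \leq_P b$, transitivity of $\leq_P$ gives $y <_P b$; in particular $y \neq b$. Finally, $y$ and $b$ are two distinct comparable elements of $B$, contradicting the assumption that $B$ is an antichain. This contradiction shows that no such pair $x, y$ exists, so for every $x \in A$ and $y \in B$ we have $x \leq_P y$ or $x \parallel_P y$, as claimed.

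I do not expect a genuine obstacle here: the argument is a one-line application of transitivity once the existential witness $b$ is extracted from Lemma \ref{reversedef}. The only point requiring a little care is the initial case analysis — making explicit that ruling out $y <_P x$ (strictly) is exactly what is needed, since the relation between $x$ and $y$ is otherwise governed by comparability/incomparability of $P$. One could equally run the symmetric version of the argument using the second characterization in Lemma \ref{reversedef} (extracting $a \in A$ with $a \leq_P y$ and contradicting that $A$ is an antichain), but the version above is the most direct.
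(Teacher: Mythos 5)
Your proof is correct and is essentially the paper's argument in mirror image: the paper applies the second characterization of Lemma \ref{reversedef} to $y\in B$ to produce $z\in A$ with $z\leq_P y<_P x$ and contradicts $A$ being an antichain, while you apply the first characterization to $x\in A$ and contradict $B$ being an antichain. Your version even dispenses with the paper's explicit case split on $A=B$ versus $A\neq B$, since the transitivity argument works uniformly.
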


\begin{proof}
	We have two cases: either $A=B$ or $A \neq B$. In the first case, since $A$ is an antichain and $A=B$, we get $\forall x \in A$, $\forall y \in A$, if $x \neq y$ then $x \parallel_{P} y$ and if $x=y$ then $x \leq_{P} y$. 
	
	In the second case, suppose for contradiction there exists $x \in A$, $y \in B$ such that $y <_{P} x$. Since we are in the case where $A$ and $B$ are different maximal antichains, and  since $x$ and $y$ are comparable, necessarily we have $x \in A$-$B$, $y \in B$-$A$ and also $A \le_{{\cal MA}(P)} B$.  
	Applying lemma  \ref{reversedef} on $A$ and $B$, there exists $z \in A$-$B$ such that $z \le_P y$. By transitivity of $P$, we establish that $z \le_{P} x$, therefore contradicting $A$ being an antichain.
\end{proof}

Now we focus on an interesting consecutiveness property in ${\cal MA}(P)$.

\begin{proposition}\label{consecutivity}
    Let $A,$ $B,$ $C$ be three maximal antichains of a partial order P such that  $A \leq_{{\cal MA}(P)} B \leq_{{\cal MA}(P)} C$; then $\intpreq{A}{B}{C}$.
\end{proposition}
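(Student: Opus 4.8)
The plan is to take an arbitrary $x \in A \cap C$ and trap it between two elements of $B$, then use the fact that $B$ is an antichain to collapse that trap to a single point lying in $B$. Concretely, since $A \leq_{{\cal MA}(P)} B$ and $x \in A$, the definition of $\leq_{{\cal MA}(P)}$ immediately supplies some $b \in B$ with $x \leq_P b$. For the other side I use the symmetric reformulation provided by Lemma \ref{reversedef}: from $B \leq_{{\cal MA}(P)} C$ and $x \in C$, the ``$\forall c \in C$, $\exists b' \in B$ with $b' \leq_P c$'' form of the relation yields some $b' \in B$ with $b' \leq_P x$.

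Now I combine: transitivity of $P$ applied to $b' \leq_P x \leq_P b$ gives $b' \leq_P b$, with $b, b' \in B$. Since $B$ is an antichain, comparable elements of $B$ must be equal, so $b' = b$. Hence $x \leq_P b = b' \leq_P x$, and anti-symmetry of $P$ forces $x = b' \in B$. As $x$ was an arbitrary element of $A \cap C$, this establishes $\intpreq{A}{B}{C}$.

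The argument never needs to distinguish the degenerate situations ($A = B$, $B = C$, or all three equal), since those only make the inclusion easier; the proof above covers them uniformly. The one place that requires a little care is the direction in which the order relation is read on the $B \leq_{{\cal MA}(P)} C$ step: one must use the ``for all $c \in C$'' side of Lemma \ref{reversedef} to obtain an element of $B$ \emph{below} $x$, rather than the ``for all $b \in B$'' side, which would only give an element of $C$ above $x$ and would not close the sandwich. Apart from that, every step is a direct appeal to the poset axioms (transitivity, anti-symmetry) and to $B$ being an antichain.
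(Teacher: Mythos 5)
Your proof is correct and rests on essentially the same idea as the paper's: squeeze $x \in A \cap C$ between elements of $B$ (one above via $A \leq_{{\cal MA}(P)} B$, one below via the symmetric reading of $B \leq_{{\cal MA}(P)} C$ from Lemma~\ref{reversedef}) and use the antichain property of $B$ together with antisymmetry to conclude $x \in B$. The paper packages the same squeeze as a proof by contradiction through Lemma~\ref{powerMA} after first dispatching the cases $A=B$ and $B=C$, while your direct version handles all cases uniformly; both arguments are sound.
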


\begin{proof}
	In the case where $A=B$ we have that $\intpreq{A}{B}{C}$ and in the case $B=C$ we have that $\intpreq{A}{B}{C}$. So we can assume that $A \neq B$, $B \neq C$ and as a consequence that $A <_{{\cal MA}(P)} B <_{{\cal MA}(P)} C$. Suppose for sake of contradiction that $\nintpreq{A}{B}{C}$. So there exists $x \in (A\cap C)$-$B$. Since $x$ does not belong to $B$ there must exist some $y \in B$ comparable to $x$. Using lemma \ref{powerMA} on $A$, $B$ we establish that $x \leq_P y$. Again using lemma \ref{powerMA} on $B$, $C$ we get that $y \leq_P x$. Since $x \leq_P y$ and $y\leq_P x$,  necessarily  $y=x$. Therefore $x$ belongs to $B$ which  contradicts    $x \in (A\cap C)$-$B$.
\end{proof}

The covering relation between maximal antichains has also been characterized.

\begin{lemma}\label{defMA}\cite{Jakubik91}
	Let $A,$ $B$ be two different maximal antichains of a partial order $P$. 
	
	$A \prec_{{\cal MA}(P)} B$ if and only if  $\forall x \in A$-$B$ and $\forall y \in B$-$A$,  $x \prec_{P} y$.
		
\end{lemma}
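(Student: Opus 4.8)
The statement is an equivalence, so the plan is to prove the two implications separately: the backward one I would get quickly from Proposition~\ref{consecutivity}, and the forward one from an explicit ``sandwiching'' construction. For ``$\Leftarrow$'', I would assume $A\ne B$ and that every $x\in A\diff B$ is covered in $P$ by every $y\in B\diff A$. Since $A,B$ are distinct maximal antichains neither contains the other, so both $A\diff B$ and $B\diff A$ are nonempty; fixing any $y\in B\diff A$, the hypothesis gives $a<_P y$ for each $a\in A\diff B$, which with $a\leq_P a$ for $a\in A\cap B$ yields $A<_{{\cal MA}(P)}B$ straight from the definition. Then I would rule out a maximal antichain $C$ with $A<_{{\cal MA}(P)}C<_{{\cal MA}(P)}B$: maximality of $C$ forces $C\not\subseteq A$ and $C\not\subseteq B$, and Proposition~\ref{consecutivity} gives $A\cap B\subseteq C$. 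If $C$ contains some $z\notin A\cup B$, Lemma~\ref{variation1} produces $a\in A$ and $b\in B$ with $a<_P z<_P b$; neither $a$ nor $b$ may lie in $A\cap B\subseteq C$ (that antichain already contains $z$, which lies strictly above $a$ and below $b$), so $a\in A\diff B$, $b\in B\diff A$, and $a<_P z<_P b$ contradicts $a\prec_P b$. Otherwise $C\subseteq A\cup B$, and choosing $w\in C\diff B\subseteq A\diff B$ and $z\in C\diff A\subseteq B\diff A$ (both nonempty by the maximality remark) would give incomparable members $w,z$ of the antichain $C$ that the hypothesis forces to satisfy $w\prec_P z$, a contradiction. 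So no such $C$ exists and $A\prec_{{\cal MA}(P)}B$.

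For ``$\Rightarrow$'', I would assume $A\prec_{{\cal MA}(P)}B$, fix $x\in A\diff B$ and $y\in B\diff A$, and argue by contradiction that $x\prec_P y$. By Lemma~\ref{powerMA} we have $x\leq_P y$ or $x\parallel_P y$, so if $x\not\prec_P y$ then either (i) $x\parallel_P y$, or (ii) $x<_P z<_P y$ for some $z$, and in each case I would build a maximal antichain $C$ with $A<_{{\cal MA}(P)}C<_{{\cal MA}(P)}B$, contradicting the covering. The common device is an elementary fact I would first record: writing $J:=\{w\in X\mid w\leq_P b\text{ for some }b\in B\}$ for the principal down-set generated by $B$, every maximal antichain of the induced subposet $P[J]$ is also a maximal antichain of $P$ (an element $v$ outside $J$ is strictly above some element of $B\subseteq J$, and chasing maximality in $P[J]$ this puts $v$ above some element of the antichain). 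In case~(i) I would set $A':=\{a\in A\mid a\parallel_P y\}$, note $A'\cup\{y\}$ is an antichain lying in $J$ and containing $x$, and extend it to a maximal antichain $C$ of $P[J]$, hence of $P$. Then $A<_{{\cal MA}(P)}C$ holds because each $a\in A$ either lies in $A'\subseteq C$ or, being comparable to $y$, satisfies $a<_P y\in C$ by Lemma~\ref{powerMA}, while $y\in C\diff A$; and $C<_{{\cal MA}(P)}B$ holds because $C\subseteq J$ makes each element of $C$ lie below an element of $B$, while $x\in C\diff B$ gives $C\ne B$. In case~(ii) I would note $z\notin A$ and $z\notin B$ (else $x<_P z$, resp.\ $z<_P y$, contradicts that $A$, resp.\ $B$, is an antichain), set $A'':=\{a\in A\mid a\not<_P z\}$, and run the same argument with the antichain $A''\cup\{z\}\subseteq J$: extending it inside $P[J]$ gives $C$ with $A<_{{\cal MA}(P)}C$ (each $a\in A$ lies in $A''\subseteq C$ or satisfies $a<_P z\in C$) and $C<_{{\cal MA}(P)}B$ (since $C\subseteq J$ and $z\in C\diff B$). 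Either way this contradicts $A\prec_{{\cal MA}(P)}B$, giving $x\prec_P y$.

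The step I expect to demand the most care is ensuring, in the ``$\Rightarrow$'' direction, that the intermediate antichain $C$ is \emph{strictly} between $A$ and $B$. Getting $C\ne A$ will be automatic since $C$ contains $y$ (resp.\ $z$), which is outside $A$; but getting $C\ne B$ is precisely why an element outside $B$ --- namely $x$ in case (i), $z$ in case (ii) --- must be forced into $C$, and keeping $C\leq_{{\cal MA}(P)}B$ is precisely why the seed antichain must be extended only \emph{within} $P[J]$: an uncontrolled extension in $P$ could add elements not below any element of $B$. The remaining obligations are routine --- the auxiliary fact about maximal antichains of $P[J]$, and the checks that $A'\cup\{y\}$ (using $x\parallel_P y$ and $A\leq_{{\cal MA}(P)}B$) and $A''\cup\{z\}$ (using $x<_P z<_P y$) are antichains contained in $J$.
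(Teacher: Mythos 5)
Your proof is correct, and while it follows the same broad strategy as the paper's (both directions argue by contradiction, and the forward direction manufactures a maximal antichain strictly between $A$ and $B$ starting from a seed consisting of $y$, respectively $z$, together with the elements of $A$ incomparable to it), the mechanism controlling the completion is genuinely different. The paper completes its seed ``by adding elements of $A\cup B$'' in the incomparable case, and by an ad hoc exchange-and-delete construction in the non-covering comparable case, asserting without much detail that the result is a maximal antichain lying between $A$ and $B$. Your device --- extend the seed only inside the subposet induced by the principal down-set $J$ of $B$, after checking that maximal antichains of that subposet are maximal antichains of $P$ and automatically satisfy $C\leq_{{\cal MA}(P)}B$ --- handles both delicate points (the completion must not drift above $B$, and must be maximal in all of $P$) explicitly and uniformly across the two cases, which is precisely where the paper is terse. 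The converse directions also differ in organization: the paper picks $y\in B\diff A'$ and chases a chain $x<_P z<_P y$ down through the hypothetical intermediate antichain, whereas you case-split on whether the intermediate antichain $C$ contains an element outside $A\cup B$ (producing $a<_P z<_P b$ with $a\in A\diff B$, $b\in B\diff A$) or is contained in $A\cup B$ (producing two incomparable elements of $C$ that the hypothesis forces to be comparable); both routes contradict the covering hypothesis. A minor note: you invoke Lemma~\ref{variation1} only for an element $z\notin A\cup B$, which is the regime where its conclusion is safe (and also follows directly from Lemma~\ref{powerMA} together with the maximality of $A$), so no gap arises there.
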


\begin{proof}
	Suppose that $A \prec_{{\cal MA}(P)} B$ and let $y \in B$-$A$. Further suppose that $y$ does not cover some $x \in A$-$B$. 
Note that either $x \leq_P y$ or $y \parallel_P x$.  In the first case, 
$x \leq_P y$, and thus there exists $z$ such that $x \leq_P z \leq_P y$. But then consider $A'$ the set obtained from $A$ by: exchanging $x$ and $z$; deleting all vertices comparable with $z$ and adding all successors of $x$ incomparable with $z$.  $A'$ is a maximal antichain by construction and we have:
$A\leq_{{\cal MA}(P)} A' \leq_{{\cal MA}(P)} B$, a contradiction.

In the second case, $y \parallel_P x$, where $x \in A$-$B$. Let us consider $A' = \{z \in A $-$ y ~|~ z \prec_P y\} $+$y$.
We complete $A'$ as a maximal antichain $[A']$ by adding elements of $A \cup B$. Since $x, y \in [A']$, $[A'] \neq A$ and $[A'] \neq B$.
Then we have: $A<_{{\cal MA}(P)} [A'] <_{{\cal MA}(P)} B$, a contradiction to $A \prec_{{\cal MA}(P)} B$.

	Conversely, clearly we have $A\leq_{{\cal MA}(P)} B$. Let us now prove that $B$ covers $A$; if not, there exists some maximal antichain $A'$ such that $A<_{{\cal MA}(P)} A'  <_{{\cal MA}(P)} B$. Let $y \in B$-$A'$; there exists $z \in A'$ with $z <_P y$.	
Using Proposition \ref{consecutivity} we know that
$\intpreq{A}{A'}{B}$; then necessarily $z \notin A$. Thus there exists $x \in A$-$B$ with $x <_P z$, but then $y$ is not covered by $x$ a contradiction.
\end{proof}

We now introduce some new terminology in order to define the infimum and supremum on the lattice ${\cal MA}(P)$. 

\begin{definition}For a partial order $P=(X,\leq_P)$, $S \subseteq X$, $Max(S)=\{v \in S ~|~ \forall u \in S,~u \leq_P v$ or $u \parallel_P v\}$. $Max(S)$  is the set of maximal elements of the  partial order  $P(S)$ induced by $S$. In the same way, $Min(S)=\{v \in S ~|~ \forall u \in S,~v \leq_P u$ or $u \parallel_P v\}$. $Min(S)$ is the set of minimal elements of  $P(S)$. And $Inc(S)=\{x \in X-S ~|~ \forall y \in S, ~x \parallel_P y\}$ is the set of incomparable elements to $S$. 
\end{definition}

\begin{definition}
For two antichains $A,~B$ of a partial order $P=(X,\leq_P)$, let $S_{min}(A,B) = \{ x \in A$-$B ~|~ \exists y \in B$-$A$ with $x <_P y\}$ and  $S_{max}(A,B) = \{ x \in A$-$B ~|~ \exists y \in B$-$A$ with $y <_P x\}$. 
\end{definition}

Since $A, B$ are antichains, we necessarily have:  $S_{min}(A,B) \cap S_{max}(A,B) =\emptyset$.

\begin{figure}[ht] 
\centering
  \includegraphics[scale=.6]{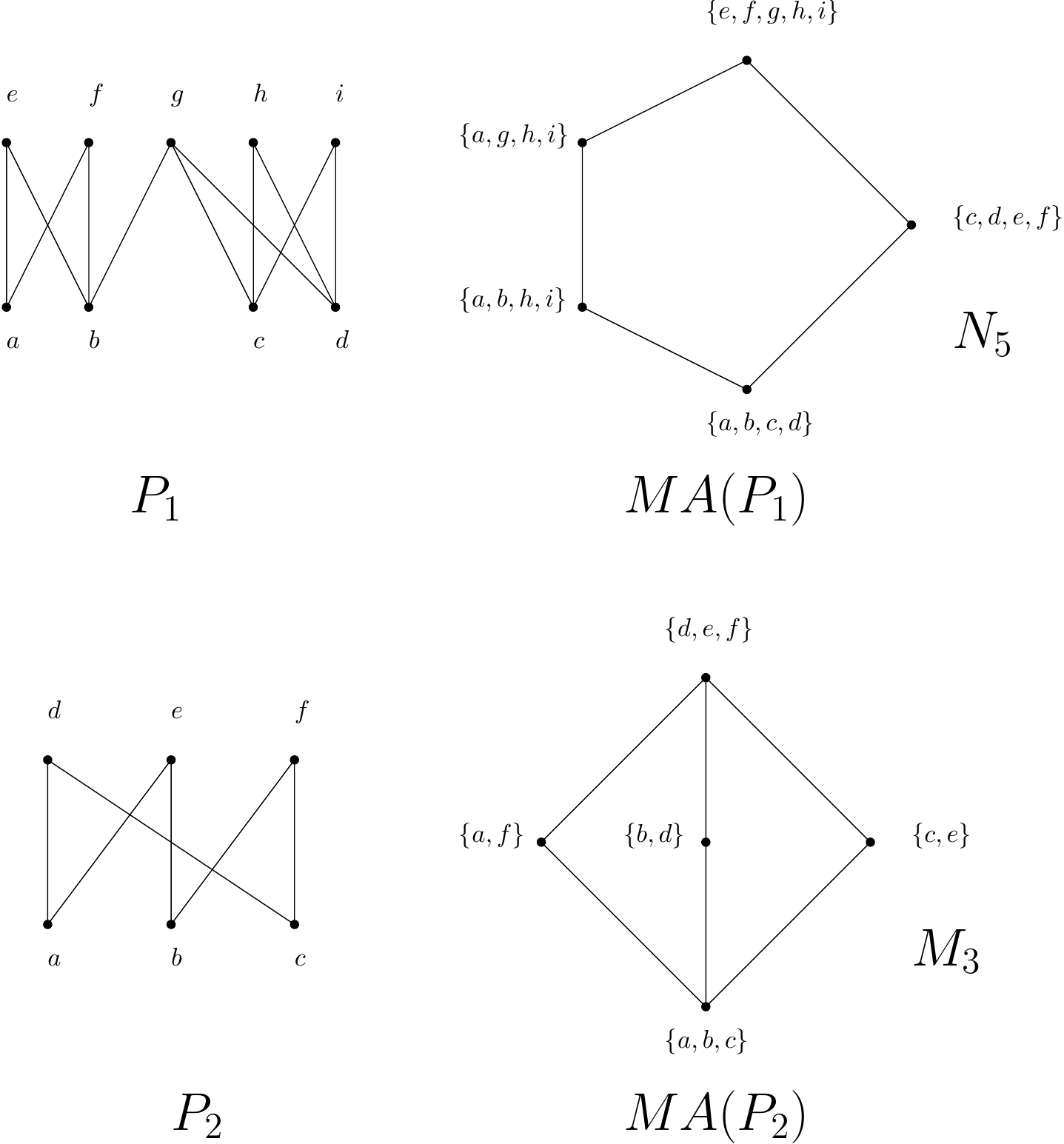}
\caption{ \small{Two orders whose maximal antichain lattices are respectively $N_{5}$ and $M_{3}$, the smallest non distributive lattices.\label{fig:distri}}}
\end{figure}  

As an example of these definitions consider the partial order $P_1$ of Figure \ref{fig:distri}. If we take the two maximal antichains
$A=\{a, g, h, i\}$ and $B=\{c, d, e, f\}$, we see that  $A \cap B =\emptyset$, $S_{min}(A,B)=\{a\}$, $S_{max}(A,B)=\{g, h, i\}$, $S_{min}(B,A)=\{c, d\}$, $S_{max}(B,A)=\{e, f\}$.  From these definitions we see:

\begin{proposition}\label{decompA}Let $A,~B$ be two maximal antichains of a partial order $P$, then $(A\cap B), S_{min}(A,B)$ and $S_{max}(A,B)$ partition $A$.
\end{proposition}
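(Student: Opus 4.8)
The plan is to check the two defining conditions of a partition directly: pairwise disjointness of the three sets, and that their union exhausts $A$.

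Disjointness is essentially immediate. Both $S_{min}(A,B)$ and $S_{max}(A,B)$ are, by definition, subsets of $A-B$, so each is disjoint from $A\cap B$. That $S_{min}(A,B)\cap S_{max}(A,B)=\emptyset$ has already been observed just above the statement: an element lying in both would sit strictly below some $y\in B-A$ and strictly above some $y'\in B-A$, giving $y'<_P y$ and contradicting that $B$ is an antichain.

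The substantive point is that $(A\cap B)\cup S_{min}(A,B)\cup S_{max}(A,B)$ covers $A$. Here is how I would argue it: pick an arbitrary $x\in A$. If $x\in B$, then $x\in A\cap B$ and there is nothing to do. Otherwise $x\in A-B$, and since $B$ is a \emph{maximal} antichain, $B\cup\{x\}$ is not an antichain, so $x$ is comparable in $P$ to some $y\in B$ with $y\neq x$. This $y$ cannot lie in $A$, since $A$ is an antichain and would then contain the comparable pair $x,y$; hence $y\in B-A$. If $x<_P y$ then $x\in S_{min}(A,B)$, and if $y<_P x$ then $x\in S_{max}(A,B)$. In either case $x$ lies in one of the three sets, so $A=(A\cap B)\cup S_{min}(A,B)\cup S_{max}(A,B)$.

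I do not expect a genuine obstacle here; the only thing to be careful about is invoking maximality of $B$ (rather than of $A$) at exactly the right moment to extract the comparability of $x$ into $B$, and then using the antichain property of $A$ to guarantee that this comparability lands in $B-A$ rather than in $A\cap B$.
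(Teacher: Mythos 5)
Your proof is correct and follows essentially the same route as the paper's: use maximality of $B$ to obtain an element of $B$ comparable to any $x\in A-B$, then split according to the direction of the comparability. You are in fact slightly more careful than the paper in explicitly verifying that the comparable element lies in $B-A$ (as the definitions of $S_{min}$ and $S_{max}$ require), which is a welcome refinement rather than a divergence.
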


\begin{proof}
Let us consider a vertex $x$ of $A$. We have two cases: either $x \in B$ or $x \notin B$. In the first case $x \in A \cap B$. In the second case, since $B$ is a maximal antichain and $x \notin B$, there must exist $y \in B$ such that $y$ is comparable to $x$. Since $y \in B$ and $x \notin B$, we deduce that $y\neq x$. Therefore we have two cases: either $x <_P y$ or $y <_P x$. In the first case, $x \in S_{min}(A,B)$ and in the second case $x \in S_{max}(A,B)$. Thus $A=(A\cap B) \cup S_{min}(A,B) \cup S_{max}(A,B)$ and by  definition these 3 sets do not intersect.
\end{proof}

It should be noticed that in the well-known distributive lattice of antichains ${\cal A}(P)$, the infimum 
$A \wedge_{{\cal A}(P)} B =  (A\cap B) \cup S_{min}(A,B) \cup S_{min}(B,A)$ and the supremum $A \vee_{{\cal A}(P)}  B =  (A\cap B) \cup S_{max}(A,B) \cup S_{max}(B,A)$.

In this definition $A \wedge_{{\cal A}(P)}  B$ and  $A \vee_{{\cal A}(P)}  B$ are clearly antichains, but they are not maximal even if $A, B$ are maximal.  For example in $P_1$ in Figure \ref{fig:distri} we have $\{a, g, h, i\} \vee_{{\cal A}(P)} \{c, d, e, f\}=\{e, f, h, i\}$ which is not maximal.

Therefore we can now define the infimum and supremum and ${\cal MA}(P)$ as follows:

\begin{definition}
For two maximal antichains $A,~B$ of a partial order $P$, we define the binary operators $\wedge_{{\cal MA}(P)}, \vee_{{\cal MA}(P)}$:

infimum:  $A \wedge_{{\cal MA}(P)} B =  (A\cap B) \cup S_{min}(A,B) \cup S_{min}(B,A) \cup Max (Inc( (A\cap B) \cup S_{min}(A,B) \cup S_{min}(B,A) ))= A \wedge_{{\cal A}(P)} B \cup Max (Inc(A \wedge_{{\cal A}(P)} B))$.

 supremum:  $A \vee_{{\cal MA}(P)}  B =  (A\cap B) \cup S_{max}(A,B) \cup S_{max}(B,A) \cup Min (Inc( (A\cap B) \cup S_{max}(A,B) \cup S_{max}(B,A) ))= A \vee_{{\cal A}(P)} B \cup Max (Inc(A \vee_{{\cal A}(P)} B))$.
\end{definition}

Returning to the partial order $P_1$ of Figure \ref{fig:distri} where $A=\{a, g, h, i\}$ and $B=\{c, d, e, f\}$ we see that 
$Max (Inc( (A\cap B) \cup S_{min}(A,B) \cup S_{min}(B,A) ))=\{b\}$.
Therefore $\{a, g, h, i\} \wedge_{{\cal MA}(P)} \{c, d, e, f\}=\{a, b, c, d\}$. 
Similarly $\{a, g, h, i\} \vee_{{\cal MA}(P)} \{c, d, e, f\}=\{e, f, g, h, i\}$, whereas in  ${\cal A}(P)$ we have:
$\{a, g, h, i\} \vee_{{\cal A}(P)} \{c, d, e, f\}=\{e, f, h, i\} \subsetneq \{a, g, h, i\} \vee_{{\cal MA}(P)} \{c, d, e, f\}$.


Since the above supremum and  infimum  definitions are differently expressed compared to those of  \cite{Berhendt}, for completeness 
we  give a proof of  the following theorem due to Berhendt.

\begin{theorem}\label{propinf}\cite{Berhendt} Let $P$ be a partial order.
	${\cal MA}(P)=({\cal MA}(P),\wedge_{{\cal MA}(P)},\vee_{{\cal MA}(P)})$ is a lattice.
\end{theorem}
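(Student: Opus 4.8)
The plan is to verify directly that $\wedge_{{\cal MA}(P)}$ and $\vee_{{\cal MA}(P)}$ produce, respectively, the greatest lower bound and the least upper bound of any pair of maximal antichains with respect to the order $\leq_{{\cal MA}(P)}$. I would first reduce to the infimum case by duality: for the poset $P^-$ the maximal antichains are literally the same sets, $\leq_{{\cal MA}(P^-)}$ is the reverse of $\leq_{{\cal MA}(P)}$ (by the symmetric form of Lemma~\ref{reversedef}), $Max$ and $Min$ are exchanged, and $S_{min}$ and $S_{max}$ are exchanged, so that $A \wedge_{{\cal MA}(P^-)} B = A \vee_{{\cal MA}(P)} B$. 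Hence once infima in ${\cal MA}(P)$ (for arbitrary $P$) are understood, suprema follow. So fix maximal antichains $A,B$, write $W = A \wedge_{{\cal A}(P)} B = (A\cap B)\cup S_{min}(A,B)\cup S_{min}(B,A)$, which is an antichain by the classical description of ${\cal A}(P)$, and $M = A \wedge_{{\cal MA}(P)} B = W \cup Max(Inc(W))$.

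Step one is a general observation: for \emph{any} antichain $X$ of $P$, the set $X \cup Max(Inc(X))$ is a maximal antichain. It is an antichain because $X$ is one, because two comparable maximal elements of the induced poset on $Inc(X)$ would contradict maximality, and because every element of $Inc(X)$ is by definition incomparable with every element of $X$. For maximality, suppose $z \notin X \cup Max(Inc(X))$ is incomparable with all of it; then $z \in Inc(X)$, and picking a $\leq_P$-maximal element $v$ of the finite nonempty set $\{u \in Inc(X) \mid z \leq_P u\}$ one checks $v \in Max(Inc(X))$, so $z <_P v$ contradicts $z \parallel_P v$. Applying this to $X = W$ shows $M \in {\cal MA}(P)$ (and dually the supremum expression is a maximal antichain).

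Step two is to show $M$ is a lower bound, i.e.\ $M \leq_{{\cal MA}(P)} A$ and $M \leq_{{\cal MA}(P)} B$; by Lemma~\ref{reversedef} it suffices to find for each $x \in M$ some $a \in A$ with $x \leq_P a$. If $x \in (A\cap B)\cup S_{min}(A,B)$ then $x \in A$; if $x \in S_{min}(B,A)$ then by definition there is $a \in A-B$ with $x <_P a$. The crucial case is $x \in Max(Inc(W))$: if $x \in A$ we are done, so assume $x \notin A$; by maximality of $A$ some $a \in A$ is comparable with $x$, and we must rule out $a <_P x$. Suppose $a <_P x$; then $a \notin W$ since $x$ is incomparable with all of $W$, so $a \in S_{max}(A,B)$ by Proposition~\ref{decompA}, yielding $b \in B-A$ with $b <_P a <_P x$; the same reasoning gives $b \notin W$, hence $b \in S_{max}(B,A)$ by Proposition~\ref{decompA} applied to $B$, hence $a' \in A-B$ with $a' <_P b <_P a$ — contradicting that $A$ is an antichain. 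So $x <_P a$, and $M \leq_{{\cal MA}(P)} B$ follows symmetrically.

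Step three is that $M$ is the \emph{greatest} lower bound. Let $C$ be a maximal antichain with $C \leq_{{\cal MA}(P)} A$ and $C \leq_{{\cal MA}(P)} B$; since $\leq_{{\cal MA}(P)}$ is the restriction of $\leq_{{\cal A}(P)}$ and ${\cal A}(P)$ is a lattice with meet $W$, we get $C \leq_{{\cal A}(P)} W$, i.e.\ every $c \in C$ lies below some $w \in W$; as $W \subseteq M$ this says $C \leq_{{\cal A}(P)} M$, equivalently $C \leq_{{\cal MA}(P)} M$. This establishes $M = A \wedge_{{\cal MA}(P)} B$ as the infimum, and dualizing through $P^-$ gives the supremum, so ${\cal MA}(P)$ is a lattice. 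The main obstacle is the chain-tracing contradiction in step two: controlling an element of $Max(Inc(W))$ forces one to descend alternately through $A-B$ and $B-A$ until the antichain property of $A$ is violated; everything else is bookkeeping built on the known structure of ${\cal A}(P)$, Lemma~\ref{reversedef}, and Proposition~\ref{decompA}.
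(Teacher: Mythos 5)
Your duality reduction, the verification that $X \cup Max(Inc(X))$ is a maximal antichain, and Step two (which is essentially the paper's own argument for showing $A \wedge_{{\cal MA}(P)} B$ is a lower bound, via a descent alternating between $A-B$ and $B-A$) are all correct. The problem is Step three. There you invoke the assertion that $W=(A\cap B)\cup S_{min}(A,B)\cup S_{min}(B,A)$ is the meet of $A$ and $B$ in ${\cal A}(P)$ to conclude $C \leq_{{\cal A}(P)} W$. That assertion is false under the paper's definition of $\leq_{{\cal A}(P)}$ (the paper does state it in passing, but it cannot bear this weight): $W$ is a lower bound of $\{A,B\}$ in ${\cal A}(P)$, not the greatest one. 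Concretely, take $P$ on $\{a,b,x,y,z\}$ with the only strict relations $y<_P a$, $x<_P b$, $z<_P a$, $z<_P b$, and set $A=\{a,x\}$, $B=\{b,y\}$, $C=\{x,y,z\}$; all three are maximal antichains with $C\leq_{{\cal MA}(P)}A$ and $C\leq_{{\cal MA}(P)}B$, yet $W=\{x,y\}$ and $z\in C$ lies below nothing in $W$, so $C\not\leq_{{\cal A}(P)}W$. (The true meet of $A$ and $B$ in ${\cal A}(P)$ here is $\{x,y,z\}$.) So the sentence ``every $c\in C$ lies below some $w\in W$'' is simply wrong, and your argument that $M$ is the \emph{greatest} lower bound collapses at exactly the point where the real work is needed: the elements of $C$ that sit below no element of $W$ must be caught by $Max(Inc(W))$, and nothing in your proof addresses them.

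The step can be repaired without changing your overall architecture. Suppose some $c\in C$ is below no element of the maximal antichain $M=W\cup Max(Inc(W))$; since $M$ is maximal and $c\notin M$, there is $m\in M$ with $m<_P c$. If $m\in W\subseteq A\cup B$ this contradicts Lemma~\ref{powerMA} applied to $C\leq_{{\cal MA}(P)}A$ (or to $C\leq_{{\cal MA}(P)}B$), which forbids any element of $A$ or $B$ from lying strictly below an element of $C$. Hence $m\in Max(Inc(W))$; then $c\notin Inc(W)$ (else $m<_P c$ would contradict the maximality of $m$ inside $Inc(W)$), so $c$ is comparable to some $w\in W$, and $w<_P c$ is again excluded by Lemma~\ref{powerMA}, so $c\leq_P w\in M$, a contradiction. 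This is a genuinely different (and, for what it is worth, more robust) closing argument than the paper's, which instead claims that every $a\in S_{max}(A,B)$ lies below some element of $Max(Inc(W))$ — a claim that the same five-element example also refutes. In short: Steps one and two stand, Step three needs the replacement above.
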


\begin{proof}
Let us first consider $A\wedge_{{\cal MA}(P)} B$.
Clearly elements of $S_{min}(A, B)$ are incomparable with elements of $S_{min}(B, A)$. Therefore $(A\cap B) \cup S_{min}(A,B) \cup S_{min}(B,A)$ is an antichain of $P$. Adding to it $Max (Inc( (A\cap B) \cup S_{min}(A,B) \cup S_{min}(B,A) ))$ completes it as a maximal antichain.

Since $A, B$ are maximal antichains, for every $x \in Inc( (A\cap B) \cup S_{min}(A,B) \cup S_{min}(B,A) )$ there exists $t \in  S_{max}(A,B)$ and $z \in  S_{max}(B,A)$ both comparable with $x$. If $t \leq_P x$, since there exists $y \in B$ such that $y \leq_P t$, it would imply 
by transitivity: $y \leq_p x$ which is impossible since $y \in S_{min}(B,A)$. Therefore $x \leq_P t$ and similarly one can obtain 
$x \leq_P z$.

Therefore we have:

$A\wedge_{{\cal MA}(P)} B \leq_{{\cal MA}(P)} A$   and $ A\wedge_{{\cal MA}(P)} B \leq_{{\cal MA}(P)} B$.

Now let us consider a maximal antichain $C$, such that:  $C \leq_{{\cal MA}(P)} A$ and $C \leq_{{\cal MA}(P)} B$.

But for every $c \in C$, there exists some $a \in A$ with  $c \leq_P a$.  If $a$ does not belong to $(A\cap B) \cup S_{min}(A,B)$ then
$a \in S_{max}(A,B)$ and so there exists $z \in Max (Inc( (A\cap B) \cup S_{min}(A,B) \cup S_{min}(B,A) ))$, with $c\leq_P a\leq_P z$.  Thus
$C \leq_{{\cal MA}(P)}A\wedge_{{\cal MA}(P)} A$. Symmetrically one can obtain $C \leq_{{\cal MA}(P)}A\wedge_{{\cal MA}(P)} B$.

Therefore this binary relation $\wedge_{{\cal MA}(P)}$ defined on maximal antichains behaves as an infimum relation on maximal antichains.

The proof is similar for $\vee_{{\cal MA}(P)}$.

\end{proof}

\begin{proposition}\label{supinf}
	Let $A$, $B$ be two maximal antichains of a partial order $P$.

	Then $(A\cup B)  \subseteq  (A \vee_{{\cal MA}(P)}  B) \cup  (A \wedge_{{\cal MA}(P)}  B)$.
\end{proposition}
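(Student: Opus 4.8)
The plan is to unwind the definitions of $\wedge_{{\cal MA}(P)}$ and $\vee_{{\cal MA}(P)}$ and combine them with the partition statement of Proposition \ref{decompA}. First I would invoke Proposition \ref{decompA} in both orientations: it gives that $A$ is the disjoint union $(A\cap B)\cup S_{min}(A,B)\cup S_{max}(A,B)$, and, applying it with the roles of $A$ and $B$ interchanged, that $B$ is the disjoint union $(A\cap B)\cup S_{min}(B,A)\cup S_{max}(B,A)$. Consequently
\[
A\cup B = (A\cap B)\cup S_{min}(A,B)\cup S_{max}(A,B)\cup S_{min}(B,A)\cup S_{max}(B,A).
\]

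Next I would read off from the defining formula for the infimum that $(A\cap B)\cup S_{min}(A,B)\cup S_{min}(B,A)$ is, verbatim, part of the union defining $A\wedge_{{\cal MA}(P)}B$, so $(A\cap B)\cup S_{min}(A,B)\cup S_{min}(B,A)\subseteq A\wedge_{{\cal MA}(P)}B$ (the additional term $Max(Inc(\cdots))$ only enlarges the set). Symmetrically, from the defining formula for the supremum, $(A\cap B)\cup S_{max}(A,B)\cup S_{max}(B,A)\subseteq A\vee_{{\cal MA}(P)}B$.

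Taking the union of these two inclusions yields
\[
(A\cap B)\cup S_{min}(A,B)\cup S_{min}(B,A)\cup S_{max}(A,B)\cup S_{max}(B,A)\ \subseteq\ (A\vee_{{\cal MA}(P)}B)\cup(A\wedge_{{\cal MA}(P)}B),
\]
and by the first displayed equality the left-hand side is exactly $A\cup B$, which is the claim. There is no genuine obstacle here: the argument is purely set-theoretic, and the only points requiring a moment of care are that Proposition \ref{decompA} must be used for both $A$ and $B$, and that the ``maximality-completion'' terms $Max(Inc(\cdots))$ and $Min(Inc(\cdots))$ can only add elements to the meet and join, so they never endanger the inclusion.
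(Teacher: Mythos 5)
Your argument is correct and is essentially identical to the paper's proof: both decompose $A$ and $B$ via Proposition \ref{decompA}, observe that the min-parts sit inside $A \wedge_{{\cal MA}(P)} B$ and the max-parts inside $A \vee_{{\cal MA}(P)} B$ by definition, and take the union. No issues.
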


\begin{proof}
		Using the definition of $A \wedge_{{\cal MA}(P)}  B$, we get that $(A \cap B) \cup S_{min}(A,B) \cup S_{min}(B,A) \subseteq (A \wedge_{{\cal MA}(P)} B)$ and symmetrically with  $A \vee_{{\cal MA}(P)}  B$ we get that $(A \cap B) \cup S_{max}(A,B) \cup S_{max}(B,A) \subseteq (A \vee_{{\cal MA}(P)} B)$.
		
		By proposition \ref{decompA}, we know that $A = (A \cap B) \cup S_{min}(A,B) \cup S_{max}(A,B)$ and $B = (A \cap B) \cup S_{min}(B,A) \cup S_{max}(B,A)$, 
thereby showing $(A\cup B)   \subseteq  A \vee_{{\cal MA}(P)}  B  \cup (A \wedge_{{\cal MA}(P)}  B)$.
\end{proof}

\begin{corollary}\label{updown}

Let $A$, $B$ be two maximal antichains of a partial order $P$ where $x \in A - B$.  Then we have two mutually exclusive cases:
either $x \in (A \wedge_{{\cal MA}(P)} B)$ or $x \in (A \vee_{{\cal MA}(P)} B)$.
\end{corollary}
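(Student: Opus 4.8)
The plan is to prove the two assertions of the corollary separately: \emph{(existence)} that $x$ belongs to at least one of $A \wedge_{{\cal MA}(P)} B$ and $A \vee_{{\cal MA}(P)} B$, and \emph{(exclusivity)} that it cannot belong to both. The first part is essentially free: since $x \in A-B \subseteq A \subseteq A \cup B$, Proposition \ref{supinf} gives $x \in (A \vee_{{\cal MA}(P)} B) \cup (A \wedge_{{\cal MA}(P)} B)$ directly. So the content is in the exclusivity part.

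For exclusivity, I would start from Proposition \ref{decompA}: since $x \in A$ and $x \notin B$, exactly one of $x \in S_{min}(A,B)$ and $x \in S_{max}(A,B)$ holds, and these two sets are disjoint. Treat each case. Suppose $x \in S_{min}(A,B)$ and fix a witness $y \in B-A$ with $x <_P y$; by definition of $S_{max}(B,A)$ (with $x \in A-B$ serving as the witness), we have $y \in S_{max}(B,A)$. Now assume toward a contradiction that also $x \in A \vee_{{\cal MA}(P)} B = (A\cap B) \cup S_{max}(A,B) \cup S_{max}(B,A) \cup Min(Inc((A\cap B) \cup S_{max}(A,B) \cup S_{max}(B,A)))$. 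Since $x \notin A\cap B$, since $x \notin S_{max}(A,B)$ by disjointness with $S_{min}(A,B)$, and since $S_{max}(B,A) \subseteq B$ while $x \notin B$, the only remaining possibility is $x \in Min(Inc(\,(A\cap B) \cup S_{max}(A,B) \cup S_{max}(B,A)\,))$. But then $x$ is incomparable to every element of that set, in particular to $y \in S_{max}(B,A)$ — contradicting $x <_P y$.

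The case $x \in S_{max}(A,B)$ is dual: take $y \in B-A$ with $y <_P x$, note $y \in S_{min}(B,A)$, and assume $x \in A \wedge_{{\cal MA}(P)} B$. The identical bookkeeping (now $x$ is excluded from $A\cap B$, from $S_{min}(A,B)$ by disjointness, and from $S_{min}(B,A) \subseteq B$) forces $x \in Max(Inc(\,(A\cap B) \cup S_{min}(A,B) \cup S_{min}(B,A)\,))$, hence $x$ incomparable to $y \in S_{min}(B,A)$, contradicting $y <_P x$. Combining the two cases yields the claim.

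The only real obstacle is notational rather than mathematical: one must carefully track which of the four building blocks $A\cap B$, $S_{min}(\cdot,\cdot)$, $S_{max}(\cdot,\cdot)$, and the $Inc$-completion a given element can lie in, using the two elementary facts already recorded — namely that $S_{min}(A,B)$ and $S_{max}(A,B)$ are disjoint, and that $S_{min}(B,A)$ and $S_{max}(B,A)$ are subsets of $B$ whereas $x \notin B$. Once these exclusions are in place, the contradiction is immediate from the defining property of $Inc$. No further structural results about ${\cal MA}(P)$ are needed beyond Propositions \ref{decompA} and \ref{supinf}.
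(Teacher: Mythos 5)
Your proof is correct and follows essentially the same route as the paper: existence of at least one membership via Proposition \ref{supinf}, and exclusivity via the partition of $A-B$ into $S_{min}(A,B)$ and $S_{max}(A,B)$ from Proposition \ref{decompA}. The only difference is that you explicitly rule out the possibility that an element of $S_{min}(A,B)$ re-enters $A \vee_{{\cal MA}(P)} B$ through the $Min(Inc(\cdot))$ completion (and dually), a step the paper's own proof leaves implicit; your case analysis supplies that missing justification correctly.
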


\begin{proof}
Let $A$, $B$ be two maximal antichains of a partial order $P$ where $x \in A$-$B$.  Then either $x \in (A \wedge_{{\cal MA}(P)} B)$
or otherwise (using proposition \ref{supinf}), necessarily  $x \in (A \vee_{{\cal MA}(P)} B)$. 
From proposition \ref{decompA}, $A$-$B$ is partitioned into $S_{min}(A,B) \subseteq A \wedge_{{\cal MA}(P)}  B$ 
and $S_{max}(A,B) \subseteq A \vee_{{\cal MA}(P)}  B$. Therefore the two cases are mutually exclusive.

\end{proof}

\vspace{0.5cm}

There are two natural questions that arise concerning the lattice ${\cal MA}(P)$ for a given partial order $P$, namely:

\begin{itemize}
\item Does ${\cal MA}(P)$ have a particular lattice structure?
\item What is the maximum size of ${\cal MA}(P)$ given $n$, the number of elements in $P$?
\end{itemize}

The answer to the first question is ``no'' since Markowsky in \cite{Mark1} and \cite{Mark2} showed that any finite lattice is 
isomorphic to the maximal antichain lattice of a height one partial order. This result has been rediscovered by Berhendt in \cite{Berhendt}. This is summarized in the next theorem.

\begin{theorem}\cite{Berhendt,Mark1,Mark2}
	Any finite lattice is isomorphic to the lattice ${\cal MA}(P)$ of some finite partial order.
\end{theorem}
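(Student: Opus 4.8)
The plan is to use Markowsky's construction, which proves the theorem and moreover shows that one may take $P$ to be bipartite, hence of height one. Fix a finite lattice $L$. Recall that in a finite lattice an element $j$ is \emph{join-irreducible} if it is not the join of the elements strictly below it (equivalently, it has a unique lower cover), with \emph{meet-irreducible} defined dually; let $J$ and $M$ be the sets of join-irreducible and of meet-irreducible elements of $L$. I would use the two standard facts that every $x \in L$ equals $\bigvee\{j \in J \mid j \le_L x\}$ and also equals $\bigwedge\{m \in M \mid x \le_L m\}$. The poset I would build has ground set the disjoint union $J \sqcup M$, with the elements of $M$ as its minimal elements and the elements of $J$ as its maximal elements, the only strict relations being $m <_P j$ whenever $m \in M$, $j \in J$ and $j \not\le_L m$.

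I would then define $\phi\colon L \to {\cal MA}(P)$ by $\phi(x) = \{j \in J \mid j \le_L x\} \cup \{m \in M \mid x \le_L m\}$ and prove it is a lattice isomorphism in four steps. First, $\phi(x)$ is an antichain, because $j \le_L x \le_L m$ forbids $m <_P j$; and it is maximal, because if $j' \in J \setminus \phi(x)$ then $j' \not\le_L x$, so by the meet-generation fact some $m \in \phi(x)$ has $j' \not\le_L m$, i.e., $m <_P j'$, and dually for an $M$-element outside $\phi(x)$. Second, $\phi$ is injective, since $\phi(x)$ determines $\{j \in J \mid j \le_L x\}$ and hence $x = \bigvee$ of that set. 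Third, $\phi$ is onto: for a maximal antichain $Z$, write $Z = (Z \cap J) \cup (Z \cap M)$; maximality and the antichain property force $Z \cap M = \{m \in M \mid \bigvee(Z \cap J) \le_L m\}$ and, symmetrically, $Z \cap J = \{j \in J \mid j \le_L \bigwedge(Z \cap M)\}$, so setting $x := \bigvee(Z \cap J) = \bigwedge(Z \cap M)$ gives $Z = \phi(x)$. Fourth, $\phi$ is an order isomorphism: unwinding $\phi(x) \le_{{\cal MA}(P)} \phi(y)$ and using that the elements of $J$ are $P$-maximal, the relation reduces to $\{j \in J \mid j \le_L x\} \subseteq \{j \in J \mid j \le_L y\}$, that is, to $x \le_L y$ (the $M$-part of the condition is then automatic, once more by the join-generation fact).

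I expect the third step to be the main obstacle: it requires first identifying which subsets of $J \sqcup M$ arise as maximal antichains of a bipartite poset --- the two sides determine each other through the ``absence of a relation in $P$'' Galois connection --- and then recognising that this closure operator matches exactly the generation of $L$ by its irreducibles. The remaining steps are routine once the irreducible-generation facts are available. The whole argument can also be repackaged in the language of formal concept analysis, where $\phi$ identifies ${\cal MA}(P)$ with the concept lattice of the context $(J, M, \le_L)$, which is isomorphic to $L$ by the basic representation theorem for concept lattices; but the direct verification sketched here is self-contained and makes visible that a height-one $P$ already suffices.
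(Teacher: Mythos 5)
Your proof is correct. Note that the paper does not actually prove this theorem --- it is stated with citations to Markowsky and Behrendt only --- so there is no in-paper argument to compare against; what you have written is essentially Markowsky's original construction (equivalently, the representation of $L$ as the concept lattice of the context $(J,M,\le_L)$), and it establishes the stronger statement recorded in the surrounding text, namely that $P$ can be taken of height one. All four steps check out: the bipartite relation $m<_P j \iff j\not\le_L m$ makes $\phi(x)$ an antichain, and maximality follows exactly as you say from the generation of $x$ by the irreducibles above and below it; injectivity and the order-isomorphism step are as claimed (for the latter, the key point that an element of $J$ is $P$-maximal forces $b=j$ in the definition of $\le_{{\cal MA}(P)}$, and the $M$-part is then automatic).

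The only place I would ask you to add a line is the surjectivity step, where you write $x:=\bigvee(Z\cap J)=\bigwedge(Z\cap M)$ without justifying the equality. From the antichain property you get only $\bigvee(Z\cap J)\le_L\bigwedge(Z\cap M)$. The equality does hold, but it needs the identification you derived just before: maximality of $Z$ gives $Z\cap J=\{j\in J\mid j\le_L \bigwedge(Z\cap M)\}$, and then the join-generation fact applied to $v:=\bigwedge(Z\cap M)$ yields $v=\bigvee\{j\in J\mid j\le_L v\}=\bigvee(Z\cap J)$. (One should also dispose of the degenerate cases $Z\cap J=\emptyset$ or $Z\cap M=\emptyset$ via the conventions $\bigvee\emptyset=\hat 0$, $\bigwedge\emptyset=\hat 1$; they cause no trouble.) You correctly flag this step as the crux, so this is a presentational gap rather than a mathematical one.
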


In particular, as shown in Figure \ref{fig:distri} or using the previous theorem, ${\cal MA}(P)$ is  not always distributive, 
thereby showing that ${\cal MA}(P)$ is not a sublattice of ${\cal A}(P)$ as already noticed in \cite{Berhendt}. Jakub\'ik in \cite{Jakubik91} studied for which partial orders $P$, ${\cal MA}(P)$ is modular.

For the second question, the size of ${\cal MA}(P)$ can be exponential in the number of elements of $P$. If we consider a  poset $P$ made up of $k$ disjoint chains of length 2, ${\cal MA}(P)$ has exponential size since $P$ has $2^k$ maximal antichains. The example of Figure \ref{fig:expo} shows the $k=2$ case. Furthermore Reuter showed in \cite{Reuter91} that even  the computation of the maximum length of a directed path (i.e., the height) in ${\cal MA}(P)$ is an NP-hard problem  when only $P$ is given as the input.

\begin{figure}[ht]  \centering
  \includegraphics[scale=.7]{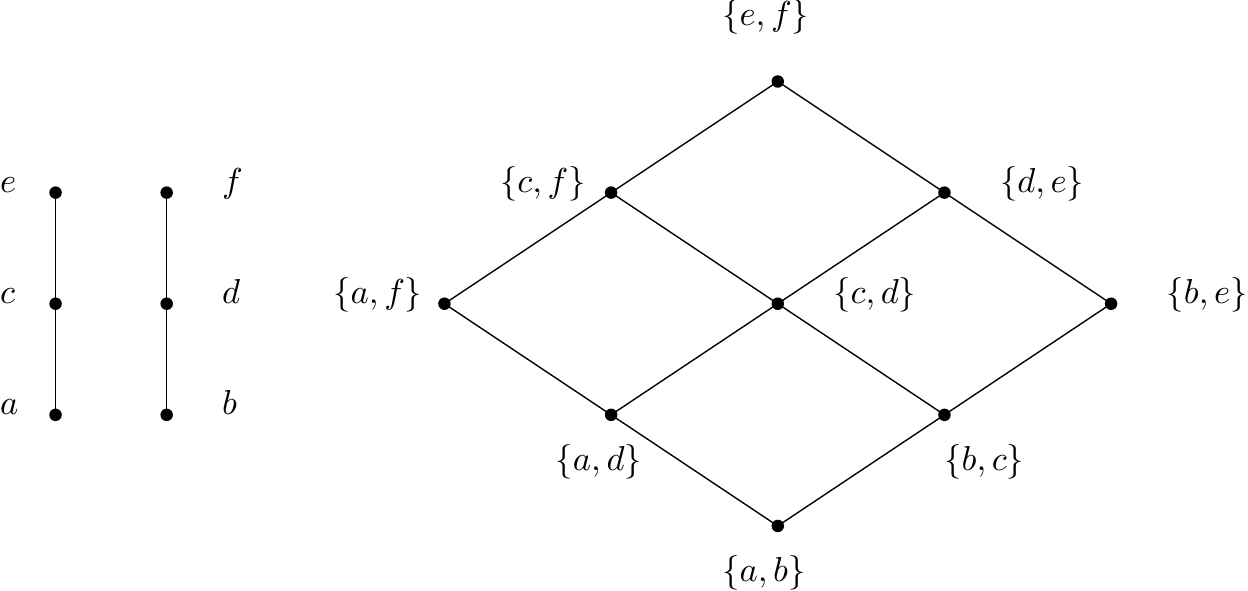}
\caption{ ${\cal MA}(P)$ for $k=2$.\label{fig:expo}}
\end{figure}

\subsection{Maximal antichain lattice and interval orders}

Following  \cite{GOL} interval graphs can be defined and characterized:

\begin{theorem}\label{th:GilHof} \cite{GH64, Lekkerkerker1962}

The following propositions are equivalent and characterize interval graphs.

(i) $G$  can be represented as  the intersection graph of a family of  intervals of the real line.

(ii) There exists a total  ordering $\tau$ of the vertices of $V$ such that $\forall x,y, z \in G$
with $x \leq_{\tau}y  \leq_{\tau} z$ and $xz \in E$ then $xy \in E$. 

(iii) The maximal cliques of $G$ can be linearly ordered such that for every vertex $x$ of $G$, the maximal cliques containing $x$ occur consecutively.


(iv) G  contains no chordless 4-cycle and is a cocomparability graph.

(v) G is chordal and has no asteroidal triple.
\end{theorem}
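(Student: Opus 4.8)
The statement to prove is Theorem~\ref{th:GilHof}, the classical characterization of interval graphs: equivalence of (i)--(v). Since this is a well-known theorem, the plan is to establish it via a cycle of implications, proving $(i)\Rightarrow(ii)\Rightarrow(iii)\Rightarrow(i)$ to close the loop on the ``interval ordering / consecutive cliques'' equivalences, and then relate (iv) and (v) to this loop, say by $(i)\Rightarrow(v)\Rightarrow(iv)\Rightarrow(i)$ or by inserting them as a side cycle. Throughout I would exploit the poset/ordering machinery already available in the excerpt: cocomp orderings are umbrella-free orderings, interval orderings are the stronger orderings with no $x<_\sigma y<_\sigma z$, $xy\notin E$, $xz\in E$, and an interval graph is exactly a cocomparability graph with no induced $C_4$.

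\medskip
\noindent\textbf{The implication $(i)\Rightarrow(ii)$.} Given an interval representation $\{I_v\}_{v\in V}$, order the vertices by left endpoint of $I_v$ (breaking ties arbitrarily), call this $\tau$. If $x\leq_\tau y\leq_\tau z$ and $xz\in E$, then $I_x\cap I_z\neq\emptyset$; since $\ell(I_x)\leq\ell(I_y)\leq\ell(I_z)$ and $I_x$ reaches as far right as $\ell(I_z)$, it also reaches $\ell(I_y)$, so $I_x\cap I_y\neq\emptyset$, i.e. $xy\in E$. That is the content of (ii).

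\medskip
\noindent\textbf{The implication $(ii)\Rightarrow(iii)$.} This is the main technical step. Given an interval ordering $\tau$, I would define for each vertex $v$ the set $R(v)=\{u : u\leq_\tau v,\ uv\in E\}\cup\{v\}$ and show each $R(v)$ is a clique (using the ``no umbrella of the strong type'' property: if $u,w\in R(v)$ with $u<_\tau w$, then $u<_\tau w\leq_\tau v$ and $uv\in E$ forces $uw\in E$). The maximal cliques are then among these $R(v)$'s; ordering the distinct maximal cliques by the $\tau$-position of their last vertex yields a linear order $<$ of the maximal cliques. The key claim is that for every vertex $x$, the maximal cliques containing $x$ are consecutive in this order: if $x\in M_1$ and $x\in M_3$ with $M_1<M_2<M_3$, one uses the interval-ordering property on carefully chosen representatives (the last vertex of $M_2$, together with $x$ and suitable vertices of $M_1,M_3$) to force $x$ to be adjacent to every vertex of $M_2$, hence $x\in M_2$ by maximality. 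This ``propagation of membership'' argument is where the umbrella-free/interval-ordering condition does the real work, and it is the step I expect to be the main obstacle — getting the representatives and the direction of the inequalities right requires care.

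\medskip
\noindent\textbf{The implication $(iii)\Rightarrow(i)$ and the role of (iv), (v).} From a consecutive ordering $M_1<\dots<M_k$ of the maximal cliques, assign to each vertex $v$ the interval $I_v=[\min\{i: v\in M_i\},\ \max\{i: v\in M_i\}]$; consecutiveness makes this a genuine interval, and $uv\in E$ iff some maximal clique contains both iff $I_u\cap I_v\neq\emptyset$ (one direction uses that any edge extends to a maximal clique; the other uses that every index in $I_u\cap I_v$ indexes a clique containing at least one of $u,v$, and then a short argument, or the Helly property of intervals on a line, pins down a common clique). This closes the loop $(i)\Leftrightarrow(ii)\Leftrightarrow(iii)$. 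For (v): $(i)\Rightarrow$ chordal is immediate from the interval model (a chordless cycle of length $\geq 4$ cannot be represented), and $(i)\Rightarrow$ AT-free follows because in an interval model any three pairwise non-adjacent vertices have a linear left-to-right order in which the middle one separates the other two. For the converses, $(v)\Rightarrow(iv)$ uses that a chordal graph has no induced $C_4$ and that AT-free chordal graphs are cocomparability (this is the Lekkerkerker--Boland direction; alternatively cite it from \cite{GOL, Lekkerkerker1962}), and $(iv)\Rightarrow(i)$: a cocomparability graph has an umbrella-free ordering $\sigma$; if $\sigma$ were not an interval ordering there would be $x<_\sigma y<_\sigma z$ with $xy\notin E$, $xz\in E$, and then the umbrella-free condition on $y$ together with the absence of an induced $C_4$ can be leveraged to either fix the ordering or exhibit a forbidden configuration — here I would instead invoke the standard fact that a $C_4$-free cocomparability graph admits an interval ordering, reducing to $(ii)$. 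I would present (iv),(v) equivalences more briefly, citing \cite{GOL, GH64, Lekkerkerker1962} for the classical parts, since the paper's own contributions build only on $(ii)$ and $(iii)$.
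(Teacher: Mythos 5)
The paper does not prove Theorem~\ref{th:GilHof}; it is stated as a classical result attributed to \cite{GH64,Lekkerkerker1962} and used as a black box, so there is no internal proof to compare yours against. Judged on its own terms, your proposal is sound where it is detailed. The cycle $(i)\Rightarrow(ii)\Rightarrow(iii)\Rightarrow(i)$ is correct and completable exactly as you sketch it: $R(v)$ is a clique by the interval-ordering condition; every maximal clique equals $R(v)$ where $v$ is its $\tau$-last vertex, so ordering maximal cliques by last vertex is well defined; and the consecutiveness claim does go through by a two-case adjacency argument (for $w\in M_2$ with $x\leq_\tau w$, apply the condition to $x\leq_\tau w<_\tau v_3$ with $xv_3\in E$; for $w<_\tau x$, apply it to $w<_\tau x<_\tau v_2$ with $wv_2\in E$), forcing $x$ to be universal to $M_2$ and contradicting maximality. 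One small inaccuracy in $(iii)\Rightarrow(i)$: consecutiveness already guarantees that every index in $I_u\cap I_v$ indexes a clique containing \emph{both} $u$ and $v$, so no Helly argument is needed there. The genuinely incomplete parts are $(iv)\Rightarrow(i)$ and $(v)\Rightarrow(iv)$, which are the hard directions of the Gilmore--Hoffman and Lekkerkerker--Boland theorems; your text defers both to ``the standard fact'' and to citations. That is defensible for a theorem the paper itself only cites, but be aware that these implications carry real content (the Gilmore--Hoffman argument linearly orders the maximal cliques via a transitive orientation of $\overline{G}$, with $C_4$-freeness ensuring consistency of that order), and that your $(v)\Rightarrow(iv)$ risks circularity if the fact you cite is itself proved by first establishing $(v)\Rightarrow(i)$; a cleaner arrangement is $(i)\Rightarrow(iv)\Rightarrow(i)$ and $(i)\Rightarrow(v)\Rightarrow(i)$ with the two nontrivial converses cited or proved separately.
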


As mentioned in the Introduction, an ordering of the vertices satisfying  condition (ii) is called  \textit{an interval ordering} and
\emph{interval orders} are acyclic transitive orientations of the complement of interval graphs.  Therefore we
have  the following characterization theorem:

\begin{theorem} 

The following propositions are equivalent and characterize interval orders:

(i) $P$  can be represented as a left-ordering of a family of  intervals of the real line.

(ii) The successors sets are totally ordered by inclusion.

(iii) The predecessors sets are totally ordered by inclusion;


(iv) P  has a maximal antichain path. (A maximal clique path is just a maximal clique tree T, reduced to  a path).

(v) $P$ does not contain a suborder isomorphic to  \textbf{2}+ \textbf{2} (See Figure \ref{2k2}).

\end{theorem}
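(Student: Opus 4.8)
The plan is to prove the cycle of implications $(i)\Rightarrow(ii)\Rightarrow(iii)\Rightarrow(iv)\Rightarrow(v)\Rightarrow(i)$ (or some convenient rerouting), leaning on the standard fact that $P$ is an interval order iff $\overline{P}$'s comparability graph — i.e.\ the underlying graph — is an interval graph, so that Theorem~\ref{th:GilHof} is available. Throughout, ``successor set'' of $x$ means $\{y : x <_P y\}$ and ``predecessor set'' means $\{y : y <_P x\}$. First I would record the self-duality: passing from $P$ to $P^-$ swaps successor sets with predecessor sets and preserves the $\mathbf{2}+\mathbf{2}$-free condition, so $(ii)\Leftrightarrow(iii)$ is immediate once either is proved, and similarly the argument for predecessors in $(iv)$ mirrors that for successors.

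For $(v)\Rightarrow(ii)$: suppose the successor sets are not totally ordered by inclusion, so there are $a,b$ with $u \in \mathrm{succ}(a)\setminus\mathrm{succ}(b)$ and $v\in\mathrm{succ}(b)\setminus\mathrm{succ}(a)$. Then $a<_P u$, $b<_P v$, $a\not<_P v$, $b\not<_P u$; one checks $a\parallel_P b$ (if $a<_P b$ then $a<_P v$ by transitivity; if $b<_P a$ then $b<_P u$) and $u\parallel_P v$ and $a\neq v$, $b\neq u$, so $\{a<u,\ b<v\}$ is a copy of $\mathbf{2}+\mathbf{2}$, a contradiction. Conversely $(ii)\Rightarrow(v)$: a $\mathbf{2}+\mathbf{2}$ consisting of $a<_P c$ and $b<_P d$ with no other comparabilities gives $c\in\mathrm{succ}(a)\setminus\mathrm{succ}(b)$ and $d\in\mathrm{succ}(b)\setminus\mathrm{succ}(a)$, so the successor sets are incomparable. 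This settles $(ii)\Leftrightarrow(v)$, hence also $(iii)$.

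For the link to $(i)$ and $(iv)$: given $(ii)$, order the ground set by a linear extension refining ``$\mathrm{succ}(x)\supseteq\mathrm{succ}(y)$'' (ties broken arbitrarily); assigning to $x$ the interval whose right endpoint encodes $x$'s position and whose left endpoint encodes the position of the largest predecessor, one gets an interval representation realizing $P$ as a left-ordering — the umbrella/interval-ordering condition of Theorem~\ref{th:GilHof}(ii) translates exactly into the nestedness of successor sets. This gives $(ii)\Rightarrow(i)$, and $(i)\Rightarrow(v)$ is the easy direction (a $\mathbf{2}+\mathbf{2}$ would force four intervals with an impossible disjointness pattern on the line). Finally $(iv)$: by Theorem~\ref{th:GilHof}(iii) the underlying interval graph has its maximal cliques — which are the maximal antichains of $P$ — linearly orderable with the consecutiveness property, and one verifies this linear order is exactly a chain in ${\cal MA}(P)$, i.e.\ a maximal antichain path; conversely a maximal antichain path yields the consecutive-clique ordering, so $(iv)\Leftrightarrow(i)$.

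The main obstacle I anticipate is the precise bookkeeping in $(ii)\Rightarrow(iv)$ and back: one must check that the inclusion chain on successor sets really does induce the covering/ordering relation $\leq_{{\cal MA}(P)}$ on maximal antichains and that ``consecutive occurrence of cliques containing $x$'' corresponds to an interval of the antichain path — this uses Proposition~\ref{consecutivity} (consecutiveness of intersections along a chain) in an essential way, and care is needed because distinct maximal antichains can share elements. Everything else is routine once $(ii)\Leftrightarrow(v)$ is in hand and Theorem~\ref{th:GilHof} is invoked as a black box.
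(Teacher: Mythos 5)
The paper never actually proves this theorem: it is presented as a classical result obtained by translating Theorem~\ref{th:GilHof} into the language of orders, with \cite{Fishburn85,Trotter99} cited for background and the equivalence of (i) and (iv) essentially restated later as Property~\ref{chain} (credited to Behrendt). Your plan therefore supplies an argument the paper omits, and it is correct; it is the standard route. The part you carry out in detail, $(ii)\Leftrightarrow(v)$, is complete — the remaining incomparabilities $u\parallel_P v$, $a\parallel_P v$, $b\parallel_P u$ all follow by the same one-line transitivity check you give for $a\parallel_P b$ — and $(iii)$ then comes for free by duality since $(v)$ is self-dual. Two spots deserve the extra care you already flag. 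In $(ii)\Rightarrow(i)$ you must fix an endpoint convention so that when $x$ is itself the largest predecessor of $y$ the two intervals are disjoint rather than touching; the cleanest variant indexes left endpoints by the distinct predecessor sets and right endpoints by the distinct successor sets (both totally ordered by inclusion once $(ii)$ and $(iii)$ are known). For $(iv)$, note that as stated it is purely graph-theoretic (a clique path of the incomparability graph), so $(i)\Rightarrow(iv)$ is literally Theorem~\ref{th:GilHof}(i)$\Rightarrow$(iii); and rather than verifying that the consecutive arrangement is monotone for $\leq_{{\cal MA}(P)}$ — which is the content of Property~\ref{chain} and the only genuinely fiddly bookkeeping in your plan — you can close the cycle more cheaply via $(iv)\Rightarrow(v)$: a maximal antichain path makes the incomparability graph an interval graph, an induced $\mathbf{2}+\mathbf{2}$ in $P$ is an induced $2K_2$ in the comparability graph, hence an induced $C_4$ in the incomparability graph, which Theorem~\ref{th:GilHof}(iv) forbids. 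With that rerouting every step of your outline is routine.
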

 \begin{figure}[ht!]
    \begin{center}
      \begin{tikzpicture}[scale=0.35]

        \coordinate(A) at (3,-5);%
        \coordinate(C) at (3,-3);%
        \coordinate(B) at (5,-5);%
        \coordinate(D) at (5,-3);%

        \draw(A)node[below]{$a$} node{$\bullet$};%
        \draw(B)node[below]{$b$} node{$\bullet$};%
        \draw(C)node[above]{$c$} node{$\bullet$};%
        \draw(D)node[above]{$d$} node{$\bullet$};%

        \draw (A)--(C);%
        \draw (B)--(D);%

              \end{tikzpicture}
    \end{center}
    \caption{A $\textbf{2}+ \textbf{2}$ partial order}\label{2k2}      
      \end{figure}

In terms  of the lattice ${\cal MA}(P)$,  condition (iv) becomes:

\begin{property}\label{chain}\cite{Berhendt}
$P$ is an interval order if and only if ${\cal MA}(P)$ is a chain.
\end{property}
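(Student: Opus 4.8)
The plan is to prove both directions by relating the chain condition on ${\cal MA}(P)$ directly to one of the structural characterizations of interval orders already listed in the preceding theorem — specifically, the absence of a $\textbf{2}+\textbf{2}$ suborder (condition (v)), or equivalently the total ordering of successor (or predecessor) sets by inclusion (conditions (ii)/(iii)). I would phrase the argument so that it leans on Proposition~\ref{consecutivity} and Lemma~\ref{powerMA} for one direction and on an explicit construction of two incomparable maximal antichains for the other.

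First I would prove the forward direction contrapositively: assume $P$ contains a $\textbf{2}+\textbf{2}$, i.e.\ elements $a <_P c$, $b <_P d$ with all other pairs among $\{a,b,c,d\}$ incomparable, and exhibit two maximal antichains $A,B$ that are incomparable in ${\cal MA}(P)$, which forces ${\cal MA}(P)$ not to be a chain. The natural candidates are a maximal antichain $A$ containing $a$ and $d$ and a maximal antichain $B$ containing $b$ and $c$ (both exist since $\{a,d\}$ and $\{b,c\}$ are antichains and every antichain extends to a maximal one). Then $d \in A$ but $d$ is above nothing comparable in $\{b,c\}$ except possibly itself — more carefully, I need $A \not\leq_{{\cal MA}(P)} B$ and $B \not\leq_{{\cal MA}(P)} A$. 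To get $A \not\leq_{{\cal MA}(P)} B$ I would use Lemma~\ref{powerMA}: if $A \leq_{{\cal MA}(P)} B$ then every element of $A$ is $\leq_P$ or $\parallel_P$ every element of $B$; but $d \in A$, $b \in B$ and $b <_P d$, contradiction. Symmetrically $c \in B$, $a \in A$, $a <_P c$ gives $B \not\leq_{{\cal MA}(P)} A$. Hence $A$ and $B$ are incomparable, so ${\cal MA}(P)$ is not a chain. (One should double-check that $A \neq B$, which holds since $c \parallel_P d$ forces them into different antichains.)

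For the converse, assume ${\cal MA}(P)$ is not a chain, so there exist maximal antichains $A, B$ with $A \not\leq_{{\cal MA}(P)} B$ and $B \not\leq_{{\cal MA}(P)} A$. By Proposition~\ref{decompA}, $A-B$ is partitioned into $S_{min}(A,B)$ and $S_{max}(A,B)$; incomparability of $A$ and $B$ in ${\cal MA}(P)$ means (via Lemma~\ref{reversedef}) that $S_{max}(A,B) \neq \emptyset$ and $S_{max}(B,A) \neq \emptyset$. Pick $x \in S_{max}(A,B)$, so there is $y \in B-A$ with $y <_P x$; pick $u \in S_{max}(B,A)$, so there is $v \in A-B$ with $v <_P u$. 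Now I would argue that $\{v, u, y, x\}$, after possibly adjusting the choices, contains a $\textbf{2}+\textbf{2}$: $v, x \in A$ are incomparable (same antichain), $y, u \in B$ are incomparable, $v <_P u$, $y <_P x$, and the remaining cross pairs $v$ vs.\ $y$, $v$ vs.\ $x$ (already incomparable), $u$ vs.\ $x$ — these need checking. The cross-comparability $x$ vs.\ $u$ and $v$ vs.\ $y$ is where care is needed, and that is the main obstacle: a priori $x$ could be comparable to $u$. I expect to resolve this by a careful choice of $x$ as a minimal element of $S_{max}(A,B)$ and $u$ as a minimal element of $S_{max}(B,A)$, together with transitivity arguments of the type used in the proof of Lemma~\ref{defMA}: e.g., if $u <_P x$ then since $y <_P u$ we would get $y <_P x$ with $y,x$ both... and one derives that some element of $B$ sits below $x$ in a way that contradicts minimality or contradicts $x \in A$ being incomparable to everything in $A$. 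If the direct four-element argument proves stubborn, the fallback is to instead prove the equivalent statement ``${\cal MA}(P)$ is a chain iff the successor sets of $P$ are totally ordered by inclusion'' (condition (ii)), which I would show by: (a) if two successor sets are incomparable, say $p, q$ with a successor $p'$ of $p$ not above $q$ and a successor $q'$ of $q$ not above $p$, build incomparable maximal antichains through $\{p', q\}$-ish and $\{q', p\}$-ish witnesses as above; (b) conversely if successor sets are totally ordered, show any two maximal antichains are comparable by repeatedly using that the "leftmost" disagreement between $A$ and $B$ must be resolvable in one direction. Either route reduces the theorem to the $\textbf{2}+\textbf{2}$ / nested-neighborhoods machinery that underlies the whole preceding section.
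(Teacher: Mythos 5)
The paper gives no proof of Property~\ref{chain}: it cites Behrendt and presents the statement as a direct rephrasing of the ``maximal antichain path'' condition in the characterization of interval orders, so there is nothing in the paper to compare your argument against line by line. On its own merits, your proof is essentially correct, taking the $\textbf{2}+\textbf{2}$ characterization (condition (v)) as the definition of ``not an interval order.'' Two remarks. First, in the direction ``a $\textbf{2}+\textbf{2}$ yields two incomparable maximal antichains,'' your parenthetical justification of $A\neq B$ is off: $c\parallel_P d$ does not prevent $c$ and $d$ from lying in the same maximal antichain; what actually works is that $b$ and $d$ are comparable and so cannot both lie in $A=B$ --- or, more simply, your Lemma~\ref{powerMA} contradiction already excludes $A\leq_{{\cal MA}(P)}B$, which subsumes the case $A=B$. (You also swap which implication you call ``forward,'' but both directions are covered, so this is harmless.)

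Second, the obstacle you flag in the converse direction dissolves with no minimality choice at all. Fix $x\in S_{max}(A,B)$ with witness $y\in B-A$, $y<_P x$, and $u\in S_{max}(B,A)$ with witness $v\in A-B$, $v<_P u$. Then $v\neq x$, since $v=x$ would give $y<_P x=v<_P u$, putting the comparable pair $y,u$ inside the antichain $B$ (or forcing $y=u$, which is absurd); symmetrically $y\neq u$. Hence $v\parallel_P x$ and $y\parallel_P u$ simply because they are distinct elements of the antichains $A$ and $B$ respectively. For the remaining cross pairs, $x<_P u$ would give $y<_P x<_P u$, contradicting $y\parallel_P u$, and $u<_P x$ would give $v<_P u<_P x$, contradicting $v\parallel_P x$; so $x\parallel_P u$, and the same two-step transitivity argument gives $v\parallel_P y$. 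Thus $\{v<_P u,\ y<_P x\}$ is already a $\textbf{2}+\textbf{2}$, and neither the ``minimal element of $S_{max}$'' refinement nor the fallback through successor sets is needed. With these two points tightened, the proposal is a complete proof.
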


This result can be complemented by:

\begin{theorem}\label{extintmin}\cite{HMPR91}
The minimal interval order extensions of $P$ are in a one-to-one correspondence with the maximal chains of ${\cal MA}(P)$.
\end{theorem}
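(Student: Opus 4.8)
**Plan for proving Theorem \ref{extintmin} (correspondence between minimal interval extensions of $P$ and maximal chains of ${\cal MA}(P)$).**

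The plan is to set up a map in each direction and show they are mutually inverse. Given a maximal chain $\mathcal{C}: C_0 <_{{\cal MA}(P)} C_1 <_{{\cal MA}(P)} \dots <_{{\cal MA}(P)} C_k$ of ${\cal MA}(P)$, I would build a partial order $Q_{\mathcal{C}}$ on the same ground set $X$ whose maximal antichains are exactly the $C_i$'s and whose maximal-antichain lattice is the chain $\mathcal{C}$ itself; by Property \ref{chain}, $Q_{\mathcal{C}}$ is then an interval order, and I must check $Q_{\mathcal{C}}$ is an extension of $P$ and is minimal among interval extensions. Concretely, for $x,y\in X$ I would declare $x <_{Q_{\mathcal{C}}} y$ iff the largest index $i$ with $x\in C_i$ is strictly smaller than the smallest index $j$ with $y\in C_j$ (equivalently: the set of indices $\{i : x\in C_i\}$ lies entirely below the set $\{j : y\in C_j\}$). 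The first technical step is to verify this is well-defined and transitive, and that for each element $x$ the index set $\{i : x\in C_i\}$ is a nonempty \emph{interval} of $\{0,\dots,k\}$ — this is where Proposition \ref{consecutivity} is the workhorse: consecutiveness of maximal antichains along a chain guarantees the "cliques containing $x$ occur consecutively" condition (iii) of Theorem \ref{th:GilHof}, so $Q_{\mathcal{C}}$ is an interval order directly. I also need that every $C_i$ is actually a maximal antichain of $Q_{\mathcal{C}}$ and that no new maximal antichains appear, so that $\mathcal{C}$ is recovered as ${\cal MA}(Q_{\mathcal{C}})$.

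Next, that $Q_{\mathcal{C}}$ extends $P$: if $x <_P y$, I would find a maximal antichain of $P$ containing $x$ but whose every element incomparable to or above $x$ forces — using Lemma \ref{powerMA} and Lemma \ref{defMA} — that $y$ cannot appear in any $C_i$ at or below the last occurrence of $x$; more cleanly, I would argue that any maximal antichain containing $x$ is $\leq_{{\cal MA}(P)}$ any maximal antichain containing $y$ (since $x <_P y$ makes every antichain through $x$ comparable-below to every antichain through $y$, by Lemma \ref{reversedef}/Lemma \ref{variation1}), hence along the chain the indices of $x$ precede those of $y$, giving $x <_{Q_{\mathcal{C}}} y$. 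For the reverse direction: given a minimal interval extension $I$ of $P$, Property \ref{chain} says ${\cal MA}(I)$ is a chain; I would show each maximal antichain of $I$ is a maximal antichain of $P$ (antichains of $I$ are antichains of $P$ since $I$ extends $P$; maximality needs minimality of $I$ — if some antichain $A$ of $I$ that is maximal in $I$ were not maximal in $P$, one could "split" $I$ to get a smaller interval extension still $\geq P$, contradicting minimality), and that these maximal antichains form a chain in ${\cal MA}(P)$ that is \emph{maximal} (again by minimality of $I$: a proper refinement of the chain would correspond to a proper interval sub-extension). Finally I would check the two constructions are inverse to each other, which is essentially bookkeeping once the maximal antichains of $Q_{\mathcal{C}}$ and of $I$ have been identified.

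The main obstacle I anticipate is the \emph{minimality} equivalence — precisely matching "minimal interval extension of $P$" with "maximal chain in ${\cal MA}(P)$". Showing an interval extension is minimal requires arguing that removing any comparability (or more carefully, passing to any proper interval sub-extension that still extends $P$) is impossible, and translating that operation into an insertion of a new maximal antichain strictly between two consecutive $C_i, C_{i+1}$ in the chain. This needs a lemma of the form: an interval order $I \geq P$ with ${\cal MA}(I)$ a chain is non-minimal iff its maximal-antichain chain can be properly extended by a maximal antichain of $P$ sitting strictly between two consecutive members — and the delicate point is ensuring the inserted antichain, together with the old ones, still yields an interval order extending $P$ (so that the refined chain is realized by an actual interval extension). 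Once this insertion/refinement correspondence is pinned down, the bijection follows; the rest is routine verification using Propositions \ref{consecutivity} and \ref{decompA} and Lemmas \ref{reversedef}, \ref{powerMA}, \ref{defMA}.
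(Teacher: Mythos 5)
First, note that the paper does not prove Theorem \ref{extintmin}: it is imported from \cite{HMPR91}, so there is no internal proof to measure your plan against. Judged on its own, your architecture is the standard one and the forward construction is essentially right: along a maximal chain $C_0 \prec \dots \prec C_k$ the index set $\{i : x \in C_i\}$ of each element is an interval of indices by Proposition \ref{consecutivity}, and it is \emph{nonempty} only because the chain is maximal --- if $x$ appeared in no $C_i$ one would get $a \in C_p$, $b \in C_{p+1}$ with $a <_P x <_P b$, contradicting $a \prec_P b$ from Lemma \ref{defMA}; so nonemptiness needs the covering property, not just consecutiveness. One intermediate claim is false as stated: it is not true that $x <_P y$ forces \emph{every} maximal antichain through $x$ to lie below every maximal antichain through $y$ in ${\cal MA}(P)$ (in $\mathbf{2}+\mathbf{2}$ with $a<_P b$ and $c<_P d$, the antichains $\{a,d\}\ni a$ and $\{b,c\}\ni b$ are incomparable). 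What you actually need is the statement only for two antichains that are already comparable, i.e.\ both on the chain, and that is exactly Lemma \ref{powerMA}: if $C_j \leq_{{\cal MA}(P)} C_i$ with $y \in C_j$, $x \in C_i$, then $y <_P x$ is impossible, so the index interval of $x$ ends strictly before that of $y$ begins.

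The genuine gap is the one you yourself flag: the equivalence between minimality of the interval extension and maximality of the chain, in both directions, is announced but not proved, and it is the entire content of the theorem. Going from a minimal interval extension $I$ to a maximal chain you must show (i) that every maximal antichain of $I$ is a maximal antichain of $P$ and (ii) that the resulting chain cannot be refined inside ${\cal MA}(P)$; going from a maximal chain to its order $Q_{\mathcal C}$ you must show that no interval order $I$ with $P \subseteq I \subsetneq Q_{\mathcal C}$ exists. Neither reduces to bookkeeping: for (i) your proposed ``split $I$'' move has to produce an interval order that still extends $P$, which requires locating the enlarged antichain inside the chain ${\cal MA}(I)$ and reverifying consecutiveness for the refined chain; for the converse, a comparability $(x,y)$ of $Q_{\mathcal C}$ absent from a smaller interval extension $I$ yields a maximal $P$-antichain containing both $x$ and $y$, and you must show it can be inserted strictly between two consecutive $C_i$'s, i.e.\ that it is comparable to every member of the chain --- this is where Proposition \ref{supinf} and Corollary \ref{updown} have to be brought to bear, and the proposal does not do it. Until this two-way lemma is written out, the maps are only shown to land in interval extensions and in chains, not in \emph{minimal} extensions and \emph{maximal} chains, so the claimed bijection is not established.
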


As a consequence developed in \cite{HMPR91}, the number of minimal interval orders extensions of $P$ is a comparability invariant and is \#P-complete to compute.
For additional background information on interval orders, the reader is encouraged to consult Fishburn's monograph \cite{Fishburn85} or Trotter's survey article \cite{Trotter99}.

\subsection{Maximal cliques structure of cocomparability graphs}\label{sec:cliquecocomp}

In the following, we first characterize cocomparability graphs in terms of  a particular lattice structure on its maximal cliques and show that it extends the well-known characterization of interval graphs by a linear ordering of its maximal cliques. Then, we state some corollaries on subclasses of cocomparability graphs. 
We let ${\cal C}(G)$ denote  the set of maximal cliques of a graph $G$.


\begin{theorem}\label{cliquecocomp}
 $\Gr$ is a cocomparability graph if and only if  ${\cal C}(G)$  can be equipped with a lattice structure ${\cal L}$ satisfying:
\begin{description}
\item[(i)] For every  $A, B, C \in {\cal C}(G)$,  such that  $A \leq_{\cal L} B \leq_{\cal L } C$, then $\intpreq{A}{B}{C}$.

\item[(ii)] For every $A, B\in {\cal C}(G)$, $(A\cup B)  \subseteq  (A \vee_{\cal L}  B) \cup (A \wedge_{\cal L}  B)$.
\end{description}
\end{theorem}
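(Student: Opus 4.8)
The plan is to prove both directions of the equivalence, using the maximal antichain lattice $\mathcal{MA}(P)$ as the witnessing lattice structure in the forward direction, and a "local" argument producing an umbrella-free ordering in the reverse direction.

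For the forward direction, suppose $G$ is a cocomparability graph and let $P = P_G$ be a transitive orientation of $\overline{G}$. Then the maximal cliques of $G$ are exactly the maximal antichains of $P$, so $\mathcal{C}(G) = \mathcal{MA}(P)$, and by Theorem~\ref{propinf} the relation $\leq_{{\cal MA}(P)}$ equips this set with a lattice structure $\mathcal{L}$. Property (i) is then immediate from Proposition~\ref{consecutivity} (the consecutiveness property $\intpreq{A}{B}{C}$ whenever $A \leq_{{\cal MA}(P)} B \leq_{{\cal MA}(P)} C$), and Property (ii) is exactly Proposition~\ref{supinf}. So this direction is essentially a repackaging of the results already established in Section~\ref{sec:mapreview}; the only thing to check carefully is that nothing extra is needed, i.e.\ that the $\mathcal{L}$ we hand over genuinely is a lattice on $\mathcal{C}(G)$ and not merely a poset.

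The reverse direction is the substantive part. Assume $\mathcal{C}(G)$ carries a lattice $\mathcal{L}$ satisfying (i) and (ii); I must produce an umbrella-free (cocomp) ordering $\sigma$ of $V(G)$, or equivalently a transitive orientation of $\overline{G}$. The natural idea is to read off an ordering of the vertices from a linear extension (maximal chain refinement) of $\mathcal{L}$: fix a linear extension $C_1 <_{\mathcal L} C_2 <_{\mathcal L} \cdots <_{\mathcal L} C_k$ of all maximal cliques, and order each vertex $x$ by, say, the index of the first clique containing it (breaking ties by the last clique containing it, or by some fixed rule). Property (i) guarantees that the set of indices $\{\, i : x \in C_i \,\}$ is an interval $[\ell(x), r(x)]$ — this is the key consequence of consecutiveness, giving each vertex a genuine "interval" along the chain. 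I then want to show the ordering $\sigma$ obtained by sorting vertices by $\ell(x)$ (with a suitable tie-break) is umbrella-free: given $x <_\sigma y <_\sigma z$ with $xz \in E$ but $xy, yz \notin E$, I derive a contradiction. Since $xz \in E$, $x$ and $z$ lie in a common maximal clique $C_j$, so $[\ell(x), r(x)] \cap [\ell(z), r(z)] \neq \emptyset$; since $xy \notin E$ and $yz \notin E$, the interval of $y$ is disjoint from those of $x$ and $z$ respectively. The ordering by $\ell$ forces $\ell(x) \le \ell(y) \le \ell(z)$, and one pushes this through the interval/clique structure to contradict, say, that $x$ and $z$ share a clique while $y$'s interval sits strictly between. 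Property (ii) enters when the tie-breaking among vertices with equal $\ell$ needs to be chosen consistently — it controls how a vertex appearing in $A\setminus B$ is distributed between $A\vee B$ and $A\wedge B$ (Corollary~\ref{updown}), which should let me locally orient the non-edges within a single clique or between overlapping cliques coherently.

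The hard part will be the reverse direction, specifically verifying umbrella-freeness including the tie-break rule: Property (i) alone gives each vertex an interval of clique-indices, but two vertices with overlapping clique-intervals need not be adjacent (the chain of cliques is only a skeleton, not the whole graph), so one cannot simply invoke the interval-graph characterization (iii) of Theorem~\ref{th:GilHof}. This is exactly where Property (ii) must be used: it is the extra condition that rules out the "$M_3$/$N_5$-type" pathologies and forces the gluing of the per-chain interval structures to be globally umbrella-free. I expect the cleanest route is to (a) use (i) to define the interval $[\ell(x),r(x)]$ of each vertex, (b) use (ii) to define, for any two incomparable-in-$\mathcal L$ or comparable cliques, a consistent side assignment of the symmetric-difference vertices, (c) assemble these into an orientation of $\overline{G}$ and check transitivity directly by a case analysis on how three mutually non-adjacent vertices distribute across a maximal chain through $\mathcal L$, using Proposition~\ref{consecutivity}-style consecutiveness at each step. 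If a direct transitivity check is unwieldy, the fallback is to instead exhibit an umbrella-free vertex ordering and invoke \cite{KrSt93}, absorbing the tie-break subtleties into the ordering rather than into an orientation.
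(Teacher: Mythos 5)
Your forward direction is correct and matches the paper exactly: take $P$ a transitive orientation of $\overline{G}$, observe ${\cal C}(G)={\cal MA}(P)$, and read off (i) and (ii) from Propositions \ref{consecutivity} and \ref{supinf}. The reverse direction, however, has a genuine gap, and it appears right at the foundation of your plan. You claim that along a linear extension $C_1,\dots,C_k$ of ${\cal L}$, Property (i) forces the set $\{i : x\in C_i\}$ to be an interval. This does not follow: (i) only constrains triples that form a \emph{chain} in ${\cal L}$, i.e.\ $A\leq_{\cal L}B\leq_{\cal L}C$, whereas consecutive cliques in a linear extension are typically incomparable in ${\cal L}$. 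If $A\parallel_{\cal L}C$ with $x\in A\cap C$ and some $B$ incomparable to both sits between them in the linear extension, nothing in (i) puts $x$ into $B$. So the "interval $[\ell(x),r(x)]$" on which your entire umbrella-freeness argument rests is not well defined, and the remainder of that argument ("one pushes this through the interval/clique structure\dots") is a sketch that never materializes into a verification; the role you assign to (ii) (a "consistent side assignment" for tie-breaking) is likewise not carried out.

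The paper takes a different and more direct route that you may want to compare against: rather than building a vertex ordering from a linear extension of ${\cal L}$, it defines a relation $R_{\cal L}$ on $V(G)$ by $xR_{\cal L}y$ iff $xy\notin\E$ and there exist maximal cliques $C'\leq_{\cal L}C''$ with $x\in C'$, $y\in C''$, and then proves $R_{\cal L}$ is reflexive, antisymmetric and transitive. Antisymmetry and transitivity are exactly where (i) and (ii) do their work: given two cliques witnessing $xR_{\cal L}y$ and two witnessing $yR_{\cal L}x$, one either finds a three-term chain in ${\cal L}$ violating the consecutiveness of (i), or uses (ii) (via the dichotomy of Corollary \ref{updown}, that a vertex of $A-B$ lands in exactly one of $A\vee_{\cal L}B$, $A\wedge_{\cal L}B$) to manufacture such a chain from the supremum or infimum of two incomparable cliques. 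This bypasses any need for a global linear extension or tie-breaking rule. To repair your proof you would either need to adopt this relational approach, or prove a genuine consecutiveness statement for some carefully chosen maximal chain of ${\cal L}$ together with an argument covering vertices whose cliques lie off that chain; as written, the reverse direction is not established.
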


\begin{proof}
	For the forward direction, let $P$ be  a partial order on $\V$ which corresponds to a transitive orientation of  $\overline{G}$.  Note that any maximal antichain of $P$ forms a maximal clique of $G$. Let ${\cal L}={\cal MA}(P)$. Proposition \ref{consecutivity} shows that the first condition is satisfied. Proposition \ref{supinf} shows that the second condition is satisfied.
		
	For the reverse direction, we will prove the following claim, which shows that if there is a lattice structure on the maximal cliques of a graph $G$ satisfying conditions (i) and (ii), then we can transitively orient $\overline{G}$ and thus $G$ is a cocomparability graph.
	
	\begin{claim}\label{giveorient}
		Let $G$ be a graph and ${\cal C}(G)$ the set of maximal cliques of $G$ such that ${\cal C}(G)$ can be equipped with a lattice structure ${\cal L}$ satisfying conditions (i) and (ii). Let $R_{\cal L}$ be a  binary relation  defined on $V(G)$ as follows:
		
		$xR_{\cal L}y$ if and only if $xy \notin \E$ and there exist 2 maximal cliques $C',$ $C''$ of G with $C' \leq_{\cal L} C''$ and $x\in C'$, $y\in C''$.
	
Then $R_{\cal L}$ is a partial order on  $V(G)$.		 
	\end{claim}

	\begin{proof}
		To show that $R_{\cal L}$ is a partial order, we start by showing that the relation is reflexive. Then we will show that  $R_{\cal L}$ is antisymmetric and finally its transitivity.

Let $x$ be a vertex of $G$. Because $G$ is simple, we have that $xx \notin \E$. Let $C_x$ be a maximal clique of $G$ that contains $x$. We have that $C_x \leq_{\cal L} C_x$ and so we deduce that $xR_{\cal L}x$ which shows the reflexivity.

If we consider different vertices $x,y \in \V$ such that $xy \notin \E$, then $x, y$ cannot be together in a maximal clique of $G$. Further, there exists at least two maximal cliques $C_x,$ $C_y$ such that $x\in C_x$ and $y\in C_y$. Assume $C_x \parallel_{{\cal L}}  C_y$.  Since $x,~y$ cannot belong together in the supremum or the infimum of  $C_x$ and $C_y$,  using  condition (ii) the supremum necessarily contains $x$ (respectively $y$) and the infimum will contain $y$ (respectively $x$). Hence, we can derive $yR_{\cal L}x$ (respectively $xR_{\cal L}y$) using the pair of maximal cliques $C_y$, $C_x \wedge_{\cal L} C_y$ (respectively the pair $C_x$, $C_x \wedge_{\cal L} C_y$). 

To show the antisymmetry of $R_{\cal L}$, let us suppose for contradiction that $xR_{\cal L}y$, $yR_{\cal L}x$ and $x \neq y$. Then there exists $C_x$, $C'_x$, $C_y$, $C'_y$  such that  $x\in C_x$, $x \in C'_x$, $y\in C_y$, $y\in C'_y$, with $C_x \leq_{{\cal L}} C_y$ and $C'_y \leq_{{\cal L}}  C'_x$. In the case where $C'_x \leq_{{\cal L}}  C_y$ then the three maximal cliques $C'_y \leq_{\cal L} C'_x \leq_{\cal L} C_y$ contradict condition (i) and if $C_y \leq_{{\cal L}}  C'_x$ then the three maximal cliques $C_x \leq_{\cal L} C_y \leq_{\cal L} C'_x$ contradict condition (i) and so we deduce that $C'_x \parallel_{{\cal L}}  C_y$. But now using condition (ii) on $C'_x$, $C_y$, we deduce that in the supremum of $C'_x$ and $C_y$ we will find either $x$ or $y$ since they cannot belong together in a maximal clique. If it is $y$ in $(C'_x \vee_{\cal L} C_y)$, we have $C'_y \leq_{\cal L} C'_x \leq_{\cal L} (C'_x \vee_{\cal L} C_y)$ that contradicts  condition (i) and similarly if it is $x$ in $(C'_x \vee_{\cal L} C_y)$, then $C_x \leq_{{\cal L}}  C_y \leq_{{\cal L}} (C'_x \vee_{\cal L} C_y) $  contradicts  condition (i). Thus if we have $xR_{\cal L}y$ and $yR_{\cal L}x$, we must have $x=y$. 

Let us now examine the transitivity of $R_{\cal L}$. Let $x,y,z$ be three different vertices such that $xR_{\cal L}y$ and $yR_{\cal L}z$.  Let us assume for contradiction that $xz \in \E$.  Therefore there exists a maximum clique $C_{xz}$ of $G$ such that  $x,z \in C_{xz}$. Let $C_y$ be a  maximal clique that contains $y$. But now because $y$ is not linked to $x$, $y$ does not belong to $C_{xz}$ and using corollary \ref{updown} on $C_{xz}$ and $C_y$ we have that either $y \in (C_{xz} \vee C_y)$ or $y \in (C_{xz} \wedge C_y)$. In the first case, by the definition of $R_{\cal L}$ we have that $zR_{\cal L}y$ and from the assumption, $yR_{\cal L}z$. So using the antisymmetry of $R_{\cal L}$ on $z$, $y$ we have that $z=y$, which contradicts our choice of $z$, $y$ being different vertices. In the second case, by the definition of $R_{\cal L}$, we have that $yR_{\cal L}x$ and from the assumption, $xR_{\cal L}y$. So using the antisymmetry on $x$, $y$, we have that $x=y$, which contradicts our choice of $x$, $y$ being different vertices. 

So assume that there exists three different vertices $x,y,z$ such that $xR_{\cal L}y$ and $yR_{\cal L}z$. Now we show that $xR_{\cal L}z$. We just have proved that $xz \notin \E$. As $xR_{\cal L}y$, there is a maximal clique $C_x$ and a maximal clique $C_y \text{ such that } x \in C_x$, $y \in C_y$ and $C_x\leq_{{\cal L}}C_y$. Let $C_z$ be a maximal clique such that $z \in C_z$. Using condition (ii) on $C_z$, $C_y$ either $z \in (C_z \wedge_{\cal L} C_y)$ or $z \in (C_z \vee_{\cal L} C_y)$. In the case where $z \in (C_z \wedge_{\cal L} C_y)$, using the definition of $R_{\cal L}$ on $z$, $y$ and the cliques $C_y$, $(C_z \wedge_{\cal L} C_y)$ we get that $zR_{\cal L}y$. Since $yR_{\cal L}z$, using the antisymmetry of $R_{\cal L}$ we get $y=z$ thereby contradicting our assumption that $y$ and $z$ are different vertices. So $z$ has to belong to $C_z \vee_{\cal L} C_y$. Now we have $C_x  \leq_{{\cal L}} C_y  \leq_{{\cal L}}  C_z \vee C_y$ and using the definition of $R_{\cal L}$ on the vertices $x$, $z$ and the cliques $C_x$, $C_z \vee_{\cal L} C_y$, we deduce that $xR_{\cal L}z$ which establishes its transitivity.
\end{proof}
In fact with Claim \ref{giveorient}, we have shown that  $R_{\cal L}$  is a transitive orientation of $\overline{G}$,
therefore $G$ is a cocomparability graph.
\end{proof}

Unfortunately as can be seen in Figure \ref{notMA}, not every  lattice ${\cal L}$ satisfying the conditions (i) and (ii)  of the previous theorem corresponds to a maximal antichain lattice ${\cal MA}(P)$ for some partial order  $P$ that gives a transitive orientation of $\overline{G}$. However by adding a simple condition, we can characterize when a lattice ${\cal L}$ is a lattice ${\cal MA}(P)$.

\noindent

 
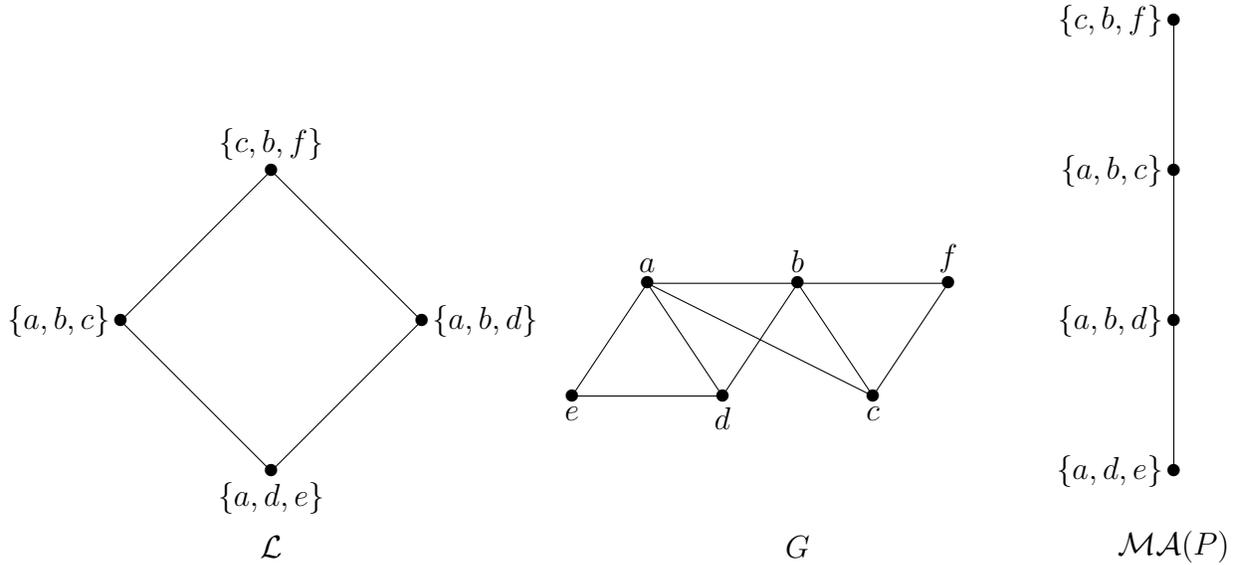
\begin{figure}[ht] 
\begin{center}
\begin{tikzpicture}
\coordinate(1) at (-4,0);
\coordinate(2) at (-6,2);
\coordinate(3) at (-2,2);
\coordinate(4) at (-4, 4);

\draw(1)node[below]{$\{a, d, e\}$} node{$\bullet$};
\draw(2)node[left]{$\{a, b, c\}$} node{$\bullet$};
\draw(3)node[right]{$\{a, b, d\}$} node{$\bullet$};
\draw(4)node[above]{$\{c, b, f\}$} node{$\bullet$};

\draw (1)--(2)--(4);
\draw (1)--(3)--(4);

\node  at (-4,-1) [scale=1] {${\cal L}$};

\coordinate(5) at (0,1);
\coordinate(6) at (2,1);
\coordinate(7) at (4,1);
\coordinate(8) at (1,2.5);
\coordinate(9) at (3,2.5);
\coordinate(10) at (5,2.5);

\draw(5)node[below]{$e$} node{$\bullet$};
\draw(6)node[below]{$d$} node{$\bullet$};
\draw(7)node[below]{$c$} node{$\bullet$};
\draw(8)node[above]{$a$} node{$\bullet$};
\draw(9)node[above]{$b$} node{$\bullet$};
\draw(10)node[above]{$f$} node{$\bullet$};

\draw (8)--(5)--(6)--(8)--(9)--(10);
\draw (8)--(7)--(10);
\draw (6)--(9)--(7);

\node  at (3, -1) [scale=1] {$G$};

\node  at (8,-1) [scale=1] {${\cal MA}(P)$};

\coordinate(11) at (8,0);
\coordinate(12) at (8,2);
\coordinate(13) at (8,4);
\coordinate(14) at (8,6);

\draw(11)node[left]{$\{a, d, e\}$} node{$\bullet$};
\draw(12)node[left]{$\{a, b, d\}$} node{$\bullet$};
\draw(13)node[left]{$\{a, b, c\}$} node{$\bullet$};
\draw(14)node[left]{$\{c, b, f\}$} node{$\bullet$};

\draw (11)--(12)--(13)--(14);

\end{tikzpicture}
\caption{\small{A graph $G$ and a lattice ${\cal L}$ on ${\cal C}(G)$ that satisfies condition (i) and (ii) of Theorem \ref{cliquecocomp}. But ${\cal L}$ is not isomorphic to the lattice ${\cal MA}(P)$ for any partial order $P$ that corresponds to a transitive orientation of $\overline G$. Since G is a prime interval graph, it has only one  transitive orientation (up to reversal) which is an interval order and its maximal antichain lattice is a chain.}}\label{notMA}
\end{center}

\end{figure}

\begin{theorem}\label{th:diffma}
	Let	$G$ be a cocomparability graph and let ${\cal L}$ be a lattice structure on ${\cal C}(G)$ satisfying conditions (i) and  (ii) of theorem \ref{cliquecocomp},
then ${\cal L}$ is isomorphic to a lattice ${\cal MA}(P)$ with $P$ a transitive orientation of $\overline{G}$ if and only if the following condition (iii) is also satisfied: 
\begin{description}
\item[(iii)]   For every $A,B \in {\cal C}(G)$, $(A \cap B) \subseteq (A \vee_{\cal L} B)$ and $(A \cap B) \subseteq (A \wedge_{\cal L} B)$.
\end{description}
\end{theorem}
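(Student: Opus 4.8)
The plan is to establish both directions of the biconditional, treating condition (iii) as the distinguishing feature of genuine maximal-antichain lattices among all lattices on ${\cal C}(G)$ satisfying (i) and (ii).

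For the forward direction, suppose ${\cal L} \cong {\cal MA}(P)$ for some transitive orientation $P$ of $\overline{G}$; we must verify (iii). This should follow directly from the definitions of $\wedge_{{\cal MA}(P)}$ and $\vee_{{\cal MA}(P)}$ given in the excerpt: by the displayed formula for the infimum, $A \wedge_{{\cal MA}(P)} B \supseteq (A\cap B) \cup S_{min}(A,B) \cup S_{min}(B,A)$, so in particular $(A \cap B) \subseteq (A \wedge_{{\cal MA}(P)} B)$, and symmetrically $(A\cap B) \subseteq (A \vee_{{\cal MA}(P)} B)$ from the supremum formula. So the forward direction is essentially a one-line observation once the definitions are unwound.

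For the reverse direction — the substantive part — assume ${\cal L}$ satisfies (i), (ii) and (iii). Apply Claim \ref{giveorient} to obtain the partial order $R_{\cal L}$, a transitive orientation of $\overline{G}$; set $P = R_{\cal L}$. The goal is to exhibit an isomorphism ${\cal L} \to {\cal MA}(P)$. The natural candidate is the identity on ${\cal C}(G)$: every element of ${\cal L}$ is a maximal clique of $G$, hence a maximal antichain of $P$ (since $P$ orients $\overline{G}$, cliques of $G$ are antichains of $P$, and maximality is preserved because $P$ has exactly the comparabilities of $\overline{G}$). So I would first check that the map is well-defined and injective, and then argue it is surjective — i.e., that \emph{every} maximal antichain of $P$ already appears as an element of ${\cal L}$. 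Surjectivity is where I expect the main obstacle: one must rule out "extra" maximal antichains of $R_{\cal L}$ not listed in ${\cal L}$, and this is precisely what fails in Figure \ref{notMA} without condition (iii). The argument should run: given a maximal antichain $M$ of $P$, use condition (iii) together with (i) and (ii) to locate $M$ as a supremum/infimum combination of cliques in ${\cal L}$, or build it up step by step; condition (iii) guarantees that common elements $A \cap B$ are retained in both $A \vee_{\cal L} B$ and $A \wedge_{\cal L} B$, which is exactly what is needed so that the lattice operations in ${\cal L}$ mimic those of ${\cal MA}(P)$ (compare the $Max(Inc(\cdot))$ completion in the definition of $\wedge_{{\cal MA}(P)}$).

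Finally I would verify that the identity map is order-preserving in both directions: if $A \leq_{\cal L} B$ then $A \leq_{{\cal MA}(P)} B$, using the definition of $R_{\cal L}$ (every $x \in A$ satisfies $xR_{\cal L}y$ or $x = y \in B$ for a suitable $y \in B$... more carefully, for $x \in A - B$ one uses condition (ii) and (i) to find a witness in $B$ above $x$); and conversely, if $A \leq_{{\cal MA}(P)} B$ with both in ${\cal L}$, then for each $x \in A$ there is $y \in B$ with $xR_{\cal L}y$, which by the definition of $R_{\cal L}$ means there are cliques $C' \leq_{\cal L} C''$ with $x \in C'$, $y \in C''$, and one must propagate this back to $A \leq_{\cal L} B$ using conditions (i)–(iii) — this coherence check is routine but fiddly, and together with surjectivity constitutes the bulk of the work. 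Once the identity is shown to be an order-isomorphism, it automatically preserves meets and joins, completing the proof.
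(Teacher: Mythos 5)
Your overall skeleton matches the paper's: the forward direction is indeed a one-line unwinding of the definitions of $\wedge_{{\cal MA}(P)}$ and $\vee_{{\cal MA}(P)}$, and for the converse you correctly take $P=R_{\cal L}$ from Claim \ref{giveorient} and aim to show that the identity on ${\cal C}(G)$ is an order isomorphism onto ${\cal MA}(R_{\cal L})$. However, you have misplaced where the difficulty (and the use of condition (iii)) lies. Surjectivity is automatic and needs no work: the proof of Claim \ref{giveorient} shows that $R_{\cal L}$ orients \emph{every} non-edge of $G$, so two distinct vertices are incomparable in $R_{\cal L}$ exactly when they are adjacent in $G$, and hence the maximal antichains of $R_{\cal L}$ are precisely the maximal cliques of $G$ --- there are no ``extra'' antichains to rule out. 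Figure \ref{notMA} is not a failure of surjectivity but of order-compatibility: the same four cliques carry two different orders.

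The genuine gap is the step you dismiss as ``routine but fiddly'': showing that $A \leq_{{\cal MA}(R_{\cal L})} B$ implies $A \leq_{\cal L} B$. This is the heart of the theorem and exactly where (iii) enters. The paper argues as follows: suppose $A \parallel_{\cal L} B$ although every $x\in A$ has a witness $y\in B$ with $xR_{\cal L}y$. For $x\in A-B$ the witness $y$ lies in $B-A$ and $xy\notin E$; by (ii), $x$ lies in $A\vee_{\cal L}B$ or in $A\wedge_{\cal L}B$, and the first is impossible since $B\leq_{\cal L}A\vee_{\cal L}B$ would give $yR_{\cal L}x$ and hence $x=y$ by antisymmetry. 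So $A-B\subseteq A\wedge_{\cal L}B$; now (iii) supplies $A\cap B\subseteq A\wedge_{\cal L}B$, whence $A\subseteq A\wedge_{\cal L}B$, and maximality of the clique $A$ forces $A=A\wedge_{\cal L}B\leq_{\cal L}B$, a contradiction. Without spelling out this argument (or an equivalent one), your proof never actually invokes hypothesis (iii) in the reverse direction, so it cannot be complete --- as Figure \ref{notMA} shows, conditions (i) and (ii) alone do not suffice.
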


\begin{proof}
	Suppose that ${\cal L}$ is isomorphic to a lattice ${\cal MA}(P)$ with $P$ a transitive orientation of $\overline{G}$. It is clear that (iii) is satisfied using the definition of $\wedge_{{\cal MA}(P)}$ and $\vee_{{\cal MA}(P)}$.

	Conversely, let us consider the partial order relation $R_{\cal L}$ defined in  claim \ref{giveorient}. We recall that $R_{\cal L}$ is defined on $\V$ as follows: $xR_{\cal L}y$ if and only if $xy \notin \E$ and there are maximal cliques $C',$ $C''$ of G with $C' \leq_{\cal L} C''$ and $x\in C'$, $y\in C''$. 
	 
	 We now prove that ${\cal L}$ is isomorphic to the lattice ${\cal MA}(R_{\cal L})$. So for this purpose we will show that for two maximal cliques $A,$ $B$, $A \leq_{\cal L} B$ if and only if $\forall x \in A$, $\exists y \in B$ with $xR_{\cal L}y$ which is the definition of ${\cal MA}(R_{\cal L})$. First we recall that since $R_{\cal L}$ is a transitive orientation of $\overline{G}$, any maximal clique of $G$ corresponds to a maximal antichain in $R_{\cal L}$. Because both ${\cal L}$ and $R_{\cal L}$ are partial orders, the relations are reflexive and the case where $A=B$ is clear.

	Let $A,B$ be two different maximal cliques of $G$ such that $A\leq_{\cal L} B$. Then $\forall x \in A$-$B$, $x$ cannot be universal to $B$-$A$ because $B$ is a maximal clique. Therefore there exists $y \in B$-$A$ such that $xy \notin \E$ and so $y$ is comparable with $x$ in $R_{\cal L}$. Furthermore we have that $x \neq y$ because $x \in A$-$B$ and $y \in B$-$A$. Using our definition of $R_{\cal L}$ on $x$, $y$ and the cliques $A$, $B$ we see that $xR_{\cal L}y$. For all $x \in A \cap B$ we also have that $xR_{\cal L}x$ and so if $A \leq_{\cal L} B$ then $\forall x \in A$, $\exists y \in B$ with $xR_{\cal L}y$.

Let $A,B$ be two different maximal cliques of $G$, such that $\forall x \in A$, $\exists y \in B$ with $xR_{\cal L}y$. For the sake of contradiction assume that $A \parallel_{\cal L} B$. Let us consider  $A \vee_{\cal L} B$. For a vertex $x \in A$-$B$, 
we know that there exists $y \in B$-$A$ such that $xR_{\cal L}y$ and so $x$ must belong to $(A \wedge_{\cal L} B)$  otherwise if $x \in (A \vee_{\cal L} B)$, using the definition of $R_{\cal L}$ on $A$, $(A \vee_{\cal L} B)$ we have that $yR_{\cal L}x$ and so $x=y$ which contradicts our choice of $x$ and $y$. But now we have that $A$-$B \subseteq (A \wedge_{\cal L} B)$ and using condition (iii) $(A \cap B) \subseteq (A \vee_{\cal L} B)$ so $A \subseteq (A \wedge_{\cal L} B)$. So either $A = (A \wedge_{\cal L} B)$ and so $A \leq_{\cal L} B$ which contradicts our choice of $A$ and $B$, or $A \subsetneq (A \wedge_{\cal L} B)$ which contradicts that A is a maximal antichain.
\end{proof}

The following corollary enlightens  the relationship between a lattice satisfying (i) and (ii) and a lattice satisfying (i),(ii) and (iii). 

\begin{corollary}
For every  lattice ${\cal L}$ associated to the maximal cliques of a cocomparability graph $G$ and  satisfying (i) and (ii) there exists a transitive orientation $R_{\cal L}$ of $\overline{G}$ such that ${\cal MA}(R_{\cal L})$ is an extension of ${\cal L}$. 
\end{corollary}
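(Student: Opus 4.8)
The plan is to show that the transitive orientation $R_{\cal L}$ of $\overline{G}$ produced in Claim \ref{giveorient} is itself the required witness, so that the corollary is essentially a repackaging of the forward implication already established inside the proof of Theorem \ref{th:diffma}. First I would invoke Theorem \ref{cliquecocomp} (more precisely Claim \ref{giveorient}): since ${\cal L}$ is a lattice on ${\cal C}(G)$ satisfying (i) and (ii), the relation $R_{\cal L}$ on $V(G)$ defined by $xR_{\cal L}y$ iff $xy \notin \E$ and there exist maximal cliques $C' \leq_{\cal L} C''$ with $x \in C'$, $y \in C''$ is a partial order, and in fact a transitive orientation of $\overline{G}$. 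This is exactly the orientation the statement asks for; what remains is to compare the poset ${\cal MA}(R_{\cal L})$ with ${\cal L}$.

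Next I would check that the comparison is meaningful, i.e.\ that the two orders live on the same ground set. Because $R_{\cal L}$ orients precisely the non-edges of $G$, a set of vertices is an antichain of $R_{\cal L}$ exactly when it is a clique of $G$; hence the maximal antichains of $R_{\cal L}$ are exactly the maximal cliques of $G$, so ${\cal MA}(R_{\cal L})$ and ${\cal L}$ share the ground set ${\cal C}(G)$. It then makes sense (in the sense of the definition of an extension) to ask whether every ${\cal L}$-comparability is an ${\cal MA}(R_{\cal L})$-comparability, and this is precisely what ``${\cal MA}(R_{\cal L})$ is an extension of ${\cal L}$'' means here.

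Finally I would prove the single implication $A \leq_{\cal L} B \Rightarrow A \leq_{{\cal MA}(R_{\cal L})} B$. The case $A = B$ is reflexivity. For $A \neq B$, take any $x \in A$: if $x \in A \cap B$ then $x R_{\cal L} x$ with $x \in B$; if $x \in A \setminus B$ then $x$ is adjacent to all of $A \cap B$ (as $A$ is a clique), so a non-neighbour of $x$ in $B$ must lie in $B \setminus A$, and such a non-neighbour $y$ exists, since otherwise $B \cup \{x\}$ would be a clique contradicting maximality of $B$; then $xy \notin \E$ together with $A \leq_{\cal L} B$, $x \in A$, $y \in B$ yields $x R_{\cal L} y$ by the definition of $R_{\cal L}$. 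In every case each $x \in A$ has some $y \in B$ with $x R_{\cal L} y$, which by the definition of $\leq_{{\cal MA}(R_{\cal L})}$ (and Lemma \ref{reversedef}, since $A,B$ are maximal antichains of $R_{\cal L}$) is exactly $A \leq_{{\cal MA}(R_{\cal L})} B$, so ${\cal MA}(R_{\cal L})$ extends ${\cal L}$. I do not expect a genuine obstacle: the only care needed is the bookkeeping identifying both orders on ${\cal C}(G)$, and the observation — consistent with Theorem \ref{th:diffma} — that the extension is proper exactly when condition (iii) fails (when it holds, ${\cal L} \cong {\cal MA}(R_{\cal L})$ by Theorem \ref{th:diffma}).
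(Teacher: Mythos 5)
Your proposal is correct and follows essentially the same route as the paper: the paper's proof of this corollary simply invokes Claim \ref{giveorient} for $R_{\cal L}$ and then cites the step inside the proof of Theorem \ref{th:diffma} showing that $A \leq_{\cal L} B$ implies $\forall x \in A$, $\exists y \in B$ with $xR_{\cal L}y$, which is exactly the argument you reconstruct (non-neighbour in $B \setminus A$ for $x \in A \setminus B$, reflexivity for $x \in A \cap B$). Your additional remarks about the ground sets coinciding and about when the extension is proper are consistent with the paper but not needed beyond that.
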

\begin{proof}
Using  claim \ref{giveorient}, for a given lattice structure ${\cal L}$ associated with a cocomparability graph and satisfying the conditions of theorem \ref{cliquecocomp}, we can define a partial order $R_{\cal L}$. From the previous proof, we know that if $A \leq_{\cal L} B$ then $\forall x \in A$, $\exists y \in B$ with $xR_{\cal L}y$ and so $A\leq_{{\cal MA}(R_{\cal L})}B$. Therefore  ${\cal MA}(R_{\cal L})$ is an extension of ${\cal L}$. 
\end{proof}

	It should be noticed that the last two theorems \ref{cliquecocomp} and \ref{th:diffma} can be easily rewritten into a characterization of comparability graphs just by exchanging maximal cliques into maximal independent sets.
Let us now study the particular case of interval graphs.

\begin{corollary}\cite{GH64}\label{intervalcharac}
    $G$ is an interval graph if and only if ${\cal C}(G)$ can be equipped with a total order $T$ satisfying for every  $C_{i}, C_{j}, C_{k}$ maximal cliques such that  $C_{i} \leq_{T} C_{j} \leq_{T} C_{k}$, then $\intpreq{C_{i}}{C_j}{C_{k}}$.
\end{corollary}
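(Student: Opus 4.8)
The plan is to obtain this as a direct consequence of Theorem~\ref{cliquecocomp} together with Property~\ref{chain}, rather than redoing the classical Gilmore--Hoffman argument from scratch. Note first that a total order $T$ on ${\cal C}(G)$ satisfying the stated condition is nothing but the assertion that, for each vertex $x$, the maximal cliques containing $x$ occur consecutively along $T$ — that is, condition (iii) of Theorem~\ref{th:GilHof}. So the real content is just to locate this one-dimensional case inside the lattice picture built in this section.

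For the forward direction, suppose $G$ is an interval graph. Then $\overline{G}$ admits a transitive orientation $P$ which is an interval order. By Property~\ref{chain} the lattice ${\cal MA}(P)$ is a chain, and since the maximal antichains of $P$ are exactly the maximal cliques of $G$, this chain is a total order $T$ on ${\cal C}(G)$. Proposition~\ref{consecutivity}, applied inside ${\cal MA}(P)$, states precisely that $C_i \leq_T C_j \leq_T C_k$ forces $C_i \cap C_k \subseteq C_j$, which is the required property.

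For the reverse direction, suppose ${\cal C}(G)$ carries such a total order $T$. Regard $T$ as a lattice ${\cal L}$ with $A \wedge_{\cal L} B = \min_T(A,B)$ and $A \vee_{\cal L} B = \max_T(A,B)$. Condition (i) of Theorem~\ref{cliquecocomp} is then exactly the hypothesis on $T$, and condition (ii) is trivial because $\{A \wedge_{\cal L} B,\, A \vee_{\cal L} B\} = \{A,B\}$ in a chain, so $A \cup B \subseteq A \cup B$. Hence Theorem~\ref{cliquecocomp} already gives that $G$ is a cocomparability graph, and Claim~\ref{giveorient} provides a transitive orientation $R_{\cal L}$ of $\overline{G}$. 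Now invoke the corollary following Theorem~\ref{th:diffma}: ${\cal MA}(R_{\cal L})$ is an extension of ${\cal L}$, necessarily on the same ground set ${\cal C}(G)$; but an extension of a total order on the same ground set can introduce no new comparability, so ${\cal MA}(R_{\cal L}) = {\cal L}$ is a chain. Property~\ref{chain} then yields that $R_{\cal L}$ is an interval order, and therefore $G$, being the complement of the comparability graph $\overline{G}$ oriented by the interval order $R_{\cal L}$, is an interval graph.

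The argument is thus a straightforward assembly of results already proved, and I expect no genuine difficulty. The one point that deserves care is in the reverse direction, where what is needed is that ${\cal MA}(R_{\cal L})$ is a \emph{chain}, not merely an extension of one: this is precisely where the observation that ${\cal L}$ is already total — so that any extension of it on ${\cal C}(G)$ must coincide with it — does the work. Everything else reduces to unwinding the definitions of $\wedge_{\cal L}$, $\vee_{\cal L}$ and of ``interval order''.
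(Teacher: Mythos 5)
Your proof is correct and follows essentially the same route as the paper: both directions specialize the lattice machinery of this section (Theorem~\ref{cliquecocomp}, Theorem~\ref{th:diffma} and its corollary) to the case where the lattice is a chain and then invoke Property~\ref{chain}. You are in fact somewhat more explicit than the paper in the reverse direction, where you correctly note that ${\cal MA}(R_{\cal L})$, being an extension of the total order ${\cal L}$ on the same ground set ${\cal C}(G)$, must coincide with it and hence be a chain.
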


\begin{proof}
	Using the last two theorems, we know that if $G$ is a cocomparability graph then the set of maximal cliques of $G$ can be equipped with a lattice structure ${\cal L}$ satisfying conditions (i), (ii), (iii) and isomorphic to  a lattice ${\cal MA}(P)$ with $P$ a transitive orientation of $\overline{G}$.
    From property \ref{chain} ${\cal MA}(P)$ is a chain if and only if $P$ is an interval order. Since ${\cal MA}(P)$ is a chain, it should be noticed that conditions (ii) and (iii) are always satisfied and can be omitted.  Therefore only condition (i) remains. 
\end{proof}

Applied to permutation graphs\footnote{A graph is a permutation graph if and only if it is the intersection of line segments whose endpoints lie on two parallel lines.} the characterization theorems yield:

\begin{corollary}
$G$ is a permutation graph if and only if there exists a lattice structure satisfying (i), (ii)  and (iii) on the set of its maximal cliques and a  lattice structure satisfying (i), (ii) and (iii) on the set of its maximal independent sets.
\end{corollary}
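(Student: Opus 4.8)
The plan is to reduce the statement to the two characterization theorems \ref{cliquecocomp} and \ref{th:diffma}, together with the classical fact (see \cite{GOL,BLS99}) that $G$ is a permutation graph if and only if both $G$ and $\overline{G}$ are comparability graphs; equivalently, if and only if $G$ is simultaneously a comparability graph and a cocomparability graph.

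First I would record the ``packaged'' form of Theorems \ref{cliquecocomp} and \ref{th:diffma} (already used implicitly in the proof of Corollary \ref{intervalcharac}): $G$ is a cocomparability graph if and only if ${\cal C}(G)$ can be equipped with a lattice structure satisfying conditions (i), (ii) and (iii). For the forward direction, choose a transitive orientation $P$ of $\overline{G}$ and set ${\cal L}={\cal MA}(P)$; its elements are exactly the maximal antichains of $P$, which are precisely the maximal cliques of $G$, so ${\cal L}$ is a lattice structure on ${\cal C}(G)$. Theorem \ref{cliquecocomp} supplies (i) and (ii), and the easy direction of Theorem \ref{th:diffma} supplies (iii). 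For the backward direction, conditions (i) and (ii) alone already let us apply Theorem \ref{cliquecocomp} and conclude that $G$ is a cocomparability graph, so in particular the existence of a lattice satisfying (i), (ii) and (iii) is sufficient. Next I would invoke the dual of this equivalence, obtained as noted immediately after Theorem \ref{th:diffma} by exchanging ``maximal clique'' with ``maximal independent set'' throughout --- equivalently, by applying the statement above to $\overline{G}$ and using that an antichain of a transitive orientation $Q$ of $G$ is an independent set of $G$, so that the maximal antichains of $Q$ are exactly the maximal independent sets of $G$. This yields: $G$ is a comparability graph if and only if the set of maximal independent sets of $G$ can be equipped with a lattice structure satisfying the independent-set versions of conditions (i), (ii) and (iii).

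Combining the two equivalences with the classical characterization of permutation graphs then finishes the proof: $G$ is a permutation graph iff $G$ is both a comparability and a cocomparability graph iff ${\cal C}(G)$ carries a lattice satisfying (i), (ii), (iii) \emph{and} the maximal independent sets of $G$ carry a lattice satisfying (i), (ii), (iii). The only point I would be careful to state explicitly --- and the one that might otherwise look like a gap --- is that Theorem \ref{cliquecocomp} needs only (i) and (ii) for the reverse implication, whereas the forward construction via ${\cal MA}(P)$ automatically also satisfies (iii); this is exactly what keeps the three conditions together (rather than just the first two) equivalent to cocomparability, and keeps the clique side symmetric with the independent-set side. No genuine computation is needed beyond this bookkeeping.
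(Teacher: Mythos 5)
Your proposal is correct and follows essentially the same route as the paper: the paper's proof is exactly the one-line reduction to the Dushnik--Miller characterization of permutation graphs as graphs that are simultaneously comparability and cocomparability, combined with Theorems \ref{cliquecocomp} and \ref{th:diffma} applied to $G$ and to $\overline{G}$. Your explicit remark that the reverse implication needs only conditions (i) and (ii) while the forward construction via ${\cal MA}(P)$ supplies (iii) for free is a useful clarification that the paper leaves implicit.
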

\begin{proof}
We know from \cite{DM41} that $G$ is a permutation graph if and only if $G$ is a cocomparability and a comparability graph and the result follows.
\end{proof}


\subsection{Maximal chordal and interval subgraphs}\label{sec:maximalinter}

As mentioned in theorem \ref{extintmin}, for any partial order $P$ there is a bijection between maximal chains in ${\cal MA}(P)$ and minimal interval extensions of $P$. Therefore theorem \ref{th:diffma} also yields a bijection between maximal interval subgraphs of a cocomparability graph and the minimal interval extensions of $P$ (acyclic transitive orientations of $\overline{G}$). This bijection will be heavily used in the algorithms of the following sections. It should also be noticed that theorem \ref{th:diffma} gives another proof of the fact that the number of minimal interval extensions of a partial order is a comparability invariant (i.e., it does not depend on the chosen acyclic transitive orientation).

Let $G$ be a cocomparability graph and $\sigma$ a cocomp ordering of $G$.  We define $P_\sigma$ as the transitive orientation of $\overline{G}$ obtained using $\sigma$. 
For a chain $\capc=C_1 <_{{\cal MA}(P_\sigma)} C_2 \dots  <_{{\cal MA}(P_\sigma)} C_k$, $G_\capc=(\V,E(\capc))$ denotes the graph formed by the cliques $C_1,\dots,C_k$. For a vertex $x$, $N_\capc(x)$ is the neighborhood of $x$ in the graph $G_\capc$.  

\begin{proposition}\label{brief}
Consider a maximal chain of ${\cal MA}(P_\sigma)$, $\capc=C_1  \prec_{{\cal MA}(P_\sigma)} C_2 \dots \prec_{{\cal MA}(P_\sigma)} C_k$.  Such a chain forms a maximal interval subgraph of $G$.
\end{proposition}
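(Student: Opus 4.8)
The plan is to prove Proposition \ref{brief} in two parts: first that $G_\capc$ is an interval subgraph of $G$, and second that it is \emph{maximal} among interval subgraphs, i.e. that adding any nonempty set of edges of $G$ not already in $E(\capc)$ destroys the interval property.

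\textbf{Step 1: $G_\capc$ is an interval graph.} The natural approach is to use Corollary \ref{intervalcharac}: it suffices to exhibit a total order on the maximal cliques of $G_\capc$ in which, for every vertex, the cliques containing it occur consecutively. The obvious candidate is the restriction of the chain $\capc = C_1 \prec C_2 \prec \dots \prec C_k$ itself. Two things need checking here. First, one must verify that $C_1, \dots, C_k$ really are the maximal cliques of $G_\capc$ (each $C_i$ is a clique in $G$, hence in $G_\capc$ if all its edges are present — which they are since $C_i \in \capc$; and maximality/non-redundancy must be argued, though even if some $C_i \subseteq C_j$ the consecutiveness argument is unaffected). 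Second, and this is the heart of Step 1, one must show the consecutiveness property: if $x \in C_i$ and $x \in C_k$ with $i < k$, then $x \in C_j$ for all $i < j < k$. But this is exactly Proposition \ref{consecutivity} applied in ${\cal MA}(P_\sigma)$ to the triple $C_i \leq C_j \leq C_k$: it gives $C_i \cap C_k \subseteq C_j$, so $x \in C_j$. Hence the chain order witnesses condition (i) of Corollary \ref{intervalcharac}, and $G_\capc$ is an interval graph; since $E(\capc) \subseteq E(G)$ by construction, it is an interval subgraph of $G$.

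\textbf{Step 2: maximality.} Here I would invoke Theorem \ref{extintmin} together with Theorem \ref{th:diffma}. By Theorem \ref{th:diffma}, ${\cal L} = {\cal MA}(P_\sigma)$ is (isomorphic to) the maximal antichain lattice of the transitive orientation $P_\sigma$ of $\overline{G}$. By Theorem \ref{extintmin}, the maximal chains of ${\cal MA}(P_\sigma)$ are in bijection with the minimal interval order extensions of $P_\sigma$. Given a maximal chain $\capc$, let $Q$ be the corresponding minimal interval extension of $P_\sigma$; then $\overline{Q}$ (the comparability-graph complement, i.e. the interval graph whose complement order is $Q$) is an interval subgraph of $G$ whose edge set is precisely $E(\capc)$ — one should check this identification, namely that the non-edges of $G_\capc$ are exactly the comparabilities of $Q$. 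Now suppose for contradiction that $G_\capc$ is not a maximal interval subgraph: then there is an interval graph $H = (\V, E(\capc) \cup S)$ with $\emptyset \neq S \subseteq E(G) \setminus E(\capc)$ and $H$ an interval subgraph of $G$. Then $\overline{H}$ is an interval order extending $Q$ strictly (fewer comparabilities than $Q$, since edges were added), and $\overline{H}$ still extends $P_\sigma$ because $H \subseteq G$ means every non-edge of $H$ is a non-edge of $G$, hence a comparability of $P_\sigma$... wait, that direction needs care: $H$ having \emph{more} edges than $G_\capc$ means $\overline H$ has \emph{fewer} comparabilities, so $\overline H$ is a strictly smaller interval extension of $P_\sigma$ contradicting minimality of $Q$. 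This contradicts the minimality of the interval extension $Q$ guaranteed by Theorem \ref{extintmin}, so no such $S$ exists and $G_\capc$ is a maximal interval subgraph.

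\textbf{Main obstacle.} The delicate point is the clean back-and-forth translation in Step 2 between "interval subgraphs of $G$ with edge set containing $E(\capc)$" and "interval extensions of $P_\sigma$ contained in $Q$"; one must be careful that adding edges to the graph corresponds to \emph{removing} comparabilities from the order, that the resulting order is still an extension of $P_\sigma$ (equivalently the graph is still a subgraph of $G$), and that the minimal interval extension associated to $\capc$ by Theorem \ref{extintmin} indeed has complement $G_\capc$ rather than some other graph with the same maximal cliques. Verifying $E(G_\capc)$ equals exactly the set of non-comparabilities of the chain (i.e. that $\{x,y\}$ is an edge of $G_\capc$ iff some $C_i \in \capc$ contains both $x$ and $y$, and this matches the minimal extension) is routine but is where the argument could go wrong if stated carelessly. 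An alternative, more self-contained route for Step 2 avoids Theorem \ref{extintmin}: take any interval subgraph $H \supseteq G_\capc$, use Corollary \ref{intervalcharac} to get a clique chain of $H$, and argue via conditions (i)–(iii) of Theorems \ref{cliquecocomp}–\ref{th:diffma} that this chain refines to a chain of ${\cal MA}(P_\sigma)$ already containing $\capc$, forcing $H = G_\capc$ by maximality of $\capc$ in the lattice — but this requires more work to make precise, so I would present the Theorem \ref{extintmin} route as the main proof.
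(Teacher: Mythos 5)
Your Step 1 is correct and is exactly the paper's argument: Proposition \ref{consecutivity} gives the consecutiveness of the cliques along the chain, and Corollary \ref{intervalcharac} then yields that $G_\capc$ is an interval graph. (The side points you raise are real but easily handled: the $C_i$ are pairwise distinct maximal cliques of $G$, so none contains another; by the Helly property of the intervals $[first[x],last[x]]$ every clique of $G_\capc$ lies inside some $C_i$, so the $C_i$ are precisely the maximal cliques of $G_\capc$; and every vertex appears in some $C_i$, which follows from Proposition \ref{structchain} together with Lemma \ref{defMA}, since a vertex missed by the whole chain would sit strictly between two elements that are supposed to form a cover of $P_\sigma$.)

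Step 2 contains a genuine gap, and it is precisely the one you flag yourself. To contradict the minimality of the interval extension $Q$ attached to $\capc$ by Theorem \ref{extintmin}, you must exhibit an interval order $Q'$ with $P_\sigma \subseteq Q' \subsetneq Q$, extracted from the hypothetical interval graph $H=(\V, E(\capc)\cup S)$. What you actually have is only that $\overline{H}$ is a comparability graph sandwiched between $\overline{G}$ and $\overline{G_\capc}$. Neither natural candidate for $Q'$ works without further argument: an interval-order orientation of $\overline{H}$ obtained from an arbitrary interval ordering of $H$ has no reason to agree with $P_\sigma$ on the edges of $\overline{G}$; and the restriction of $Q$ to the non-edges of $H$ need not be transitive, since one can a priori have $x<_Q y<_Q z$ with $xy,yz\notin E(H)$ but $xz\in S\subseteq E(H)$. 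So the sentence ``$\overline{H}$ is a strictly smaller interval extension of $P_\sigma$'' is exactly the missing lemma rather than a consequence of what precedes it. For what it is worth, the paper's own proof of Proposition \ref{brief} is even terser (it only cites Corollary \ref{intervalcharac}, which says nothing about maximality); the maximality is really established only in the proof of Theorem \ref{interval-chordal}, by a direct argument showing that adding any nonempty $S$ forces a long induced cycle in $H$, so that $H$ is not even chordal. That argument relies on the covering structure of the maximal chain (Proposition \ref{structchain}, Lemma \ref{defMA}, Claim \ref{middleclique}) and is essentially the ``more self-contained route'' you sketch at the end; if you want a complete proof of maximality, that is the route to carry out.
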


\begin{proof}

The sequence $C_1, C_2 \dots C_k$ forms a chain of maximal cliques that respects proposition \ref{consecutivity} (consecutiveness condition). So using corollary \ref{intervalcharac}, we deduce that this chain forms a maximal interval subgraph of $G$. 
\end{proof}

Therefore, we can see a cocomparability graph as a union of interval subgraphs.

In this subsection, we now  show that for cocomparability graphs a maximal chain of ${\cal MA}(P)$ not only forms a maximal interval subgraph but also a maximal chordal subgraph. 

\begin{proposition}\label{structchain}
	Let G be a cocomparability graph and let $\sigma$ be a cocomp ordering.  Then ${\cal C}$
$ = C_1 <_{{\cal MA}(P_\sigma)} C_2 \dots <_{{\cal MA}(P_\sigma)} C_k$ is a maximal chain of ${\cal MA}(P_\sigma)$ if and only if the following conditions are satisfied \begin{itemize}
	\item $C_1$ is the set of sources of $P_\sigma$  
	\item $1<i\leq k$, $C_{i-1}  \prec_{{\cal MA}(P)} C_i$  (i.e., $C_i$ covers $C_{i-1}$)
	\item $C_k$ is the set of  sinks of $P_\sigma$
\end{itemize}
\end{proposition}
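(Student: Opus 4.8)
The plan is to prove both directions by unwinding the definition of a maximal chain in a lattice and combining it with the structural facts already established about ${\cal MA}(P_\sigma)$. The key observation is that ${\cal MA}(P_\sigma)$ has a least element $\hat 0$ and a greatest element $\hat 1$: indeed, the set of sources (minimal elements) of $P_\sigma$ is a maximal antichain that lies below every maximal antichain in the $\leq_{{\cal MA}(P_\sigma)}$ order (for any maximal antichain $A$ and any source $s$, either $s\in A$ or $s$ is comparable to some element of $A$, and being a source it must lie below it; apply Lemma~\ref{reversedef}), and dually the set of sinks is the top. So $C_1$ being the sources is equivalent to $C_1 = \hat 0$, and $C_k$ being the sinks is equivalent to $C_k = \hat 1$.

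For the forward direction, suppose ${\cal C} = C_1 <_{{\cal MA}(P_\sigma)} C_2 <_{{\cal MA}(P_\sigma)} \cdots <_{{\cal MA}(P_\sigma)} C_k$ is a maximal chain. A maximal chain in a finite lattice must contain both $\hat 0$ and $\hat 1$ (otherwise one could prepend $\hat 0$ or append $\hat 1$ and strictly enlarge the chain), and since the chain is linearly ordered these must be its endpoints; hence $C_1 = \hat 0$ is the set of sources and $C_k = \hat 1$ is the set of sinks. For the covering condition, if some $C_{i-1} \prec_{{\cal MA}(P_\sigma)} C_i$ failed, there would be a maximal antichain $D$ with $C_{i-1} <_{{\cal MA}(P_\sigma)} D <_{{\cal MA}(P_\sigma)} D$... more precisely $C_{i-1} <_{{\cal MA}(P_\sigma)} D <_{{\cal MA}(P_\sigma)} C_i$, and inserting $D$ between $C_{i-1}$ and $C_i$ yields a strictly larger chain, contradicting maximality.

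For the reverse direction, suppose the three bullet conditions hold. Then ${\cal C}$ is a chain (each $C_{i-1}$ is covered by, hence below, $C_i$), with $C_1 = \hat 0$ and $C_k = \hat 1$. To see it is maximal, suppose some maximal antichain $D$ could be inserted to form a longer chain, i.e. $D$ is comparable to every $C_i$ and $D \notin \{C_1,\dots,C_k\}$. Since $C_1 = \hat 0 \leq_{{\cal MA}(P_\sigma)} D \leq_{{\cal MA}(P_\sigma)} \hat 1 = C_k$, there is an index $i$ with $C_{i-1} \leq_{{\cal MA}(P_\sigma)} D \leq_{{\cal MA}(P_\sigma)} C_i$; as $D$ is distinct from both $C_{i-1}$ and $C_i$, we get $C_{i-1} <_{{\cal MA}(P_\sigma)} D <_{{\cal MA}(P_\sigma)} C_i$, contradicting $C_{i-1} \prec_{{\cal MA}(P_\sigma)} C_i$. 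Hence no such $D$ exists and the chain is maximal.

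The main obstacle, and the only place real work is needed, is establishing that the set of sources is the bottom element $\hat 0$ of ${\cal MA}(P_\sigma)$ (and dually for sinks); everything else is generic finite-lattice reasoning. This requires checking that the set $S$ of sources of $P_\sigma$ is genuinely a maximal antichain — it is an antichain since no source is comparable to another source, and it is maximal since any element $x\notin S$ has a predecessor, and following a maximal descending chain from $x$ reaches a source comparable to $x$, so $S\cup\{x\}$ is not an antichain — and then that $S \leq_{{\cal MA}(P_\sigma)} A$ for every maximal antichain $A$, which follows because for each $s\in S$ either $s\in A$ (so $s\leq_P s\in A$) or $s$ is comparable to some $a\in A$ and, being a source, $s\leq_P a$; this is exactly the defining condition of $\leq_{{\cal MA}(P_\sigma)}$. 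I would state this bottom/top identification as a short preliminary observation and then run the two-line lattice arguments above.
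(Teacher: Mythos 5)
Your proof is correct and follows essentially the same route as the paper's: both arguments reduce the statement to the facts that the set of sources (respectively sinks) is the minimum (respectively maximum) of ${\cal MA}(P_\sigma)$, and that consecutive covering is equivalent to the impossibility of inserting an intermediate maximal antichain. Your direct verification that $S \leq_{{\cal MA}(P_\sigma)} A$ for every maximal antichain $A$ (each source either lies in $A$ or lies below some element of $A$) is a slight streamlining of the paper's two-case contradiction argument, but the structure of the proof is the same.
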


\begin{proof}

	For the forward direction, let $\capc = C_1 <_{{\cal MA}(P_\sigma)} C_2 \dots <_{{\cal MA}(P_\sigma)} C_k$ be a maximal chain of cliques of ${\cal MA}(P_\sigma)$. Let $C_S$ be the set of sources of $P_\sigma$. Since every source is incomparable with all the other sources, $C_S$ is an antichain of $P_\sigma$.  Every element that is not a source is comparable to at least one source and so $C_S$ is a maximal antichain. We now show that for every maximal antichain $A$ of $P_\sigma$, $C_S \leq_{{\cal MA}(P_\sigma)} A$. Let $A$ be a maximal antichain of $P_\sigma$; in the case $A=C_S$ then  $C_S \leq_{{\cal MA}(P_\sigma)} A$ and so we take $A\neq C_S$. For the sake of contradiction assume that $C_S \not\leq_{{\cal MA}(P_\sigma)} A$. So we have two cases: either $A <_{{\cal MA}(P_\sigma)} C_S$ or $A \parallel_{{\cal MA}(P_\sigma)} C_S$. In the first case, let $y \in C_S-A$ and using  lemma \ref{reversedef} 
	on $A$ and $C_S$ we know there exists $x \in A-C_S$ such that $x <_{P_\sigma} y$. But now because $x <_{P_\sigma} y$ we contradict the fact that $y$ is a source. In the second case, we know that there exists $(A \wedge_{{\cal MA}(P)} C_S)$ such that $(A \wedge_{{\cal MA}(P)} C_S) \leq_{{\cal MA}(P_\sigma)} C_S$. Since $A \parallel_{{\cal MA}(P_\sigma)} C_S$ we have $(A \wedge_{{\cal MA}(P)} C_S) <_{{\cal MA}(P_\sigma)} C_S$. But now we are back in the first case, which again gives us a contradiction. So for every maximal antichain $A$ of $P_\sigma$, $C_S \leq_{{\cal MA}(P_\sigma)} A$ and so we have that $C_S \leq_{{\cal MA}(P_\sigma)} C_1$. Now if $C_S \neq C_1$ then we can add $C_S$ at the beginning of the chain $C_1 <_{{\cal MA}(P_\sigma)} C_2 \dots <_{{\cal MA}(P_\sigma)} C_k$, thereby contradicting the maximality of $\capc$. Thus $C_S = C_1$.

	Now assume for contradiction that for some $1<i\leq k$, $C_i$ does not cover $C_{i-1}$. Then there exists a maximal antichain $B$ of $P_\sigma$ such that $C_{i-1} <_{{\cal MA}(P_\sigma)} B <_{{\cal MA}(P_\sigma)} C_i$. But now the chain $C_1 \leq_{{\cal MA}(P_\sigma)} \dots C_{i-1} <_{{\cal MA}(P_\sigma)} B <_{{\cal MA}(P_\sigma)} C_i \dots \leq_{{\cal MA}(P_\sigma)} C_k$ contains $\capc$ as a subchain which contradicts the maximality of $\capc$.

	Let $C_P$ be the set of sinks of $P_\sigma$. Using the same argument as in the case of the set of sources, we can deduce that for every maximal antichain $A$ of $P_\sigma$, $A \leq_{{\cal MA}(P_\sigma)} C_P$. Now if $C_P \neq C_k$ then we can add $C_P$ at the end of the chain $C_1 <_{{\cal MA}(P_\sigma)} C_2 \dots <_{{\cal MA}(P_\sigma)} C_k$, thereby contradicting the maximality of $\capc$. Thus $C_P = C_k$.

	Conversely, assume for contradiction that $\capc$ is not a maximal chain of cliques. Then we can add a maximal clique $B$ to $\capc$. There are three cases: $B$ can be added at the beginning of $\capc$; $B$ can be added at the end of $\capc$ or $B$ can be added in the middle of $\capc$. In the first case we have $B <_{{\cal MA}(P_\sigma)} C_1$, which as shown previously contradicts $C_1$ being the set of sources. Similarly the case where $C_k <_{{\cal MA}(P_\sigma)} B$ contradicts $C_k$ being the set of sinks. In the last case, we have  $C_{i-1} <_{{\cal MA}(P_\sigma)} B <_{{\cal MA}(P_\sigma)} C_i$ for some index $i$ such that  $1<i\leq k$. But this contradicts $C_{i-1} \prec_{{\cal MA}(P)} C_i$, which concludes the proof.
\end{proof}

%
%

%
%

\begin{theorem}\label{interval-chordal}

Every maximal interval subgraph of a cocomparability graph is also a maximal chordal subgraph.
\end{theorem}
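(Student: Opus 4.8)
The plan is to use the clique--chain picture developed above. By Propositions \ref{brief} and \ref{structchain} together with Theorems \ref{extintmin} and \ref{th:diffma}, a maximal interval subgraph $H$ of $G$ is exactly a graph $G_\capc$ where $\capc = C_1 \prec_{{\cal MA}(P_\sigma)} \dots \prec_{{\cal MA}(P_\sigma)} C_k$ is a \emph{maximal} chain of ${\cal MA}(P_\sigma)$ for some cocomp ordering $\sigma$ of $G$; moreover $C_1,\dots,C_k$ are precisely the maximal cliques of $H$, each $C_i$ is also a maximal clique of $G$ (since $E(H)\subseteq\E$), the $C_i$ satisfy the consecutiveness of Proposition \ref{consecutivity}, and for every vertex $v$ the index set $I_v=\{\,i : v\in C_i\,\}$ is an interval of $\{1,\dots,k\}$ (condition (iii) of Theorem \ref{th:GilHof}). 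Since both notions in the statement distribute over connected components, I would first reduce to the case where $G$ is connected. An application of Lemma \ref{defMA} then shows that consecutive cliques $C_i,C_{i+1}$ of $\capc$ must intersect: if $C_i\cap C_{i+1}=\emptyset$, Lemma \ref{defMA} forces every $u\in C_i$ to be covered in $P_\sigma$ by every $w\in C_{i+1}$, and, propagating this along the runs, every vertex whose run lies in $[1,i]$ is $\le_{P_\sigma}$ some element of $C_i$ while every vertex whose run lies in $[i{+}1,k]$ is $\ge_{P_\sigma}$ some element of $C_{i+1}$; hence no edge of $G$ joins these two vertex sets, contradicting connectedness. In particular $H$ is connected.

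Now suppose, for contradiction, that $H$ is not a maximal chordal subgraph: there is a nonempty $S\subseteq \E-E(H)$ with $H+S$ chordal. Because each $C_i$ is already a maximal clique of $G\supseteq H+S$, it remains a maximal clique of $H+S$. The goal is to show that $H+S$ is in fact an \emph{interval} graph: since $H\subsetneq H+S$ and both would then be interval subgraphs of $G$, this contradicts the maximality of $H$ and yields the theorem. As $H+S$ is chordal, by Theorem \ref{th:GilHof}(v) it suffices to show $H+S$ has no asteroidal triple, so assume it had one, say $\{p,q,r\}$. The idea is that in any clique tree of $H+S$ the path $C_1,\dots,C_k$ is forced (by consecutiveness and the fact that these are maximal cliques of $H+S$), so any branching --- equivalently, the obstruction to $H+S$ being interval --- must be created by the ``new'' maximal cliques of $H+S$ and by the edges of $S$. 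I would analyze the three avoidable connections witnessing $\{p,q,r\}$: if none of them uses an edge of $S$, then $\{p,q,r\}$ is already an asteroidal triple of the interval graph $H=G_\capc$, which is impossible; and if some such connection uses an edge $uv\in S$, then $u\parallel_{P_\sigma}v$ with $I_u\cap I_v=\emptyset$, and tracing the covering relations among the runs of the vertices involved, via Lemma \ref{defMA} and the consecutiveness of Proposition \ref{consecutivity}, one reaches either a maximal antichain of $P_\sigma$ that can be inserted into the maximal chain $\capc$ (contradicting its maximality) or an asteroidal triple in $G$ itself (contradicting that $G$, being a cocomparability graph, is AT-free).

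The main obstacle is precisely this last bookkeeping step. Asteroidal-triple-freeness is not monotone under adding or deleting edges, so one cannot simply transport the asteroidal triple of $H+S$ down to $H$ or up to $G$; the argument must exploit the rigidity imposed by the poset $P_\sigma$ --- the covering structure of Lemma \ref{defMA}, the consecutiveness of Proposition \ref{consecutivity}, and the \emph{maximality} of the chain $\capc$ --- to pin down exactly where the extra edges of $S$ and the extra maximal cliques of $H+S$ can sit, and to extract the contradiction. A secondary difficulty is that there is no clean reduction from an arbitrary edge set $S$ to a single fill edge (adding two fill edges can keep $H+S$ chordal even when adding either one alone does not), so the clique-tree/poset analysis has to be carried out with $S$ present throughout. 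An alternative to the ``$H+S$ is interval'' route would be to exhibit directly an induced cycle of length at least four in $H+S$, built from a shortest $x$--$y$ path in $H$ for a suitable endpoint pair $xy\in S$ after first proving that this path has length at least three and that the edges of $S$ cannot chord it; but this route runs into the same core difficulty of controlling the new edges in $S$.
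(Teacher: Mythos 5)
Your proposal is an outline with an acknowledged unresolved core, not a proof, so it has a genuine gap. Your primary route --- show that $H+S$ is AT-free, hence interval, and contradict the maximality of $H$ as an interval subgraph --- is never carried out: you correctly observe that AT-freeness is not monotone under edge addition and that the ``bookkeeping'' of where the edges of $S$ and the new maximal cliques of $H+S$ can sit is the whole difficulty, but you do not supply the argument that resolves it. Your ``alternative'' route (exhibit an induced cycle of length at least four in $H+S$) is in fact the route the paper takes, and you dismiss it as running into ``the same core difficulty of controlling the new edges in $S$.'' The paper overcomes exactly that difficulty with two ingredients you are missing. First, an \emph{extremal choice} of the fill edge: one picks $uv\in S$ with $last[u]<first[v]$ such that no other pair $x,y\in S$ has its gap $[last[x],first[y]]$ nested more tightly inside $[last[u],first[v]]$; this single choice is what guarantees that the internal vertices of the path being built are not joined to $u$, to $v$, or to each other by edges of $S$. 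Second, a lattice-based extension lemma: using the intermediate-clique claim (for adjacent $u,v$ with $C_u<_{{\cal MA}(P_\sigma)}C_v$ there is a maximal clique $C_{uv}\ni u,v$ strictly between them), consecutiveness (Proposition \ref{consecutivity}), Proposition \ref{supinf}, and the maximality of the chain $\capc$, one shows that whenever $last[u]<first[v]$ for an edge $uv\in E(G)$ there exist $x,y$ with $first[x]\le last[u]<first[y]\le last[x]<first[v]$, $xy\in E(\capc)$, and $yv\in E(G)$. Iterating this claim from $u$ toward $v$ produces vertices $x_1,y_1,x_2,y_2,\dots$ forming an induced path in $H$ whose closure by the edge $uv$ is an induced cycle of length at least four, contradicting chordality of $H$. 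Without these two steps the contradiction is not reached, so the proposal does not establish the theorem.

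A smaller remark: your preliminary reduction to connected $G$ and the argument that consecutive cliques of $\capc$ intersect are not needed in the paper's proof and do not advance either of your routes; the real work all lives in the part you left open.
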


\begin{proof}
Let G be a cocomparability graph and let $\sigma$ be a cocomp ordering. We just need to prove  that a maximal chain of ${\cal MA}(P_\sigma)$ is a maximal interval subgraph and a maximal chordal subgraph.

By proposition \ref{brief}, given a maximal chain of ${\cal MA}(P_\sigma)$: 
$\capc = C_1 \prec_{{\cal MA}(P_\sigma)} C_2 \dots \prec_{{\cal MA}(P_\sigma)} C_k$; then $G_\capc$ is a maximal interval subgraph.

	Assume for contradiction that $G_\capc$ is not a maximal chordal subgraph. Let $S$ be a set of edges such that the graph $H=(\V,E(\capc)\cup S)$ is a maximal chordal subgraph. In the proof, we will carefully choose an edge $uv \in S$ and show that we can find an induced path from $u$ to $v$ of length at least 3 in $H$. Therefore it will prove that $G_\capc$ is a maximal chordal subgraph. Since interval graphs are a subclass of chordal graphs and $G_\capc$ is an interval graph, we will deduce that $G_\capc$ is also a maximal interval subgraph.

	We start by proving two claims.


	\begin{claim}\label{middleclique}
		Let $G$ be a cocomparability graph, let $\sigma$ be a cocomp ordering and let $u,v$ be two vertices of $G$ such that $uv \in E$. 	
	
		If $C_u$, $C_v$ are maximal cliques of $G$ such that $u \in C_u$, $v \notin C_u$, $v \in C_v$, $u \notin C_v$,  $C_{u} <_{{\cal MA}(P_\sigma)} C_v$ then there exists a maximal clique $C_{uv}$ such that $u,v \in C_{uv}$ and $C_{u} <_{{\cal MA}(P_\sigma)} C_{uv} <_{{\cal MA}(P_\sigma)} C_v$.
	\end{claim}

	\begin{proof}

        Since $uv \in E$ there must exist at least one  maximal clique $C_{uv}^0$ of $G$ that contains $u$ and $v$. We now define $C_{uv}^1=C_{uv}^0 \vee_{{\cal MA}(P_\sigma)} C_u$ and show that $u$, $v$ belong to $C_{uv}^1$. We have three cases: $C_{u} <_{{\cal MA}(P_\sigma)} C_{uv}^0$; $C_{uv}^0 <_{{\cal MA}(P_\sigma)} C_{u}$; $C_{u} \parallel_{{\cal MA}(P_\sigma)} C_{uv}^0$. In the first case, we see that $C_{uv}^1 = C_{uv}^0$ and so $u$, $v$ belong to $C_{uv}^1$. In the second case, $v$ belongs to $C_{uv}^0$ and $C_v$ but not to $C_u$ and so $C_{uv}^0 <_{{\cal MA}(P_\sigma)} C_{u} <_{{\cal MA}(P_\sigma)} C_{v}$ contradicts proposition \ref{consecutivity} (consecutiveness condition). Therefore this case cannot happen. In the last case, using the definition of $\vee_{{\cal MA}(P_\sigma)}$ on $C_{uv}^0$ and $C_u$, we see that $u$ must belong to $C_{uv}^1$ because it belongs to $C_{uv}^0 \cap C_u$. Using again the definition of $\vee_{{\cal MA}(P_\sigma)}$  on $C_{uv}^0$ and $C_u$, we see that $v$ must belong to $C_{uv}^1$ otherwise $v$ would have to belong to $C_{uv}^0 \wedge_{{\cal MA}(P_\sigma)} C_u$ and  $(C_{uv}^0 \wedge_{{\cal MA}(P_\sigma)} C_u) <_{{\cal MA}(P_\sigma)} C_u <_{{\cal MA}(P_\sigma)} C_v$ would contradict proposition \ref{consecutivity} (consecutiveness condition).   Thus $u, v$ belong to $C_{uv}^1$.
	
	We finish the proof of the claim by showing that  $C_{uv}=C_{uv}^1 \wedge_{{\cal MA}(P_\sigma)} C_v$ satisfies $u$, $v \in C_{uv}$ and  $C_{u} <_{MA(P_\sigma)} C_{uv} <_{{\cal MA}(P_\sigma)} C_v$. From the choice of $C_{uv}$ we know $C_{uv} <_{{\cal MA}(P_\sigma)} C_v$. We have three cases: $C_{uv}^1 <_{{\cal MA}(P_\sigma)} C_{v}$; $C_{v} <_{{\cal MA}(P_\sigma)} C_{uv}^1$; $C_{uv}^1 \parallel_{{\cal MA}(P_\sigma)} C_{v}$. In the first case, we see that $C_{uv} = C_{uv}^1$ and so $u, v \in C_{uv}$ and  $C_{u} <_{{\cal MA}(P_\sigma)} C_{uv}$.  In the second case, $u$ belongs to $C_{uv}^1$ and $C_u$ but not to $C_v$ and so $C_{u} <_{{\cal MA}(P_\sigma)} C_{v} <_{{\cal MA}(P_\sigma)} C_{uv}^1$ contradicts proposition \ref{consecutivity} (consecutiveness condition). Therefore this case cannot happen. In the last case, using the definition of $\wedge_{{\cal MA}(P_\sigma)}$  on $C_{uv}^1$ and $C_v$ we see that $v$ must belong to $C_{uv}$ since it belongs to $C_{uv}^1 \cap C_v$. Using theorem \ref{cliquecocomp} on $C_{uv}^1$ and $C_u$, we get that $u$ must belong to $C_{uv}$ otherwise $u$ would have to belong to $(C_{uv}^1 \vee_{{\cal MA}(P_\sigma)} C_v)$ and  $C_{u} <_{{\cal MA}(P_\sigma)} C_v <_{{\cal MA}(P_\sigma)} (C_{uv}^1 \vee_{{\cal MA}(P_\sigma)} C_v)$ would contradict proposition \ref{consecutivity} (consecutiveness condition). Since $C_{uv}$ is defined as $C_{uv}^1 \wedge_{{\cal MA}(P_\sigma)} C_v$ by the definition of the lattice $C_{u} <_{{\cal MA}(P_\sigma)} C_{uv}$.
\end{proof}

We now introduce some terminology.
Let ${\cal C}=C_1 \prec_{{\cal MA}(P_\sigma)} C_2 \dots \prec_{{\cal MA}(P_\sigma)} C_k$ be a maximal chain of ${\cal MA}(P_\sigma)$; for every vertex $x \in V(G)$ we define  $first_{\cal C}[x]$ (respectively $last_{\cal C}[x]$) as the first (respectively last) index of a clique of $\capc$ that contains $x$.

Since a maximal interval subgraph is obviously a spanning subgraph,  these functions are well-defined. Furthermore when there is no ambiguity on $\cal C$, we simply denote these values by  $first[x]$ and $last[x]$.

	\begin{claim}
Let $G$ be a cocomparability graph, $\sigma$ be a cocomp ordering, $\capc= C_1 \prec_{{\cal MA}(P_\sigma)} C_2 \dots \prec_{{\cal MA}(P_\sigma)} C_k$ be a maximal chain of ${\cal MA}(P_\sigma)$ and $u,~v$ be two vertices of $G$ such that $uv \in E$.

       If $last[u] < first[v]$ then $\exists x,~y$ such that $first[x]\leq last[u] < first[y] \leq  last[x] < first[v]$, $last[x] \leq last[y]$ and $yv \in E$.

	\end{claim}

	\begin{proof}
	Let $C_u=C_{last[u]}$ and $C_v=C_{first[v]}$. Since $u \in N(v)-N_\capc(v)$, we deduce that $uv \notin E(\capc)$. Furthermore, since $u <_\sigma v$, $C_u,~C_v \in\capc$, we see that $C_{u} <_{{\cal MA}(P_\sigma)} C_v$. Using the previous claim on $C_u$, $C_v$, we deduce there exists $C_{uv}$ a maximal clique of $G$ such that  $C_{u} <_{{\cal MA}(P_\sigma)} C_{uv} <_{{\cal MA}(P_\sigma)} C_v$. Since $C_1,\dots,C_k$ is a maximal chain of cliques and $uv \notin E(\capc)$ we further know that $C_{uv} \notin \capc$. Since $\capc$ is a maximal chain, $C_u <_{{\cal MA}(P_\sigma)} C_{uv} <_{{\cal MA}(P_\sigma)} C_v$ and $C_{uv} \notin \capc$, we know that there exists a maximal clique $D_1$ in $\capc$ such that $C_{u} <_{{\cal MA}(P_\sigma)} D_1 <_{{\cal MA}(P_\sigma)} C_v$ and $D_1$ covers $C_u$ otherwise we contradict the maximality of the chain. We have three cases: $C_{uv} <_{{\cal MA}(P_\sigma)} D_1$;   $D_1 <_{{\cal MA}(P_\sigma)} C_{uv}$;  $D_1 \parallel_{{\cal MA}(P_\sigma)} C_{uv}$. In the first case, this contradicts the assumption that $D_1$ covers $C_u$. So this case cannot happen. In the second case, we have $u \in C_u, C_{uv}$ and $u \notin D_1$ which contradicts proposition \ref{consecutivity} (consecutiveness condition) and so this case cannot happen. So we are left with the last case. Since $D_1$ covers $C_u$ and  $D_1 \parallel_{{\cal MA}(P_\sigma)} C_{uv}$ we now show that $D_1 \wedge_{{\cal MA}(P_\sigma)} C_{uv}=C_u$. Assume it is not the case; then we would have $C_u<_{{\cal MA}(P_\sigma)} (D_1 \wedge_{{\cal MA}(P_\sigma)} C_{uv}) <_{{\cal MA}(P_\sigma)} D_1$ which contradicts that $D_1$ covers $C_u$. So we are in the situation described in Figure \ref{fig:situation1}.

\begin{figure}[ht]
  \centering
  \includegraphics[scale=.8]{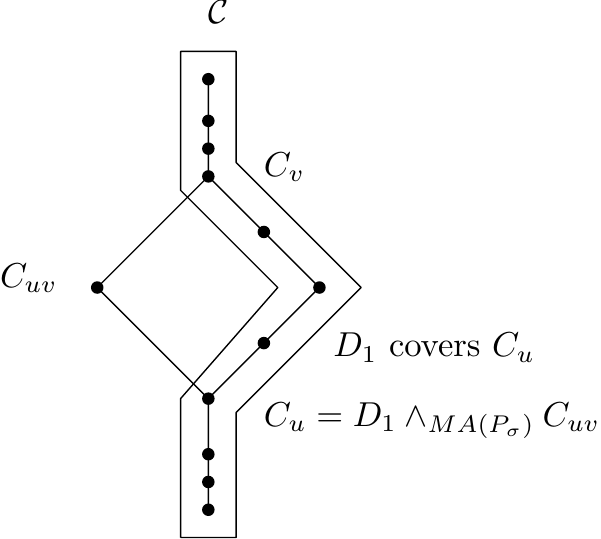}
\begin{center}
\caption{\label{fig:situation1}}
\end{center}
\end{figure}  
	
Since we chose $C_v$ as the first maximal clique in $\capc$ that contains $v$, $v$ is not universal to $D_1$ and let $x$ be a vertex of maximum last value among $D_1 - N(v)$. So $last[x]<first[v]$ Let us show that $x$ belongs to $C_u$. Assume that $x$ belongs to $C_{uv} \vee_{{\cal MA}(P_\sigma)} D_1$, then $v \notin C_{uv} \vee_{{\cal MA}(P_\sigma)} D_1$ and using proposition \ref{supinf} we deduce that $v \in C_{uv} \wedge_{{\cal MA}(P_\sigma)} D_1$. But now $v$ belongs to $C_u$ which contradicts that $uv$ does not belong to $G_\capc$. Thus $x$ is a vertex such that $first[x] \leq last[u] < last[x]<first[v]$. 

	Let $C_x=C_{last[x]}$. Since $\capc$ is an interval graph and $v \notin N(x)$, we see that $C_x  <_{{\cal MA}(P_\sigma)} C_v$. Using the same argument as in the case of $D_1$,  we also see that $C_x \parallel_{{\cal MA}(P_\sigma)} C_{uv}$. From the lattice definition we have that $(C_x \vee_{{\cal MA}(P_\sigma)} C_{uv}) \leq_{MA(P_\sigma)} C_v$ and $C_u \leq_{{\cal MA}(P_\sigma)} (C_x \wedge_{{\cal MA}(P_\sigma)} C_{uv})$. Using proposition \ref{consecutivity} (consecutiveness condition) on $C_{uv} \leq_{{\cal MA}(P_\sigma)} (C_x \vee_{{\cal MA}(P_\sigma)} C_{uv}) \leq_{{\cal MA}(P_\sigma)} C_v $ we see that $v \in (C_x \vee_{{\cal MA}(P_\sigma)} C_{uv})$. Using proposition \ref{consecutivity} (consecutiveness condition) on $C_{u} \leq_{{\cal MA}(P_\sigma)} (C_x \wedge_{{\cal MA}(P_\sigma)} C_{uv}) \leq_{{\cal MA}(P_\sigma)} C_{uv} $ we see that $u \in (C_x \wedge_{{\cal MA}(P_\sigma)} C_{uv})$.
    Since $u \notin C_x$, $u$ is not universal to $C_x$ and let $y$ be a vertex of maximum last value among $C_x - N(u)$. So $last[u]<first[y] \leq last[x]$ and $last[x] \leq last[y]$. Since $u \in (C_x \wedge_{{\cal MA}(P_\sigma)} C_{uv})$ and $y \notin N(u)$, using proposition \ref{supinf} we deduce that $y \in C_x \wedge_{{\cal MA}(P_\sigma)} C_{uv}$ and so $y \in N(v)$. So we have $yv \in E$.
	\end{proof}

    We now carefully choose an edge $uv \in S$ and show that we can find an induced path of length at least 3 in $H$ from $u$ to $v$. Let $uv$ be an edge of $S$ such that $last[u] < first[v]$ and $\nexists x, y \in S$, $last[x] < first[y]$ and (($last[u]<last[x]$ and $first[y]\leq first[v]$) or ($last[u]\leq last[x]$ and $first[y]< first[v]$)). 

    Using the previous claim on $u$ and $v$, we know that there exists $x_1$ and $y_1$ such that $first[x_1]\leq last[u] < first[y_1] \leq  last[x_1] < first[v]$, $last[x_1] \leq last[y_1]$ and $y_1v \in E$. We choose $x_1$ and $y_1$ to be the vertices of maximum $last$ values among the ones satisfying the conditions. By our choice of $uv$ we know that $x_1v \notin E(H)$ and $uy_1 \notin E(H)$. Now we have two cases: either $first[v] \leq last[y_1]$ or $last[y_1]<first[v]$. In the first case, we have that $u,x_1,y_1,v$ is an induced path of length 3 in $H$, and so an induced $C_4$, which contradicts $H$ being a chordal graph. In the second case, we apply the previous claim on $y_1,v$ and deduce that there exists $x_2$ and $y_2$ such that $first[x_2]\leq last[y_1] < first[y_2] \leq  last[x_2] < first[v]$, $last[x_2] \leq last[y_2]$ and $y_2v \in E$. We choose $x_2$ and $y_2$ to be the vertices of maximum $last$ values among the ones satisfying the conditions. By our choice of $uv$ we know that $x_2v \notin E(H)$, $uy_2 \notin E(H)$ and $y_1,y_2 \notin E(H)$.  Since we chose $x_1$ to be the vertex of maximum $last$ value we know that $x_2u \notin E(H)$. Since we chose $y_1$ to be a vertex of maximum $last$ value we know that $x_1x_2 \notin E$. Now we again have two cases: either $first[v] \leq last[y_1]$ or $last[y_1]<first[v]$. In the first case, $u,x_1,y_1,x_2,y_2,v$ is an induced path of length 5 in $H$, and so there is an induced $C_6$, which contradicts $H$ being chordal. In the second case, we do the same argument again. By continuing in this fashion, we always find an induced path from $u$ to $v$ of length at least 3 in $H$. Therefore $H$ cannot be chordal.
    
\end{proof}

\begin{figure}[ht]
  \centering
  \includegraphics[scale=1]{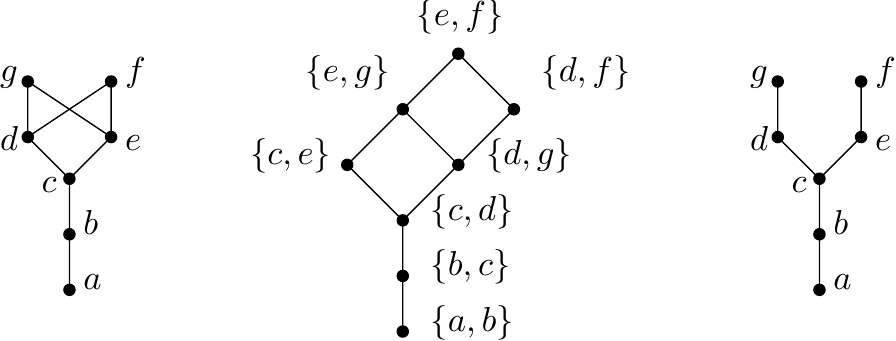}
\begin{center}
\caption{\small{From left to right: a cocomparability graph, along with one of its lattices and a maximal chordal subgraph that is not an interval graph since it contains an asteroidal triple $(a, f, g)$.\label{fig:chsub}}}
\end{center}
\end{figure}  
The statement of theorem \ref{interval-chordal} begs the question of whether 
 all maximal chordal subgraphs of a cocomparability graph are interval subgraphs.
As shown in Figure \ref{fig:chsub} this is not the case.
This naturally leads to the question:   what is the complexity to compute a maximum interval subgraph (i.e., having a maximum number of edges) of a cocomparability graph? Unfortunately it has been shown in \cite{JD} that it is NP-hard.

It is interesting to compare theorem \ref{interval-chordal}  to a result implicit in \cite{HM91} but stated in \cite{Parra} that says the following: every minimal triangulation (or chordalization) of a cocomparability graph is an interval graph. As a corollary,  treewidth and pathwidth are equal for cocomparability graphs.

\section{Algorithmic aspects}\label{ALG}

The problem of finding a maximal chordal subgraph of an arbitrary graph has been studied in \cite{DS88} and an algorithm with complexity $O(n m)$ has been proved. In this section, using a new graph search we will improve this  to $O(n + mlogn)$ for cocomparability graphs.

\subsection{Graph searches and cocomparability graphs}\label{sec:graphsearches}

In the introduction we presented two problems on cocomparability graphs solvable by graph searching where these algorithms 
are very similar to
a corresponding algorithm on interval graphs.  In subsection \ref{list} we present other problems where this ``lifting'' technique
provides new easily implementable cocomparability graph algorithms.  All of the algorithms that we mention use a technique
called the ``$+$ tie-break rule'' in which a total ordering $\tau$ of $V(G)$ is used to break ties in a particular graph search $\cal{S}$.  
In particular,
the next chosen vertex in $\cal{S}$ will be the \textit{rightmost} tied vertex in $\tau$.  Such a tie-breaking search will be denoted 
$\cal{S}$$^+(\tau)$.  
%
%
%
%
%
 Many of these examples use that fact that some searches (most
notably LDFS) when applied  as a ``$+$-sweep'' to a cocomp ordering produce a vertex ordering that is also a cocomp ordering.
In fact, in \cite{BigArt} there is a characterization of the graph searches that have this property of preserving a cocomp ordering.
Given a cocomparability graph $G$, computing a cocomp ordering can be done in linear time,~\cite{MS99}. This algorithm, however, is quite
involved and other algorithms with a running time in $O(n+m\log(n))$ are easier to implement~\cite{MS99, HMPV00}. It should be noticed that up to now, it is not known if one can check
if an ordering is a cocomp ordering in less than boolean matrix
multiplication time.

%

\subsection{Other examples of graph searches on cocomparability graphs}\label{list}

Following the two examples presented in the introduction we now present three other examples 
of search based algorithms for other problems on cocomparability graphs: 

\begin{itemize}
\item Let $x$ be the last vertex of an arbitrary LBFS of cocomp graph
  $G$ and let $y$ be the last vertex of an arbitrary LBFS starting at
  $x$.  Then $\{x,y\}$ forms a {\it dominating pair} in the sense that
  for all $[x, y]$ paths in $G$, every vertex of $G$ is either on the
  path or has a neighbor on the path~\cite{COS299}.\footnote{In fact
    this result was proved for the larger family of {\it asteroidal
      triple-free (AT-free) graphs} and was the first use of LBFS outside the chordal graph family.  }
\item Let $\sigma$ be an LDFS cocomp ordering of graph $G$.  Then a
  simple dynamic programming algorithm for finding a longest path in
  an interval graph also solves the longest path problem on cocomparability
  graphs where $\sigma$ is part of the input to the
  algorithm~\cite{MC}.
\item Let $\sigma$ be an LDFS cocomp ordering of graph $G$.  Then a
  simple greedy algorithm for finding the maximum independent set (MIS) in
  an interval graph also solves the problem on cocomparability
  graphs where $\sigma$ is part of the input to the algorithm \cite{BigArt}.  It is well known that for
  any graph $G= (V(G), E(G))$ with MIS $X \subseteq V(G)$, the set $Y = V(G) \setminus X$ forms
  a minimum cardinality vertex cover (i.e., every edge in $E(G)$ has at least one endpoint in $Y$).
  The MIS algorithm in \cite{BigArt} certifies the constructed MIS by constructing a clique cover (i.e., 
  a set of cliques such that each vertex belongs to exactly one clique in the set) of the same
  cardinality as the MIS.  Recall that cocomparability graphs are perfect.
  
\end{itemize}

 The last two algorithms in the list as well as the two in the introduction suggest the existence of an interesting relationship between interval and cocomparability graphs. We believe that the basis of this relationship is the lattice ${\cal MA}(P)$, which characterizes cocomparability graphs and  shows that a cocomparability graph can be seen as a composition of interval graphs (i.e., the maximal chains of cliques of ${\cal MA}(P)$). 

\subsection{Computing interval subgraphs of a cocomparability graph}\label{sec:intervalstruct}

 In this section, we develop an algorithm that computes a maximal chain of the lattice ${\cal MA}(P_\sigma)$ and show that it forms a maximal interval and chordal subgraph. The problem of finding a maximal interval subgraph is the dual of the problem of finding a minimal interval completion (see \cite{Todinca1,Todinca2}). The algorithm that we are going to present uses a new graph search that we call LocalMNS, since it shares a lot of similarities with MNS. 
 

This algorithm also gives us a way to compute a minimal interval extension of a partial order. An interval extension of a partial order is an extension that is also an interval order. In \cite{HMPR91,HMPR92}, it has been proved that the maximal chains of ${\cal MA}(P)$ are in a one-to-one correspondence with the minimal interval extensions. Therefore, our algorithm also allows us to compute a minimal interval extension of a partial order in $O(n+mlogn)$ time.

This section is organized as follows. First we present a greedy algorithm, called Chainclique with input a total ordering of
an arbitrary graph's vertex set, that computes an interval subgraph. This idea has already been described in \cite{DOS09} for extracting the maximal cliques of an interval graph from an interval ordering. Here we generalize it in order to accept  as input any  graph and any ordering.  In subsection \ref{sec:mininterext}, we present a new graph search named LocalMNS. We will also prove that applying algorithm Chainclique on a LocalMNS cocomp ordering produces a maximal chain of ${\cal MA}(P)$. In subsection \ref{sec:maximalinter}, we have shown that such a maximal chain of ${\cal MA}(P)$ forms a maximal interval and chordal subgraph. 



\begin{definition}
	Let $\Gr$ be a graph and $\sigma$ an ordering of $\V$.

	A graph $H=(\V,E(H))$ with $E(H) \subseteq \E$ is a \textit{$\sigma$-maximal interval subgraph} for the ordering $\sigma$ if and only if $\sigma$ is an interval ordering for the graph $H$ and $\forall S \subseteq \E-E(H)$, $S\neq \emptyset$, $\sigma$ is not an interval ordering for the graph $H'=(\V,E(H) \cup S)$.
\end{definition}

\vspace{0.5cm}

\begin{algorithm}[ht] 
 \KwData{$\Gr $ and a vertex ordering $\sigma$}
 \KwResult{a chain of cliques $C_1$,...,$C_j$}
 $j \leftarrow 0$\;
 $i \leftarrow 1$\;
 $C_0 \leftarrow \emptyset$\;
 \While{$i \leq |V|$}{
	 $j \leftarrow j+1$ \hspace{.6in} \%\{Starting a new clique\}\%\;
	 $C_j \leftarrow \{\sigma(i)\} \cup (N(\sigma(i)) \cap C_{j-1}$)\;
	 $i \leftarrow i+1$\;

	 \While{$i \leq |V|$ and $\sigma(i)$ is universal to $C_j$ }{
		 $C_j \leftarrow C_j \cup \{\sigma(i)\}$ \hspace{.6in} \%\{Augmenting the clique\}\%\;
		 $i \leftarrow i+1$\;
	 }
 }
 Output $C_1,\dots,C_j$\;
\caption{Chainclique($G, \sigma$) \label{CUI}}
\end{algorithm}
\vspace{0.5cm}

As we will prove, Chainclique($G, \sigma$) computes a $\sigma$-maximal interval subgraph for an arbitrary given graph $G$. To this end, Chainclique($G, \sigma$) computes a sequence of cliques that respects the consecutiveness condition. Chainclique($G, \sigma$) tries to increase the current clique and when it cannot, it creates a new clique and sets it to be the new current clique. Another way to see it is that Chainclique($G, \sigma$) discards all the edges $xz \in \E$ such that $\exists y$, $x <_\sigma y <_\sigma z$ and $xy \notin \E$.



It should be noticed that the cliques produced by Chainclique($G, \sigma$) are not necessarily maximal ones, for example take a $P_3$ on the 3 vertices $u,~v,~w$  with the edges $uv$ and $vw$.  Chainclique($P_3, \sigma$) with $\sigma= u,~w,~v$, produces the cliques: $\{u\}, \{w,~v\}$. It should also be noted that the algorithm works on an arbitrary graph and with an arbitrary ordering. For an example, let us consider the graph $H$ of Figure \ref{fig:exemplech} and the ordering $\tau= v_1,~v_2,~v_3,~v_4,~v_5,~v_6,~v_7,~v_8,~v_9,~v_{10},~v_{11},~v_{12},~v_{13},~v_{14},~v_{15}$. Chainclique($H, \tau$) outputs $\{v_1,~v_2,~v_3\},~\{v_2,~v_3,~v_4\},~\{v_3,~v_4,~v_5\},~\{v_5,~v_6\},$ $~\{v_6,~v_7,~v_8\},$ $~\{v_7,~v_8,~v_9\},$ $~\{v_8,~v_9,~v_{10}\}$, $\{v_{11},~v_{12},~v_{13}\}$, $\{v_{11},~v_{13},~v_{14}\}$, $\{v_{13},~v_{14},~v_{15}\}$.  For the graph $H$ presented in Figure \ref{fig:exemplech}, 
$E(H) - E(\capc_H) = \{\{v_7,~v_{11}\}, ~\{v_4,~v_{12}\}\}$.

\begin{figure}[ht]
  \centering
  \includegraphics[scale=1.5]{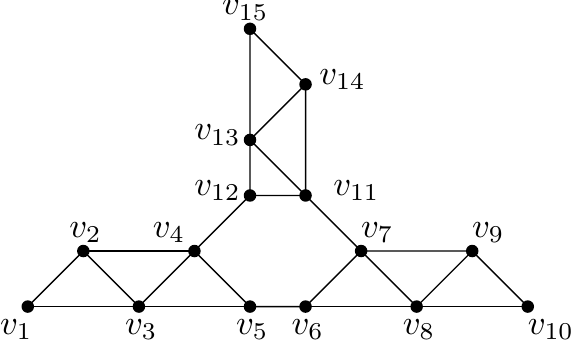}
\begin{center}
\caption{ The graph $H$\label{fig:exemplech}}
\end{center}
\end{figure}  

Now let us prove that Chainclique($G, \sigma$) allows us to obtain a maximal interval subgraph for an ordering $\sigma$. 
The proof is organized as follows. In the first proposition we prove that Chainclique($G, \sigma$) outputs a sequence of cliques that respects the consecutiveness property. In the second proposition, we prove that the ordering given to Chainclique is an interval ordering for the sequence of cliques. In the last proposition we prove that the graph formed by the sequence is a maximal interval subgraph for the ordering.

\begin{property}\label{propconse}
	For a graph $G$ and an ordering $\sigma$, Chainclique($G, \sigma$) outputs a sequence of cliques $\capc=C_1,\dots,C_k$ such that for every $C_{e}, C_{f}, C_{g}$, $1\leq e \leq f \leq g \leq k$, $C_{e}\cap C_{g}\subseteq C_{f}$.

\end{property}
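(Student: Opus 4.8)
The plan is to prove the consecutiveness property directly from the structure of the algorithm, by analyzing \emph{when} a vertex can belong to one of the output cliques. The key observation is that the algorithm scans the vertices in $\sigma$-order and never revisits a vertex; each vertex $\sigma(i)$ is placed into exactly one ``starting'' clique $C_j$ (the one created at the iteration where the outer index reaches $i$), and possibly into the immediately preceding clique $C_{j-1}$ (via the intersection $N(\sigma(i))\cap C_{j-1}$ on the line $C_j \leftarrow \{\sigma(i)\} \cup (N(\sigma(i))\cap C_{j-1})$) and into immediately following cliques while it remains universal to the current clique being augmented. So first I would make precise the following: for each vertex $x$, let $s(x)$ be the index of the clique in whose creation step $x$ is the vertex $\sigma(i)$ being processed — equivalently, the unique clique $C_j$ such that $x = \sigma(i)$ when $j$ is incremented. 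I would then argue that the set of indices $\{\ell : x \in C_\ell\}$ is exactly an interval of integers containing $s(x)$, more precisely contained in $\{s(x)-1, s(x), s(x)+1, \ldots\}$, and consecutive.

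The main work is to establish that $\{\ell : x \in C_\ell\}$ is a set of \emph{consecutive} integers. For the ``upper'' side this is immediate: once $x$ has been placed in $C_j$ (either as the starting vertex, or because it was in $C_{j-1}$ and adjacent to $\sigma(i)$, or because it was scanned as a universal vertex), $x$ stays in $C_j$ for the rest of that inner while-loop; it enters $C_{j+1}$ only if, at the creation of $C_{j+1}$ with starting vertex $w=\sigma(i')$, we have $x \in N(w)\cap C_j$ — and if $x\notin C_{j+1}$ then $x$ will never be looked at again (its position in $\sigma$ has already been passed), so $x$ cannot be in any $C_\ell$ with $\ell > j$ once it is absent from some $C_{j+1}$. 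That gives an ``upward-closed-until-it-stops'' behaviour, i.e. the indices above $s(x)$ form a (possibly empty) initial run $s(x)+1, s(x)+2, \ldots$. For the ``lower'' side, the only way $x$ lands in a clique $C_\ell$ with $\ell < s(x)$ is $\ell = s(x)-1$, via the intersection step that builds $C_{s(x)}$; and there is at most one such clique. Combining: $\{\ell : x\in C_\ell\}$ is either $\{s(x), \ldots, s(x)+t\}$ or $\{s(x)-1, s(x), \ldots, s(x)+t\}$ for some $t\geq 0$, which is an interval of consecutive integers. Hence for any $e \le f \le g$ with $x\in C_e$ and $x\in C_g$, since $e,g$ are both in this interval and the interval is contiguous, $f$ lies in it too, so $x\in C_f$; this is exactly $C_e\cap C_g\subseteq C_f$.

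Concretely I would structure the proof as: (1) a short paragraph fixing notation ($s(x)$, and the phases in which a vertex can be added to a clique); (2) a lemma-style claim that for every vertex $x$ the index set $I_x=\{\ell: x\in C_\ell\}$ is an interval of consecutive integers, proved by the upward/downward analysis above, using that the outer scan variable $i$ is strictly increasing and a vertex is the argument of $N(\sigma(i))$ exactly once; (3) deriving the stated property $C_e\cap C_g\subseteq C_f$ for $e\le f\le g$ as an immediate corollary applied vertexwise. I expect the only real subtlety — the ``main obstacle'' — to be the careful bookkeeping showing that once $x$ fails to be carried from $C_j$ into $C_{j+1}$ it can never reappear: one must check all three ways a vertex gets into a clique (as the new starting vertex $\sigma(i)$, as a member of the previous clique intersected with $N(\sigma(i))$, or as a universal vertex absorbed in the inner loop) and verify each of these only ever uses vertices at position $\ge i$ of $\sigma$, so a vertex whose $\sigma$-position has already been passed and which is absent from the current clique is permanently gone. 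Everything else is routine.
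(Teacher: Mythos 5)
Your proposal is correct and follows essentially the same route as the paper: the paper's proof is an induction showing that once a vertex is in $C_{j-1}$ but not in $C_j$ it cannot reappear in any later clique (because $C_{j+1}$ is built only from the new starting vertex, its neighbours inside $C_j$, and not-yet-scanned vertices), which is precisely your claim that each vertex's set of clique indices is a consecutive interval. One harmless misreading on your part: the assignment $C_j \leftarrow \{\sigma(i)\}\cup(N(\sigma(i))\cap C_{j-1})$ carries members of $C_{j-1}$ forward into $C_j$ and never places $\sigma(i)$ into $C_{j-1}$, so a vertex never occurs in any clique with index smaller than the one current when it is first scanned and your ``$s(x)-1$'' case is vacuous (also, $s(x)$ as you define it is undefined for vertices absorbed in the inner loop) --- but since the index set you describe is an interval either way, the conclusion is unaffected.
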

\begin{proof}
	We do the proof by induction on the cliques of $\capc$ and the induction hypothesis is that at each step $j$ if  $x \in C_{j-1}$-$C_{j}$ then $x \notin C_{j'}$, $j' \ge j$.
Since $C_0=\emptyset$, the hypothesis is true for the initial case, $j=1$.


    Assume that the hypothesis is true for the first $j\geq 1$ cliques. When we start to build the clique $C_{j+1}$, we add a vertex that has not been considered before and its neighborhood in $C_j$. By doing so, we cannot add a vertex $x$ to $C_{j+1}$ such that $x \in C_i$-$C_j$ and $i<j$. When we increase the clique, we only add vertices that have not been considered before and so we cannot add a vertex $x$ such that $x \in C_i$-$C_j$ and $i<j$, in $C_{j+1}$. Therefore the induction hypothesis is also verified at step $j+1$.
\end{proof}

Therefore using the characterization of interval graphs of corollary \ref{intervalcharac}, Chainclique($G, \sigma$) outputs a sequence of cliques that defines an interval subgraph.

\begin{property}\label{interorder}
	For a graph $G$ and an ordering $\sigma$, Chainclique($G, \sigma$) outputs a sequence of cliques $\capc=C_1,\dots,C_k$ such that $\sigma$ is an interval ordering for $G_\capc$ and $\forall x \in C_i$-$C_j$, $\forall y \in C_j$-$C_i$, $i<j$, $x <_\sigma y$.
\end{property}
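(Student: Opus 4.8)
\textbf{Proof plan for Property \ref{interorder}.}

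The plan is to analyze the structure of the cliques produced by \texttt{Chainclique}$(G,\sigma)$ directly from the algorithm, using the consecutiveness property already established in Property \ref{propconse}. First I would set up notation: for each index $j$, let $s_j$ be the position (in $\sigma$) of the vertex $\sigma(i)$ that \emph{starts} clique $C_j$, i.e., the vertex added on the line $C_j \leftarrow \{\sigma(i)\}\cup(N(\sigma(i))\cap C_{j-1})$; and let $e_j$ be the position of the last vertex added to $C_j$ before the outer loop restarts. The key observation is that every vertex of $C_j$ is either in $C_{j-1}$ (and was inherited as a neighbor of the starting vertex) or was added at some position in the range $[s_j,e_j]$, and that the ranges $[s_j,e_j]$ partition $\{1,\dots,|V|\}$ into consecutive blocks with $e_j+1=s_{j+1}$. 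I would record this as a preliminary claim proved by a straightforward induction on $j$ reading off the two \texttt{while} loops.

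Next I would prove the displayed ``positional'' statement: for $i<j$, every $x\in C_i\diff C_j$ satisfies $x<_\sigma y$ for every $y\in C_j\diff C_i$. By Property \ref{propconse} (or rather the induction hypothesis inside its proof, which says a vertex dropped at some step never reappears), a vertex $x\in C_i$ that is not in $C_j$ must have been dropped at some step between $i$ and $j$, hence $x$ does not belong to any clique of index $\ge$ the step at which it was dropped; in particular $x$ was first introduced no later than block $i$, so $\sigma^{-1}(x)\le e_i < s_{i+1}\le s_j$. On the other hand any $y\in C_j\diff C_i$ is not inherited from $C_{j-1}$ all the way back — more carefully, since $y\notin C_i$ and $i<j$, $y$ cannot have been present in every clique from $C_i$ through $C_j$; by consecutiveness (Property \ref{propconse}) the indices of cliques containing $y$ form a contiguous interval, so if $y\in C_j$ and $y\notin C_i$ with $i<j$ then $y$ is not in any clique $C_{i'}$ with $i'\le i$, whence $y$ was first introduced in some block $j'$ with $i<j'\le j$, giving $\sigma^{-1}(y)\ge s_{j'}> e_i\ge \sigma^{-1}(x)$. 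This proves $x<_\sigma y$.

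For the interval-ordering assertion, I would argue by contradiction using the characterization in Theorem \ref{th:GilHof}(ii): suppose $\sigma$ is not an interval ordering of $G_\capc$, so there are vertices $x<_\sigma y<_\sigma z$ with $xz\in E(\capc)$ but $xy\notin E(\capc)$. Since $xz\in E(\capc)$, some clique $C_t$ of the chain contains both $x$ and $z$. Consider the first clique $C_a$ containing $x$ and the first clique $C_b$ containing $y$; because $x<_\sigma y$, the preliminary claim on blocks gives $a\le b$ (a vertex's first-appearance block is ordered by the starting positions, and starting positions increase with $j$). If $y\in C_t$ then $xy\in E(\capc)$, contradiction, so $y\notin C_t$. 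Now I would pin down where $y$'s clique-interval sits relative to $C_a$ and $C_t$: since $y$ is introduced at block $b\ge a$ and $y\notin C_t$, consecutiveness forces $C_t$ to lie strictly before $C_b$ in the chain, i.e., $t<b$; but then $z\in C_t$ and $t<b$ together with $y<_\sigma z$ contradicts the positional statement just proved (it would force $z$ to be introduced before $y$). Tracking these case distinctions cleanly — precisely relating ``first clique containing a vertex'' to the block index $s_j$ and handling the possibility that $z$ is inherited into $C_t$ rather than introduced there — is the main obstacle; once the block/interval bookkeeping from the preliminary claim is in place, each case closes by comparing positions with $s_j$ and $e_j$. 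Finally the second half of the statement (the quantified inequality) is exactly the positional statement, so it is already done.
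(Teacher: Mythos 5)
Your plan is correct and follows essentially the same route as the paper: both arguments rest on the fact that a vertex's first-appearance clique index is monotone in its $\sigma$-position (your $[s_j,e_j]$ block decomposition just makes this explicit) together with the consecutiveness of Property \ref{propconse}; the paper packages the contradiction in the interval-ordering part as $u\in C_v$ (hence $uv\in E(\capc)$) while you derive $z<_\sigma y$, but the ingredients are identical. One small repair: the step ``consecutiveness forces $t<b$'' should be justified by applying consecutiveness to $x$ rather than $y$ (if $t\ge b$ then $x\in C_a\cap C_t\subseteq C_b\ni y$, contradicting $xy\notin E(\capc)$), since contiguity of $y$'s clique interval alone does not exclude $t$ lying to the right of that interval.
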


\begin{proof}
    Assume for contradiction that $\sigma$ is not an interval ordering for $G_\capc$. So there exists $u<_\sigma v<_\sigma w$ such that $uv \notin E(\capc)$ and $uw \in E(\capc)$. Let $C_u$ be the first clique in which $u$ appears, $C_v$ be the first clique in which $v$ appears and $C_{uw}$ the first clique that contains both $u$ and $w$. Because Chainclique($G, \sigma$) considers the vertices in the order they appear in $\sigma$, the clique $C_u$ must appear in $\capc$ before the clique $C_v$. Using the same argument the clique $C_v$ must appear before the clique $C_{uw}$. But now $C_u$, $C_v$, $C_{uw}$ contradict  proposition \ref{propconse}, since $u \notin C_v$.

    Now assume for contradiction that $\exists x \in C_i$-$C_j$, $\exists y \in C_j$-$C_i$, $i<j$, $y <_\sigma x$. Now the vertices are considered by Chainclique($G, \sigma$) in the order they appear in $\sigma$. Since $y <_\sigma x$, let $C_g$ be the first clique in which $y$ appears. We see that $g \leq i$. Since $y$ belongs to $C_g$ and $C_j$, using proposition \ref{propconse} we know that $y \in C_i$. Therefore $y \notin C_j$-$C_i$, which contradicts our choice of $y$. Thus $\forall x \in C_i$-$C_j$, $\forall y \in C_j$-$C_i$, $i<j$, $x <_\sigma y$.
\end{proof}

We are ready to prove that the graph formed by the sequence is a $\sigma$-maximal interval subgraph.

\begin{property}
	For a graph G and an ordering $\sigma$, Chainclique($G, \sigma$) outputs a sequence of cliques $\capc=C_1,\dots,C_k$ that induces a maximal interval subgraph for the ordering $\sigma$.
\end{property}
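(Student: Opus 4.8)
The plan is to first dispose of the ``interval subgraph'' part by citing what has already been established, and then to reduce maximality to a clean statement about a single missing edge, which is read off directly from the initialization step of Chainclique. First I would recall that, by Properties \ref{propconse} and \ref{interorder}, $G_\capc = (V,E(\capc))$ is an interval subgraph of $G$ for which $\sigma$ is an interval ordering; moreover, by the consecutiveness condition (Property \ref{propconse}) together with the fact that every vertex of $G$ lies in some clique of $\capc$, each vertex $x$ occupies a nonempty contiguous block of cliques $C_{first[x]},\dots,C_{last[x]}$. It therefore remains to prove $\sigma$-maximality: no nonempty $S \subseteq \E - E(\capc)$ can be added to $E(\capc)$ while keeping $\sigma$ an interval ordering.

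The key reduction is: it suffices to show that for every edge $uw \in \E - E(\capc)$, say with $u <_\sigma w$, there is a vertex $v$ with $u <_\sigma v <_\sigma w$ and $uv \notin \E$. Indeed, given any nonempty $S \subseteq \E - E(\capc)$, pick any $uw \in S$ with its witness $v$; then the triple $u,v,w$ satisfies $u <_\sigma v <_\sigma w$, $uw \in E(\capc)\cup S$, and $uv \notin \E \supseteq E(\capc)\cup S$, so $\sigma$ is not an interval ordering for $(V,E(\capc)\cup S)$. Hence $G_\capc$ is $\sigma$-maximal.

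To produce the witness $v$: since $uw \notin E(\capc)$, the blocks $[first[u],last[u]]$ and $[first[w],last[w]]$ are disjoint. If $w$'s block were entirely before $u$'s, then applying the second clause of Property \ref{interorder} to $w \in C_{first[w]} - C_{first[u]}$ and $u \in C_{first[u]} - C_{first[w]}$ (with $first[w] < first[u]$) would force $w <_\sigma u$, contradicting $u <_\sigma w$; hence $j := last[u] < first[w] \le k$, so clique $C_{j+1}$ exists. Let $v$ be the vertex with which Chainclique starts building $C_{j+1}$. Since $u \in C_j$, vertex $u$ is processed (and placed in a clique) strictly before $v$, so $u <_\sigma v$ and $u \ne v$. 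Since $u \in C_j$ but $u \notin C_{j+1}$, and $C_{j+1}$ is initialized as $\{v\}\cup(N(v)\cap C_j)$, we must have $uv \notin \E$. Finally, $w$ is the start vertex or an augmenting vertex of $C_{first[w]}$ with $first[w] \ge j+1$ (it cannot be carried over into $C_{first[w]}$, by minimality of $first[w]$), so $w$ is processed at or after $v$; and $w \ne v$ because $uw \in \E$ while $uv \notin \E$. Hence $v <_\sigma w$, which completes the construction of the witness.

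I expect the main obstacle to be the bookkeeping in the last paragraph: correctly deducing the clique ordering $last[u] < first[w]$ from the two halves of Property \ref{interorder}, and, more importantly, identifying the right witness $v$ — namely the vertex that \emph{opens} the clique immediately following $u$'s last clique — and then verifying $v \notin \{u,w\}$ and $uv \notin \E$ directly from the algorithm's line ``$C_j \leftarrow \{\sigma(i)\} \cup (N(\sigma(i)) \cap C_{j-1})$''. Everything else is immediate from the already-proved properties.
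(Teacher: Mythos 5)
Your proposal is correct and follows essentially the same route as the paper: both identify the witness as the vertex that opens the clique immediately after $u$'s last clique, and read off $uv \notin \E$ from the initialization line of Chainclique. Your version merely fills in details the paper leaves implicit (why $last[u] < first[w]$ so that the next clique exists, and why the witness precedes $w$ in $\sigma$), which are all verified correctly.
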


\begin{proof}
    Assume for contradiction that $\capc=C_1,\dots,C_k$ does not form a $\sigma$-maximal interval subgraph. Therefore there exists a non empty set of edges $S$ such that $\sigma$ is an interval ordering for the graph $H=(\V,E(\capc)\cup S)$. Let $uv$ be an edge of $S$ and assume without loss of generality that $u<_\sigma v$. Let $C_i$ be the last clique of $\capc$ containing $u$ and consider $w$ the first vertex of $C_{i+1}$, as chosen by Chainclique; clearly $uw \notin E$ and thus $w \neq v$. Now $u <_\sigma w <_\sigma v$ contradicts $\sigma$ being an interval ordering for the graph $H$.
\end{proof}

\begin{property}
    Chainclique$(G, \sigma)$ has complexity $O(n+m)$. 
\end{property}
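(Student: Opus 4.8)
The plan is to analyze the running time of Chainclique($G,\sigma$) by a careful amortized (charging) argument over the two nested \texttt{while} loops. First I would observe that the outer loop creates the cliques $C_1,\dots,C_k$, and that each time through the outer loop the counter $i$ is incremented exactly once (when a new clique is started), while each iteration of the inner loop also increments $i$ exactly once. Since $i$ ranges from $1$ to $|V|=n$ and is never decremented, the total number of iterations of both loops combined is exactly $n$. Hence the loop structure itself is executed $n$ times overall, and it remains only to bound the work done inside a single pass.

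Next I would account for the cost of the two non-trivial operations performed per iteration. When a new clique $C_j$ is started, the algorithm computes $C_j \leftarrow \{\sigma(i)\}\cup(N(\sigma(i))\cap C_{j-1})$; with the right data structures (e.g.\ a boolean membership array for the current clique $C_{j-1}$, or scanning $N(\sigma(i))$ and testing membership in $C_{j-1}$ in $O(1)$ each) this costs $O(1+|N(\sigma(i))|) = O(1+\deg(\sigma(i)))$. When the clique is augmented by a vertex $\sigma(i)$ that is universal to $C_j$, testing universality and performing the insertion can likewise be done in $O(1+\deg(\sigma(i)))$ time, since one only needs to check whether $C_j \subseteq N(\sigma(i))$, and $|C_j| \le \deg(\sigma(i))+1$ at that moment because $C_j$ is a clique containing $\sigma(i)$ together with neighbors of $\sigma(i)$. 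In both cases the charge for the step in which vertex $x=\sigma(i)$ is processed is $O(1+\deg(x))$.

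Summing the charge over all $n$ iterations, each vertex $x$ is the ``current'' vertex $\sigma(i)$ in exactly one iteration (the one in which $i = \sigma^{-1}(x)$), so the total time is $\sum_{x\in V} O(1+\deg(x)) = O(n) + O\!\left(\sum_{x\in V}\deg(x)\right) = O(n+m)$ by the handshake lemma. I would also remark that maintaining the membership array for the current clique costs only $O(|C_j|)$ to reset between cliques, and $\sum_j |C_j|$ is again $O(n+m)$ since each occurrence of a vertex in a clique can be charged either to its own processing step or to an edge incident to it; alternatively one uses a timestamp trick to avoid explicit resets. The output step simply lists the cliques, which takes $O(\sum_j |C_j|)=O(n+m)$ as well.

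The main obstacle here is purely the bookkeeping of the data structures: one must be precise about how ``$\sigma(i)$ is universal to $C_j$'' is tested in time proportional to $\deg(\sigma(i))$ rather than $|C_j|$ (or ensure $|C_j|=O(\deg(\sigma(i)))$ whenever the test is invoked), and about how the intersection $N(\sigma(i))\cap C_{j-1}$ is formed without a quadratic blowup. Once these implementation details are pinned down, the amortized bound $O(n+m)$ follows immediately from the handshake lemma, and there is no deeper combinatorial difficulty.
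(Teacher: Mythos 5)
Your proof is correct and follows essentially the same approach as the paper's: each vertex's neighborhood is scanned a constant number of times (once when it is tested/inserted, so the total work is $\sum_x O(1+\deg(x)) = O(n+m)$), with the paper merely stating this in two sentences while you spell out the data-structure details. The only minor imprecision is that a vertex failing the universality test is touched twice (once in the failed inner-loop guard, once when it starts the next clique), but this changes nothing in the bound.
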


\begin{proof}
    All the tests can be performed by visiting once the neighborhood of a vertex. Since the sequence of cliques forms a subgraph of $G$, its size is bounded by $m$. Therefore, Chainclique$(G, \sigma)$ has complexity $O(n+m)$.
   \end{proof}

\subsection{Computing a maximal chain in the lattice}\label{sec:mininterext}

In this subsection, we introduce a new graph search that will be used as a preprocessing step in the computation of a maximal chain of ${\cal MA}(P)$. This graph search will be called LocalMNS and when we use Chainclique$(G, \sigma)$ on a LocalMNS cocomp ordering $\sigma$ we will obtain a maximal chain of ${\cal MA}(P)$.

First, we start by looking at the behavior of Chainclique$(G, \sigma)$ in which $\sigma$ is a LBFS or a LDFS ordering. Let us consider the graph in Figure \ref{fig:exemple2}. Applying the algorithm Chainclique$(G, \sigma)$ on the LDFS ordering $\sigma=1, 3, 2, 4, 6, 5$, we get the chain of cliques $\{1,2,3\}$, $\{1,2,4\}$ and $\{4,5,6\}$ which is not a maximal chain of ${\cal MA}(P)$. A similar result holds using the LBFS ordering $\tau=2, 3, 1, 4, 6, 5$. Thus, LBFS and LDFS do not help us find a maximal chain of cliques of ${\cal MA}(P)$ using Chainclique$(G, \sigma)$. This motivates the introduction of LocalMNS (algorithm \ref{localMNS}).

\begin{figure}[!ht]
  \centering
  \includegraphics[scale=.7]{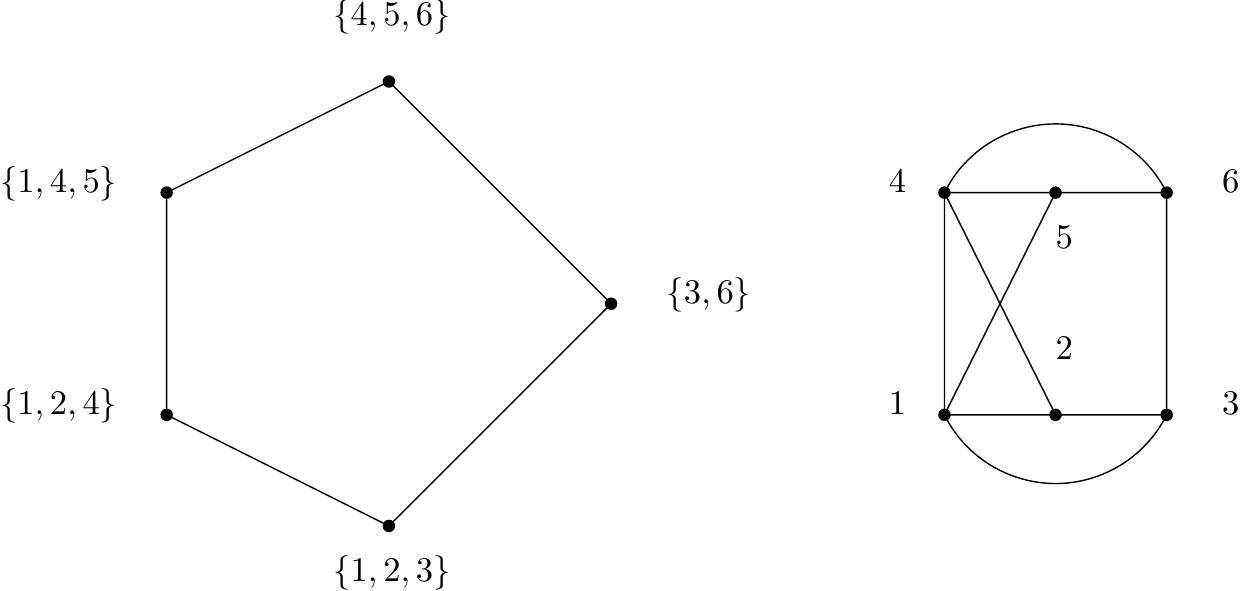}
\begin{center}
\caption{${\cal MA}(P)$ and the corresponding cocomparability graph G\label{fig:exemple2}}
\end{center}
\end{figure}  

\begin{algorithm}[!h] 
 \KwData{$G=(V,E) $}
 \KwResult{a total ordering  $\sigma$ such that  $\sigma(i)$ is the i'th visited vertex }
 $D_1 \leftarrow \emptyset$\; 
 $V' \leftarrow V$ \hspace{.6in}\%\{$V'$ is the set of unchosen vertices\}\%\;
 $X \leftarrow \emptyset$ \hspace{.6in}\%\{$X$ is the set of chosen vertices\}\%\;\;
 \For{$i= 1$ to $|V|$}{
   $v$ is chosen as a vertex from $V'$ with maximal neighborhood in $D_{i}$\;
   $\sigma(i) \leftarrow v$\;
   $V' \leftarrow V'-\{v\}$\; 
   $X \leftarrow X \cup \{v\}$\;
   $D_{i+1} \leftarrow \{v\} \cup (N(v) \cap D_{i})$; \hspace{0.5cm}\% Note: $x \in D_i$-$D_{i+1}$ $\rightarrow x\notin D_j, j>i$\%\;
}
\caption{LocalMNS \label{localMNS}}
\end{algorithm}
\vspace{0.5cm}

This algorithm is very similar to the standard Maximal Neighborhood Search (MNS) algorithm.  The only difference is in LocalMNS we are considering the neighborhood of the unvisited vertices only in $D_i$, which can be a strict subset of $X$ (the visited vertices) and in the case of MNS we are considering the neighborhood in $X$. This is the reason for the name LocalMNS. Let us look at the behavior of $LocalMNS^+$ on the example of Figure \ref{fig:exemple2}. Let $\tau=5,6,4,2,3,1$ be a cocomp ordering. $\sigma =LocalMNS^+(G,\tau)=1,3,2,4,5,6$ is a cocomp ordering and $Chainclique(G, \sigma)$ computes the maximal chain $\{1,2,3\}$, $\{1,2,4\}$, $\{1,4,5\}$ and $\{4,5,6\}$.  

\begin{property}
LocalMNS can be implemented in linear time.
\end{property}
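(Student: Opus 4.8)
The plan is to mimic the classical partition-refinement implementation of MNS (and of LBFS), but to restrict attention to the set $D_i$ rather than to the whole set $X$ of already-visited vertices. First I would recall the standard data structure: we keep the set $V'$ of unchosen vertices partitioned into an ordered list of classes, where at every stage all vertices in the same class have had the same set of "labels" generated so far, and the class order reflects the lexicographic/maximal priority among vertices. At step $i$ we pop a vertex $v$ from the first (highest-priority) class, output it as $\sigma(i)$, and then we must "propagate" $v$: for each neighbour $w$ of $v$ still in $V'$, we split the class of $w$, moving $w$ into a newly created higher-priority subclass. This is exactly the refinement step of MNS; its total cost over the whole run is $O(n+m)$ because each edge $vw$ is examined once (when $v$ is chosen) and each split touches only the affected neighbours.

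The only point where LocalMNS departs from MNS is that the neighbourhood that "counts" is $N(v)\cap D_i$ and not $N(v)\cap X$. So the next step is to observe that this is an advantage, not an obstacle: the set $D_{i+1}=\{v\}\cup(N(v)\cap D_i)$ is obtained from $D_i$ simply by intersecting with $N(v)$ and adding $v$, and — crucially, by the comment in the algorithm, $x\in D_i\setminus D_{i+1}\Rightarrow x\notin D_j$ for all $j>i$ — a vertex that leaves $D_i$ never returns. Hence the sets $D_1, D_2,\dots$ form a sequence in which each element is inserted at most once and deleted at most once, so the total work of maintaining the $D_i$'s (as, say, a doubly linked list with back-pointers from the adjacency lists) is $O(n)$ for insertions/deletions plus $O(\sum_i |N(v_i)|)=O(m)$ for computing the intersections. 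When we propagate $v$ we only refine classes using the neighbours of $v$ that lie in $D_i$; since membership in $D_i$ is testable in $O(1)$ with the linked-list markers, we simply skip the neighbours of $v$ outside $D_i$, and each edge is still inspected at most once over the whole execution.

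Putting the two pieces together: maintaining the ordered partition of $V'$ under the refinements costs $O(n+m)$ exactly as for MNS; maintaining the local sets $D_i$ costs an additional $O(n+m)$; and selecting $v$ at each step is $O(1)$ since it is the front of the first class. Therefore the whole algorithm runs in $O(n+m)$ time, which is linear in the size of $G$. I expect the only genuinely delicate point to be arguing that the partition classes correctly encode the "maximal neighbourhood in $D_i$" priority while we are allowed to ignore neighbours outside $D_i$ during refinement; this follows because once a vertex leaves $D_i$ it never influences any later $D_j$, so its past adjacencies can never again change the relative priority of the remaining unchosen vertices, and hence discarding them during propagation is sound. Everything else is the routine bookkeeping of partition refinement, so I would not spell it out in full.
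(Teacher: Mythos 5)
Your complexity accounting (each vertex enters and leaves the sets $D_i$ at most once, so all updates cost $O(n+m)$) is fine, but the data structure you build on top of it does not implement LocalMNS. The gap is in the sentence ``once a vertex leaves $D_i$ it never influences any later $D_j$, so \dots discarding them during propagation is sound.'' The problem is not that a departed vertex might influence \emph{future} priorities; it is that it has \emph{already} influenced them: when $x$ was visited, your LBFS-style refinement split classes and permanently promoted the unvisited neighbours of $x$. Partition refinement only ever splits, so when $x$ later drops out of $D$ there is no mechanism to revoke that promotion, and the class order then encodes a stale lexicographic history rather than inclusion of the current sets $N(\cdot)\cap D_i$. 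Concretely, take $V=\{v_1,v_2,v_3,v_4,w,w'\}$ with edges $v_1v_2,\,v_1v_3,\,v_1v_4,\,v_2v_3,\,v_2v_4,\,v_3w,\,v_4w'$. Your algorithm can legitimately output the prefix $v_1,v_2,v_3,v_4$ (each is popped from the front class, and each is a valid LocalMNS choice), after which the classes are $[\{w\},\{w'\}]$ because the split by $v_3$ put $w$ ahead and the later split by $v_4$ cannot move $w'$ past it. But $D_5=\{v_4\}\cup(N(v_4)\cap D_4)=\{v_1,v_2,v_4\}$, so $N(w)\cap D_5=\emptyset\subsetneq\{v_4\}=N(w')\cap D_5$: selecting $w$ at step $5$ violates the LocalMNS rule. (Your phrase ``refine classes using the neighbours of $v$ that lie in $D_i$'' also does not parse: those neighbours are already visited, while the classes contain only unvisited vertices.)

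The paper avoids this trap by not working with inclusion at all: it implements LocalMNS via \emph{LocalMCS}, choosing at each step an unvisited vertex maximizing the \emph{cardinality} $|N(u)\cap D_i|$, which is in particular inclusion-maximal. A numeric counter, unlike a lexicographic split, can be decremented: when $v$ enters $D$ you increment the counters of its unvisited neighbours, and when $x$ leaves $D$ you decrement the counters of its unvisited neighbours. Since each vertex enters and leaves $D$ at most once, each adjacency list is scanned at most twice, and the ordered partition by counter value is maintained in $O(n+m)$ overall. If you want to salvage your write-up, replace the lexicographic refinement by this counter-based bucket structure; the rest of your accounting then goes through.
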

\begin{proof}
It is well known that MNS can be implemented in linear time via MCS. So we will use a LocalMCS to compute LocalMNS. LocalMCS works the same ways as LocalMNS except that at each step $i$, instead of choosing a vertex of maximal neighborhood in $D_i$, we choose a vertex with maximum degree in $D_i$. 

To implement LocalMCS we use  a partition refinement technique. We use an ordered partition in which each part contains the vertices of $V'$ having a given degree in $D_i$.
This structure can be easily maintained  when $D_{i+1}$ is formed from $D_i$.  If a vertex $x \in D_i$ does not appear in
$D_{i+1}$, then $x \notin D_j, j>i$ and thus, during the execution of LocalMCS, we visit the neighborhood of each vertex at most 2 times.
Therefore LocalMCS is in $O(n+m)$.

\end{proof}

\begin{property}\label{complMNS}
    LocalMNS$^+(G, \sigma)$ can be implemented in  $O(n+mlogn)$. 
\end{property}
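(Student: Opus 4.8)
The plan is to realise $LocalMNS^{+}$ by a partition refinement scheme, much as ordinary MNS is realised in linear time, but with the two extra features forced by Algorithm~\ref{localMNS}: refiners that drop out of the sliding window $D_i$ must be \emph{forgotten}, and the tie among the vertices that $LocalMNS$ may legally pick must be broken by $\sigma$. Note first that the previous proposition already delivers everything \emph{except} the ``$+$'': $LocalMCS$ picks a vertex whose neighbourhood in $D_i$ has maximum cardinality, which is in particular inclusion-maximal, so $LocalMCS$ is one valid instance of $LocalMNS$; but to choose the $\sigma$-rightmost vertex among \emph{all} inclusion-maximal candidates we genuinely need the MNS-style partition of $V'=V\setminus X$ whose \emph{first class} is exactly the set of $u\in V'$ with $N(u)\cap D_i$ inclusion-maximal. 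So the first thing I would do is set up this ordered partition $\mathcal{P}_i$ of $V'$ and check that the invariant ``first class $=$ candidate set'' is preserved by the two elementary updates done at each step: (a) when $v$ is chosen and added, refine every class $C$ into $C\cap N(v)$ placed just before $C\setminus N(v)$ (the standard MNS split); and (b) coarsen $\mathcal{P}_i$ to forget the refiners of $L_i:=D_i\setminus N(v)$, which leave forever since $x\in D_i\setminus D_{i+1}$ implies $x\notin D_j$ for all $j>i$.

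Next I would fix the data structures. The first class of $\mathcal{P}_i$ is kept in a balanced search tree (equivalently a max-heap) keyed by $\sigma^{-1}$, so the next chosen vertex is extracted in $O(\log n)$ time and a vertex entering or leaving the first class during a refinement is inserted/deleted in $O(\log n)$ time. To support the ``forget'' coarsening I would additionally maintain the \emph{refinement forest}: one internal node for every split ever performed, labelled by the refiner that caused it, whose current leaves are the classes of $\mathcal{P}_i$; forgetting the refiners of $L_i$ amounts to contracting all internal nodes whose label lies in $L_i$ and re-merging the affected leaf-blocks, with the surviving classes again held in balanced trees so that each merge costs $O(\log n)$. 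Everything else --- $V'$, $X$, the $D_i$'s, and which vertices of $D_i$ are adjacent to the newly chosen $v$ --- is maintained in $O(n)$ extra total time exactly as in the linear $LocalMCS$ implementation (a vertex's neighbourhood is scanned only when it is chosen and when it leaves $D$, i.e.\ at most twice).

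The running time would then be argued by amortisation. Each vertex $v$ is added to $D$ exactly once, and when it is processed the number of classes it splits, the number of vertices it relocates between classes, and the number of new forest nodes it creates are all $O(|N(v)|)$; summing over $v$ gives $O(m)$ splits, $O(m)$ relocations and $O(n+m)$ forest nodes created over the whole run. Each vertex leaves $D$ at most once, and forgetting it only ever \emph{contracts} forest nodes, so the total number of contractions is bounded by the total number of nodes ever created, again $O(n+m)$. With $O(n)$ candidate extractions on top, we have $O(n+m)$ primitive operations, each costing $O(\log n)$ on the balanced trees, for a total of $O\big((n+m)\log n\big)=O(n+m\log n)$.

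The main obstacle is the ``forget'' coarsening of step~(b): one has to prove that after deleting the refiners of $L_i$ the first class really becomes the new candidate set --- removing refiners can only \emph{enlarge} the candidate class, so classes that were separated only by $L_i$-refiners must be fused, and these need not be consecutive in the partition order once later refiners have acted on them --- and that the re-merging can be organised so that its total cost is still charged against the $O(n+m)$ splits and stays within the $O((n+m)\log n)$ budget. Getting those two points right, namely correctness of the coarsened partition and $O(\log n)$ amortised cost per structural event, is essentially the whole content of the proof; the accounting above and the $\sigma$-keyed heap for the tie-break are routine.
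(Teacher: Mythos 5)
Your proposal diverges from the paper's proof at the very first step, and the divergence is what creates the gap. The paper does \emph{not} maintain the inclusion-maximality partition at all: it implements LocalMNS$^+$ via LocalMCS$^+$, i.e.\ it keeps an ordered partition of $V'$ into buckets indexed by the \emph{integer degree} in $D_i$, with each bucket stored as a balanced tree keyed by $\sigma$. This is legitimate because a vertex of maximum cardinality neighbourhood in $D_i$ is in particular inclusion-maximal, so the resulting ordering is a valid LocalMNS ordering, and every downstream use of the ``$+$'' rule (Lemma \ref{localMNScocomp}, Proposition \ref{maxclique}, Theorem \ref{th:maxchain}) only ever needs the implication ``$u$ chosen before $v$ with $u<_\tau v$ $\Rightarrow$ $N(u)\cap D_i\not\subseteq N(v)\cap D_i$'', which follows from $|N(v)\cap D_i|<|N(u)\cap D_i|$. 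With integer buckets, every update is a single vertex moving to an adjacent bucket ($\pm 1$ in degree) when a neighbour enters or leaves $D$; there are $O(m)$ such events and each costs $O(\log n)$ for the tree insertion/deletion, giving $O(n+m\log n)$ with no coarsening problem at all. You explicitly reject this route in your first paragraph on the grounds that the ``$+$'' tie-break must range over all inclusion-maximal candidates; that stricter reading is not needed for any of the paper's results and it is what forces you into the hard version of the problem.

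Having taken the hard route, you then leave its two essential points unresolved, and you say so yourself: you do not prove that after forgetting the refiners in $L_i=D_i\setminus N(v)$ the first class of the coarsened partition is again exactly the inclusion-maximal candidate set, and you do not establish the amortised cost of the re-merging. The second point is not just bookkeeping: the classes to be fused after a ``forget'' need not be consecutive, and merging two balanced trees of sizes $a$ and $b$ is not an $O(\log n)$ operation, so charging ``one contraction $=$ one $O(\log n)$ event'' against the $O(n+m)$ nodes ever created does not bound the work of moving vertices between trees during merges (a vertex can be relocated by many successive merges, and even a smaller-into-larger policy only yields an $O(n\log^2 n)$ term, which is not $O(n+m\log n)$ for sparse graphs). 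As written, the proposal is a plan whose admitted ``whole content'' is missing; the fix is simply to prove the statement the way the paper does, by maintaining degree buckets for LocalMCS$^+$ and observing that this is a correct implementation of LocalMNS$^+$ for all purposes in the paper.
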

\begin{proof}
As in the previous proposition  $LocalMNS^+$ can be implemented via  $LocalMCS^+$. 
We will also use an ordered partition in which each part contains the vertices of $V'$ having a given degree in $D_i$. But each part has to be ordered with respect to $\sigma$. This is the bottleneck of this algorithm. To handle this difficulty each part will be represented by a tree data structure. This leads to an algorithm in O(n+mlogn).

\end{proof}
\vspace{0.5cm}

We now prove that Chainclique$(G, \sigma)$ on a LocalMNS cocomp ordering outputs a maximal chain of cliques. Let $G$ be a cocomparability graph and $\tau$ a cocomp ordering of $G$. The proof is organized as follows. We first show that $\sigma=LocalMNS^+(G,\tau)$ is a cocomp ordering. Then we describe the structure of a maximal chain of ${\cal MA}(P_\sigma)$ and its relation to $P_\sigma$. Finally we prove that $Chainclique(G,\sigma)$ outputs a maximal chain of ${\cal MA}(P_\sigma)$.

%

\begin{lemma}\label{localMNScocomp}
    If G is a cocomparability graph, then $\tau$ is a cocomp ordering if and only if $\sigma=LocalMNS^+$$(G, \tau)$ is a cocomp ordering.
\end{lemma}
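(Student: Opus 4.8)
The plan is to prove the two implications of the equivalence separately; the content is in the forward direction, that $\text{LocalMNS}^+$ maps cocomp orderings to cocomp orderings, and the converse I would handle by contraposition, reusing the same tools.

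For the forward direction, suppose $\tau$ is a cocomp ordering but $\sigma=\text{LocalMNS}^+(G,\tau)$ is not, and fix an umbrella $a<_\sigma b<_\sigma c$ of $\sigma$ (so $ac\in E$ and $ab,bc\notin E$) that is extremal, say with the $\sigma$-position of the middle vertex $b$ as small as possible and, subject to that, with the $\sigma$-position of the left endpoint $a$ as small as possible. I would first record the features of the search to be used: (1) each $D_i$ is a clique, since $D_{i+1}=\{v\}\cup(N(v)\cap D_i)$ with $N(v)\cap D_i$ a clique all of whose vertices are adjacent to $v$; (2) once a vertex leaves the sequence $D_1,D_2,\dots$ it never reappears (as the algorithm records); and (3) when $v=\sigma(i)$ is chosen, no still-unchosen vertex has intersection with $D_i$ strictly containing $N(v)\cap D_i$, and among the unchosen vertices whose intersection with $D_i$ is inclusion-maximal, $v$ is the $\tau$-rightmost. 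The second ingredient is a consequence of $\tau$ being umbrella-free: since $b$ is non-adjacent to both endpoints of the edge $ac$, $b$ is not $\tau$-between $a$ and $c$, so either $b<_\tau a$ and $b<_\tau c$, or $a<_\tau b$ and $c<_\tau b$.

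The heart of the argument tracks the left endpoint $a$ inside the active clique after it is chosen at step $p_a$. If $a$ is evicted from the $D$-sequence at some step $j$ with $p_a<j<\sigma^{-1}(b)$, then $\sigma(j)$ is non-adjacent to $a$ and lies $\sigma$-between $a$ and $b$, hence $\sigma$-before $c$; so, were $\sigma(j)$ non-adjacent to $c$, the triple $a<_\sigma\sigma(j)<_\sigma c$ would be an umbrella with a smaller middle position, contradicting extremality. Thus $\sigma(j)c\in E$, and one continues from the choice made at step $j$ and the clique $D_{j+1}\ni\sigma(j)$ (splitting further on whether $\sigma(j)b\in E$, the sub-case $\sigma(j)b\notin E$ giving the umbrella $(\sigma(j),b,c)$). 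If instead $a$ is still in the active clique when $b$ is chosen, then since $ab\notin E$ the vertex $b$ is not universal to that clique; comparing $b$ with the still-unchosen $c$ under rule (3), and using the $\tau$-position constraint on $b$ to fix its $\tau$-order relative to $a$ and $c$, one is driven back to step $p_a$: there $b$ stands $\tau$-right of $a$, so $b$ cannot have been an inclusion-maximal candidate, which produces a vertex $d\in D_{p_a}$ non-adjacent to $b$ but adjacent to another unchosen vertex; since $D_{p_a}$ is a clique, $d$ is adjacent to $a$, and examining the adjacency between $d$ and $c$ yields either a smaller umbrella of $\sigma$ or an umbrella of $\tau$, the contradiction. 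For the converse I would argue contrapositively, fixing an umbrella of $\tau$ and propagating it forward through the search with the same two ingredients.

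The step I expect to be the main obstacle is exactly what distinguishes LocalMNS from ordinary MNS: rule (3) compares neighborhoods inside the \emph{local} clique $D_i$ rather than inside the whole set of already-visited vertices, so knowing ``$b$ was not an inclusion-maximal candidate at step $i$'' only constrains the adjacencies of $b$ to the particular vertices that happen to lie in $D_i$ then. Keeping control of the contents of $D_i$ across the relevant steps, and using the clique property of the $D_i$'s together with the ``no return'' property to force these vertices to interact with $a$, $b$, $c$ as the contradiction requires, is the delicate bookkeeping on which the proof rests.
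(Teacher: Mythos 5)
Your overall strategy --- a direct extremal-umbrella argument that tracks the active clique $D_i$ through the run of LocalMNS$^+$ --- is not the paper's route, and as written it contains genuine gaps rather than just unfinished bookkeeping. In the decisive case (the left endpoint $a$ still lies in the active clique when $b$ is chosen) your argument is a chain of assertions: you claim one is ``driven back to step $p_a$'' where $b$ stands $\tau$-right of $a$, but the $\tau$-position of $b$ relative to $a$ is precisely what has to be proved (it is the flipping property for the non-edge $ab$), and the non-betweenness of $b$ with respect to $a$ and $c$ in $\tau$ only tells you that $b$ lies on one side of both, not which side. Moreover, the one concrete deduction you offer there is wrong: a vertex $d\in D_{p_a}$ witnessing that $N(b)\cap D_{p_a}$ is not inclusion-maximal need not be adjacent to $a$, because $a\notin D_{p_a}$ (that set is the active clique \emph{before} $a$ is selected and added), so the clique property of $D_{p_a}$ says nothing about the pair $a,d$. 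The eviction case and the entire converse direction are likewise left as promissory notes.

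The paper avoids all of this by first establishing a global \emph{flipping property}: for every non-adjacent pair with $u<_\sigma v$ one has $v<_\tau u$. This is proved by taking $u$ to be the $\sigma$-leftmost vertex admitting a non-flipping non-neighbour $v$; the $+$ tie-break rule then produces an already-visited $w$ with $uw\in E$, $vw\notin E$ and $w<_\sigma u$, and the minimality of $u$ forces $v<_\tau w$, so that $u<_\tau v<_\tau w$ is an umbrella of $\tau$, a contradiction. With the flipping property in hand, any umbrella $a<_\sigma b<_\sigma c$ of $\sigma$ reverses wholesale to the umbrella $c<_\tau b<_\tau a$ of $\tau$, and the lemma follows. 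If you wish to salvage your approach, isolating and proving this flipping property is the missing idea; without it your case analysis cannot pin down the $\tau$-order of $b$ relative to $a$ and $c$, which is the point on which every branch of your argument turns.
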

\begin{proof}

Note that this lemma can be stated as a corollary of the characterization in \cite{BigArt} of the searches that preserve being a cocomp
ordering.  Instead we give a direct proof.

First we show that $\sigma$ and $\tau$ satisfy the ``flipping property'' insofar as two nonadjacent vertices $u, v$ must be in different
relative orders in the two searches.  To prove this, we assume that $u <_{\sigma} v$ and $u <_{\tau} v$ where, without loss of
generality, $u$ is the leftmost vertex in $\sigma$ that has such a non flipping non neighbor $v$.  Now, because of the ``$+$'' rule
in order for $u <_{\sigma} v$ at the time $u$ was selected by $\sigma$, there must exist a previously visited vertex $w$ in $\sigma$
such that $uw \in E(G)$, $vw \notin E(G)$.  Note that $w <_{\sigma} u <_{\sigma} v$.  By the choice of $u$ in $\sigma$, we see
that $v <_{\tau} w$ and thus there is an umbrella $u <_{\tau} v <_{\tau}  w$ in $\tau$, contradicting $\tau$ being a cocomp ordering.  

Now assume that $\tau$ is a cocomp ordering but $\sigma$ is not.  Let $a <_{\sigma} b <_{\sigma}  c$ be an umbrella in $\sigma$ where $ac \in E(G), 
ab, bc \notin E(G)$.  By the ``flipping property'', $b <_{\tau} a$ and $c <_{\tau} b$ thereby showing that $c <_{\tau} b <_{\tau}  a$ forms
an umbrella in $\tau$ contradicting $\tau$ being a cocomp ordering.

The rest of the proof follows immediately.
\end{proof}

Let us introduce some terminology to help us describe the behavior of Chainclique on an ordering $\sigma$. Let $j_i$ be the first value of $j$ such that  $\sigma (i)$ belongs to $C_j$  (i.e., $C_{j_i}$ is the leftmost clique containing $\sigma (i)$. Let $C_{j}^1,\dots ,C_{j}^{l_j}$ be the sequence to build the clique $C_{j}$. Let $p_i$ be the first value of p such that $\sigma(i)$ belongs to the clique $C_{j_i}^p$.

We are ready to prove that Chainclique$(G, \sigma)$ where $\sigma$ is a LocalMNS cocomp ordering  outputs a maximal chain of ${\cal MA}(P_\sigma)$. The proof is organized as follows. The next claim links Chainclique$(G, \sigma)$ and LocalMNS. In proposition \ref{maxclique}, we prove that the cliques output by Chainclique$(G, \sigma)$ where $\sigma$ is a LocalMNS cocomp ordering are maximal cliques. In theorem \ref{th:maxchain}, we prove that the chain is a maximal chain of  ${\cal MA}(P_\sigma)$.

\begin{claim}\label{setdi}
    Let G be a cocomparability graph and  let $\tau$ be a cocomp ordering.
    
    If  $\sigma=LocalMNS^+(G, \tau)$ and $Chainclique(G, \sigma) = C_1,\dots,C_k$ then for all values of $i$ the set $D_{i+1}$ computed by LocalMNS equals the set $C_{j_i}^{p_i}$ computed by $Chainclique(G, \sigma)$.

\end{claim}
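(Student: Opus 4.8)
The plan is to prove the claim by induction on $i$, carefully tracking the parallel evolution of the two data structures $D_{i+1}$ (maintained by LocalMNS) and $C_{j_i}^{p_i}$ (the partial clique that Chainclique has just finished after placing $\sigma(i)$). The key observation is that both structures are defined by the same recursive rule ``new set $= \{v\} \cup (N(v) \cap \text{previous set})$'', where $v = \sigma(i)$: in LocalMNS, $D_{i+1} = \{v\} \cup (N(v) \cap D_i)$; in Chainclique, when $\sigma(i)$ starts a new clique we set $C_{j_i}^1 = \{\sigma(i)\} \cup (N(\sigma(i)) \cap C_{j_i - 1})$, and when $\sigma(i)$ merely augments the current clique we set $C_{j_i}^{p_i} = C_{j_i}^{p_i - 1} \cup \{\sigma(i)\}$, which since $\sigma(i)$ is universal to $C_{j_i}^{p_i-1}$ also equals $\{\sigma(i)\} \cup (N(\sigma(i)) \cap C_{j_i}^{p_i - 1})$. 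So in both cases the new set has the same shape, and to run the induction I only need to match the ``previous set'' in each recursion: I must show that the set $D_i$ against which LocalMNS intersects $N(\sigma(i))$ coincides with the set ($C_{j_i-1}$ or $C_{j_i}^{p_i-1}$) against which Chainclique intersects it.

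First I would set up the induction hypothesis as: for every $i' < i$, $D_{i'+1} = C_{j_{i'}}^{p_{i'}}$, and moreover the clique index $j$ and internal index $p$ advance in lockstep with the LocalMNS step counter in the natural way — i.e., $\sigma(i')$ augments the current clique precisely when it was ``universal to $C_j$'' in Chainclique, which I want to relate to the LocalMNS selection rule. The base case $i=1$ is immediate: $D_2 = \{\sigma(1)\}$ and $C_1^1 = \{\sigma(1)\}$ since $C_0 = D_1 = \emptyset$. For the inductive step, I distinguish the two behaviors of Chainclique at step $i$. If $\sigma(i)$ augments the current clique $C_{j}$ (so $j_i = j$, $p_i = p_{i-1}+1$), then $D_i = D_{i-1+1} = C_{j_{i-1}}^{p_{i-1}} = C_j^{p_i-1}$ by the induction hypothesis, and the recursions give $D_{i+1} = \{\sigma(i)\}\cup(N(\sigma(i))\cap D_i) = \{\sigma(i)\}\cup(N(\sigma(i))\cap C_j^{p_i-1}) = C_j^{p_i}$, using universality of $\sigma(i)$ to $C_j^{p_i-1}$. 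If instead $\sigma(i)$ starts a new clique $C_{j_i}$ with $j_i = j_{i-1}+1$ and $p_i = 1$, then again $D_i = C_{j_{i-1}}^{p_{i-1}} = C_{j_{i-1}}^{l_{j_{i-1}}} = C_{j_{i-1}} = C_{j_i - 1}$ (the last completed clique before $\sigma(i)$ is exactly $C_{j_i-1}$ because $\sigma(i)$ is considered in $\sigma$-order and $C_{j_i-1}$ was finalized at the step just before $\sigma(i)$ was selected), and then $D_{i+1} = \{\sigma(i)\}\cup(N(\sigma(i))\cap C_{j_i-1}) = C_{j_i}^1 = C_{j_i}^{p_i}$.

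The main obstacle, and the place where I expect the real work to lie, is justifying that the two algorithms ``stay in phase'' — i.e., that Chainclique's test ``$\sigma(i)$ is universal to $C_j$'' is consistent with the progression of $D_i$, and, more subtly, that LocalMNS's choice rule (pick the unchosen vertex of maximal neighborhood in $D_i$) does not interfere with this. Here I would lean on the fact that $\sigma$ is the output of LocalMNS itself, so the vertex selected at step $i$ is by definition one of maximal neighborhood in $D_i$; combined with Property~\ref{propconse} (the consecutiveness property of the Chainclique output) and the note recorded inside LocalMNS that $x \in D_i - D_{i+1} \Rightarrow x \notin D_j$ for $j>i$ — which mirrors exactly the induction used to prove Property~\ref{propconse} — this should force the set passed forward by LocalMNS to be precisely the current partial clique of Chainclique. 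I would also remark that the ``$+$''/cocomp aspects of $\sigma$ are not actually needed for this claim; the statement holds for $\sigma = \mathrm{LocalMNS}(G,\tau)$ because both procedures process vertices in the order $\sigma$ and apply the identical neighborhood-intersection recursion, so the induction closes cleanly once the phase-alignment is pinned down.
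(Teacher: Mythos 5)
Your proposal is correct and follows essentially the same route as the paper: an induction on $i$ with base case $D_2=\{\sigma(1)\}=C_1^1$ and a two-case inductive step depending on whether $\sigma(i)$ is universal to the current set, the ``phase alignment'' being automatic because the induction hypothesis $D_i=C_{j_{i-1}}^{p_{i-1}}$ makes both algorithms branch on the very same condition. The extra machinery you contemplate (Property~\ref{propconse}, the note inside LocalMNS) is not needed, and your observation that the ``$+$''/cocomp hypotheses play no role in this particular claim is accurate.
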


\begin{proof}
    The proof is by induction. The inductive hypothesis is $H_i$:  for all $i \ge 1$, $D_{i+1}=C_{j_i}^{p_i}$. 
    
    
Since $D_{2}=\{\sigma(1)\}$ and $C_{1}^{1}=\{\sigma(1)\}$, $H_1$ is trivially true. 
 Assume the hypothesis is true for the first i-1 vertices with $i>1$. We have two cases: $\sigma(i)$ is complete to $D_i$ or not. In the first case, LocalMNS increases the set $D_i$ by adding $\sigma(i)$ and Chainclique$(G, \sigma)$ will increase the clique $C_{j_{i-1}}^{p_{i-1}}$ by adding $\sigma(i)$. Therefore using $H_i$, we deduce that $D_{i+1}=C_{j_i}^{p_i}$. In the second case, LocalMNS sets $D_{i+1}=D_i \cap N(\sigma(i))$. In the same way, $Chainclique(G, \sigma)$ creates a new clique $C_{j_i}^1= C_{j_{i-1}}^{p_{i-1}}  \cap N(\sigma(i))$. Therefore using $H_i$, we deduce that $D_{i+1}=C_{j_i}^{p_i}$.

\end{proof}

\begin{property}\label{maxclique}
    Let G be a cocomparability graph and let $\tau$ be a cocomp ordering.
    
    If  $\sigma=LocalMNS^+(G, \tau)$ then $Chainclique(G, \sigma)=C_1,\dots,C_k$ is a chain of maximal cliques of ${\cal MA}(P_\sigma)$ such that $C_1 <_{{\cal MA}(P_\sigma)} C_2 <_{{\cal MA}(P_\sigma)} \dots <_{{\cal MA}(P_\sigma)} C_k$.
\end{property}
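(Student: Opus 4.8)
The plan is to prove two things: that each $C_i$ output by $Chainclique(G,\sigma)$ is a maximal clique of $G$ (equivalently a maximal antichain of $P_\sigma$), and that the resulting sequence is a strictly increasing chain in ${\cal MA}(P_\sigma)$. I would first dispose of the chain-ordering part, which is nearly immediate from work already done: Property \ref{propconse} shows the sequence respects the consecutiveness condition, and Property \ref{interorder} shows that for $x\in C_i-C_j$, $y\in C_j-C_i$ with $i<j$ we have $x<_\sigma y$. Since $\sigma$ is a cocomp ordering (Lemma \ref{localMNScocomp}), nonadjacency together with $x<_\sigma y$ forces $x <_{P_\sigma} y$ by the definition of $P_\sigma$; hence for consecutive (indeed any $i<j$) cliques $C_i <_{{\cal MA}(P_\sigma)} C_j$ by the definition of $\leq_{{\cal MA}(P)}$ (using that every $x\in C_i-C_j$ has a witness $y\in C_j-C_i$ with $x<_{P_\sigma}y$, which exists because $C_j$ is a maximal clique so $x$ is not universal to $C_j-C_i$, combined with Property \ref{interorder}). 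Strictness follows since consecutive cliques are distinct. So the real content is maximality of each $C_i$.

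The main obstacle is showing that each clique $C_j$ produced by $Chainclique(G,\sigma)$ is a maximal clique when $\sigma=LocalMNS^+(G,\tau)$. Here is where Claim \ref{setdi} does the heavy lifting: it identifies the intermediate sets $C_{j_i}^{p_i}$ built by $Chainclique$ with the sets $D_{i+1}$ computed by LocalMNS. The key point I would exploit is the LocalMNS selection rule — at each step the chosen vertex has a \emph{maximal} neighborhood within the current set $D_i$. Suppose for contradiction that some $C_j$ is not a maximal clique; then there is a vertex $w\notin C_j$ adjacent to every vertex of $C_j$. Let $\sigma(i)$ be the vertex whose selection ``closed off'' $C_j$, i.e. $C_j=C_{j_i}^{p_i}$ is finalized when the next vertex $\sigma(i+1)$ fails to be universal to $C_j$ (or $C_j$ is the last clique). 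I would argue that such a universal $w$ must satisfy $w<_\sigma$ (the vertex that starts $C_j$) or sits inside the index range spanning $C_j$, and then derive a contradiction with either the ``$+$'' rule, the maximal-neighborhood rule of LocalMNS, or the consecutiveness property: specifically, a vertex $w$ universal to $C_j$ but excluded from it would have been available at the step where $C_j$ was being augmented and, being complete to $D_i=C_j$ at that moment, would have been forced into $C_j$ by the inner \textbf{while} loop of $Chainclique$ — unless $w$ was already discarded earlier, in which case $w\in C_{j'}-C_{j'+1}$ for some $j'<j$ and the consecutiveness/``no-return'' property (the comment in LocalMNS and Property \ref{propconse}) says $w\notin C_j$, but then $w$ cannot be universal to all of $C_j$ because $C_j$ contains vertices introduced strictly after $w$ was dropped, and the maximality of the neighborhood chosen at those later steps rules out adjacency to all of $C_j$. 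Turning this sketch into a clean argument — pinning down exactly which step to look at and which of the three rules yields the contradiction in each case — is the delicate part.

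Once maximality of each $C_j$ is established, the identification ``maximal clique of $G$ $=$ maximal antichain of $P_\sigma$'' is exactly the observation recorded in the proof of Theorem \ref{cliquecocomp}, so $C_1,\dots,C_k\in{\cal MA}(P_\sigma)$, and combined with the chain-ordering argument above we get $C_1 <_{{\cal MA}(P_\sigma)} C_2 <_{{\cal MA}(P_\sigma)} \cdots <_{{\cal MA}(P_\sigma)} C_k$, which is the statement. I would expect the write-up to lean on Claim \ref{setdi} repeatedly and to invoke Lemma \ref{variation1} to convert the ``$\forall x\,\exists y$ with $x<_P y$'' witnesses into the strict lattice relation.
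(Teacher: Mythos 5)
The chain-ordering half of your argument is fine and matches the paper: once each $C_j$ is known to be a maximal clique, for $x \in C_i - C_j$ the maximality of $C_j$ supplies a non-neighbor $y \in C_j - C_i$, Property \ref{interorder} gives $x <_\sigma y$, hence $x <_{P_\sigma} y$, and the lattice relation follows. The genuine gap is in the maximality argument, specifically in the case you describe as ``$w$ was already discarded earlier.'' Your proposed mechanism there --- that ``the maximality of the neighborhood chosen at those later steps rules out adjacency to all of $C_j$'' --- does not work: the LocalMNS selection rule constrains which \emph{unvisited} vertex is picked next, based on its neighborhood inside the current set $D_i$; it says nothing about whether an already-visited vertex $w$ is adjacent to vertices introduced after $w$ was dropped from the $D$-sets. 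A vertex $w$ that left the chain at clique $C_h$ can perfectly well be adjacent to every vertex of a much later clique as far as the selection rule is concerned, so no contradiction is reached this way.

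The paper closes this case with a different and more delicate argument that your sketch is missing: choose $w$ to be the \emph{rightmost} vertex in $\sigma$ universal to the first non-maximal clique $C_g$, let $C_h$ be the last clique containing $w$, and let $x$ be the first vertex of $C_{h+1}-C_h$; then $wx \notin E$ (by how Chainclique initializes $C_{h+1}$), hence $x \notin C_g$, and by the extremal choice of $w$ there is some $y \in C_g$ with $xy \notin E$. If $x <_\sigma y$ then $w,x,y$ is an umbrella in $\sigma$, contradicting Lemma \ref{localMNScocomp}; if $y <_\sigma x$ then $y$ lies both in a clique of index at most $h+1$ and in $C_g$, so Property \ref{propconse} forces $y \in C_{h+1}$, contradicting $xy \notin E$. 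The missing ingredients are thus the extremal choice of $w$, the umbrella-freeness of $\sigma$, and the consecutiveness property --- none of which appear in your treatment of that case. (Your handling of the other case, where $w$ comes after the first vertex of $C_{g+1}-C_g$, is essentially the paper's: Claim \ref{setdi} gives $D_i = C_g$ at that step, and $w$ being complete to $D_i$ while the chosen vertex is not contradicts the LocalMNS rule.)
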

\begin{proof}
    We start by proving that $\capc=C_1,\dots,C_k$ are all maximal cliques of $G$ and then $C_1 <_{{\cal MA}(P_\sigma)} C_2 <_{{\cal MA}(P_\sigma)} \dots <_{{\cal MA}(P_\sigma)} C_k$.
    
    Assume for contradiction that some cliques are not maximal and let $C_g$ be the first clique of the chain which is not a maximal clique of $G$. Let $w$ be a vertex complete to $C_g$ but $w \notin C_g$ and let $w$ be the rightmost such vertex in $\sigma$. Let $v$ be the first vertex in $\sigma$ of $C_{g}- C_{g-1}$. We have two cases: either $v <_\sigma w$ or $w <_\sigma v$.

    In the first case, let $u=\sigma(i)$ be first vertex of $C_{g+1}-C_{g}$. Since $w \notin C_g$, we must have $u<_\sigma w$. Using claim \ref{setdi}, we see that in LocalMNS$^+$ at step $i-1$, 
    $D_i=C_g$. But now at the time $u$ was chosen, $w$ is complete to $D_i$ but $u$ is not, thereby, contradicting LocalMNS$^+$ choosing $u$.
    
    In the second case, let $C_h$ be the last clique in the chain with $w$. We must have that $h<g$ since $w <_\sigma v$. Since $w$ is a neighbor of $v$, $w$ is not in $C_{g-1}$ (otherwise $w \in C_g$) and we have that $h+1<g$. So now let us consider the first vertex $x$ of $C_{h+1}-C_h$ in the ordering $\sigma$. Since $w$ does not belong to $C_{h+1}$, we know that $wx \notin E$ and since $w$ is universal to $C_g$ we have that $x \notin C_g$. Because $w$ is the rightmost complete vertex to $C_g$, $x$ must not be adjacent to some vertex $y$ of $C_g$. Now either $x <_\sigma y$ or $y <_\sigma x$. In the first case we know that $w <_\sigma x <_\sigma y$ and $w,x,y$ is an umbrella. Therefore $\sigma$ is not a cocomp ordering, which is a contradiction to lemma \ref{localMNScocomp}. In the second case since $y$ appears before $x$, we have that $first[y] \leq h+1$. But now $y$ belongs to $C_{first[y]}$ and $C_g$ and so using property \ref{propconse}, we know that $y \in C_{h+1}$. This is a contradiction to $xy \notin E$. Thus all the cliques in $\capc$ are maximal cliques of $G$.

    Let us now prove that  $C_1 <_{{\cal MA}(P_\sigma)} C_2 \dots <_{{\cal MA}(P_\sigma)} C_k$. For this purpose, we show that if $i<j$ then $\forall x \in C_i$, $\exists y \in C_j$ such that $xy \notin E$, $x \leq_\sigma y$. Assume first that $x\in C_i\cap C_j$ then we have that $xx \notin E$ and  $x \leq_\sigma x$. Now assume that $x\in C_i- C_j$. Since $C_j$ is a maximal clique of $G$ and $x$ does not belong to $C_j$, there exists $y \in C_j -C_i$ such that $xy \notin E$. We know by proposition \ref{interorder} that $\forall x \in C_i-C_j$, $\forall y \in C_j-C_i$, $i<j$, $x <_\sigma y$. Therefore  $x <_\sigma y$ and so $C_i  <_{{\cal MA}(P_\sigma)} C_j$. 
\end{proof}

Let us now prove that Chainclique forms a maximal chain.

\begin{theorem}\label{th:maxchain}
    For a cocomparability graph G and a cocomp ordering $\tau$ of G, if  $\sigma=LocalMNS^+(G, \tau)$ then $Chainclique(G, \sigma)$ is a maximal chain of maximal cliques of ${\cal MA}(\sigma$).
\end{theorem}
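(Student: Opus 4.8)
The plan is to combine Proposition \ref{maxclique} with Proposition \ref{structchain}. By Proposition \ref{maxclique} we already know that $Chainclique(G,\sigma)=C_1,\dots,C_k$ is a chain of maximal cliques with $C_1 <_{{\cal MA}(P_\sigma)} C_2 <_{{\cal MA}(P_\sigma)} \dots <_{{\cal MA}(P_\sigma)} C_k$ in the lattice ${\cal MA}(P_\sigma)$. To upgrade this to a \emph{maximal} chain, Proposition \ref{structchain} tells us it suffices to verify three things: that $C_1$ is exactly the set of sources of $P_\sigma$; that $C_k$ is exactly the set of sinks of $P_\sigma$; and that $C_{i-1} \prec_{{\cal MA}(P_\sigma)} C_i$ (i.e., $C_i$ covers $C_{i-1}$) for every $1 < i \le k$.

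First I would handle $C_1$. The clique $C_1$ is built by Chainclique starting from $\sigma(1)$ and augmenting with vertices universal to the current clique. Since $\sigma$ is a cocomp ordering, a vertex $x$ is a source of $P_\sigma$ (a minimal element, equivalently a sink of the transitive orientation of $\overline{G}$, depending on the convention fixed for $P_\sigma$) precisely when $x$ has no non-neighbor to its left in $\sigma$; I would argue that these are exactly the vertices Chainclique places into $C_1$ before it is forced to open $C_2$, using the umbrella-free property of $\sigma$ to show that the first vertex not in $C_1$ fails to be a source and that every source lies in $C_1$. Dually, $C_k$ being the set of sinks follows by a symmetric argument applied to the right end of $\sigma$: the last clique collects precisely the vertices with no non-neighbor to their right. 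For the covering condition, I would suppose for contradiction that some $C_{i-1}$ is not covered by $C_i$, so there is a maximal antichain $B$ with $C_{i-1} <_{{\cal MA}(P_\sigma)} B <_{{\cal MA}(P_\sigma)} C_i$; then I would derive a contradiction from the way Chainclique passes directly from $C_{i-1}$ to $C_i$ when the newly considered vertex $v = \sigma(i_0)$ (the first vertex of $C_i - C_{i-1}$) is not universal to $C_{i-1}$, using Claim \ref{setdi} to identify $C_{i-1}$ with the set $D_{i_0}$ maintained by LocalMNS and exploiting the ``maximal neighborhood in $D_{i_0}$'' selection rule together with the consecutiveness Property \ref{propconse} to exclude the existence of an intermediate maximal antichain.

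The main obstacle I expect is the covering step. The source/sink identifications are essentially boundary bookkeeping about which vertices can appear in the first/last clique, and are close to arguments already appearing in the proof of Proposition \ref{structchain}. The covering argument, however, requires understanding \emph{why} no maximal antichain can be squeezed strictly between two consecutive Chainclique cliques, and this is exactly the place where the LocalMNS tie-break rule (as opposed to an arbitrary LBFS/LDFS, which fails per the example of Figure \ref{fig:exemple2}) is essential. I would need to translate ``$B$ lies strictly between $C_{i-1}$ and $C_i$'' into a concrete vertex witness, and then show that such a witness would have had to be selected by LocalMNS before $v$, or else would force $\sigma$ to contain an umbrella, contradicting Lemma \ref{localMNScocomp}; getting the quantifiers and the positions of the witnessing vertices right (along the lines of Claim \ref{middleclique} and the $first/last$ manipulations in the proof of Theorem \ref{interval-chordal}) is the delicate part. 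Once the three bullet conditions of Proposition \ref{structchain} are established, the conclusion that $Chainclique(G,\sigma)$ is a maximal chain of ${\cal MA}(P_\sigma)$ is immediate.
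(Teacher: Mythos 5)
Your overall strategy is exactly the paper's: invoke Proposition \ref{maxclique} to get a chain of maximal cliques, verify the three conditions of Proposition \ref{structchain} (first clique $=$ sources, covering, last clique $=$ sinks), and conclude. Two points, one minor and one substantive. The minor one: you claim the identification $C_1 = \{\text{sources of } P_\sigma\}$ follows from the umbrella-free property of $\sigma$. It does not. Take the $P_3$ with edges $ab, bc$, whose complement is the order $a <_P c$ with $b$ incomparable to both; the ordering $a, c, b$ is umbrella-free, yet Chainclique produces $C_1 = \{a\}$ while the source set is $\{a,b\}$. What saves you is either the LocalMNS selection rule directly (the paper shows the sources must form a prefix of $\sigma$ because a non-source chosen while $D_i$ equals a set of sources would have to be universal to them, which it cannot be), or, more cheaply, the fact that Proposition \ref{maxclique} already guarantees $C_1$ is a \emph{maximal} clique: since $C_1$ is a prefix of $\sigma$ and a clique, every vertex of $C_1$ is a source, so $C_1$ is a maximal antichain contained in the antichain of sources, hence equal to it. The sink claim, by contrast, really is pure bookkeeping and needs no LocalMNS.

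The substantive issue is the covering condition, which you correctly flag as the main obstacle but do not actually prove; since this is the entire content that separates ``chain'' from ``maximal chain,'' the proposal as written has a gap here. Naming Claim \ref{setdi}, the selection rule, and consecutiveness is the right toolbox, but the argument is not a one-liner: in the paper one takes a maximal antichain $A$ that \emph{covers} $C_{i-1}$ with $A <_{{\cal MA}(P_\sigma)} C_i$, deduces $C_{i-1}\cap C_i \subsetneq A$ from consecutiveness, and then splits into cases according to whether $A \subseteq C_{i-1}\cup C_i$. If yes, one further splits on whether the opening vertex $v$ of $C_i$ lies in $A$: if $v\in A$ then Chainclique's rule $C_i^1 = \{v\}\cup(N(v)\cap C_{i-1})$ would produce a set not contained in $C_i$; if $v\notin A$ then some $x\in A\setminus C_{i-1}$ has a strictly larger neighborhood in $D_{i_0}=C_{i-1}$ than $v$, contradicting the LocalMNS choice. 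If $A\not\subseteq C_{i-1}\cup C_i$, one takes $x\in A\setminus(C_{i-1}\cup C_i)$ and uses consecutiveness plus Lemma \ref{defMA} and the linear-extension property to show $x$ is processed strictly between the end of $C_{i-1}$ and the end of $C_i$ without being universal to $C_i$, forcing Chainclique to open an intermediate clique. None of this is implausible given your setup, but it is precisely the part you would still have to supply.
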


\begin{proof}
    Let $\capc=C_1,\dots,C_k$ be the chain of cliques output by $Chainclique(G, \sigma)$. In proposition \ref{maxclique}, we proved that $\capc$ forms a chain of maximal cliques of ${\cal MA}(P_\sigma)$. Thereby we only have to prove now that the chain is maximal. 
    
    We will show in the next three claims that $C_1$ is the set of sources of $P_\sigma$, that $C_j$ covers $C_{j-1}$ and finally that $C_k$ is the set of sinks of $P_\sigma$. Using proposition \ref{structchain}, we will be able to deduce that $ C_1 <_{{\cal MA}(P_\sigma)} C_2 <_{{\cal MA}(P_\sigma)} \dots <_{{\cal MA}(P_\sigma)} C_k$ is a maximal chain of ${\cal MA}(P_\sigma)$.
    
    \begin{claim}
    $C_1$ is the set of sources of $P_\sigma$.
    \end{claim}
    \begin{proof}
    For the initial case, let $C_S$ be the set of sources of $P_\sigma$. We will show that $C_1=C_S$ by proving that $\sigma$ starts with all the vertices of $C_S$ and only them. Since $\sigma$ is a linear extension of $P_\sigma$, $\sigma$ starts with at least one source. So we suppose without loss of generality that $\sigma$ starts with a set of sources $S \subsetneq C_S$ and $S\neq \emptyset$. Now assume for contradiction that after $S$, we have a vertex $x$ such that $x \notin C_S$. Let $i=\siginv(x)$. All the sources after $x$ are complete to $S$ and so for LocalMNS to choose $x$, $x$ must also be universal to $S$ since at this step $D_i$ is equal to $S$. Now since $x$ does not belong to $C_S$, there is a vertex $v \in C_S$ that is comparable to $x$ and because $C_S$ is the set of sources of $P_\sigma$ and since $\sigma$ is a linear extension of $P_\sigma$, we must have that $v <_\sigma x$. And so $v$ must belong to $S$. But now $x$ cannot be complete to $S$, which is a contradiction. So $\sigma$ starts with all the sources and only them. 
    \end{proof}
    
    \begin{claim}
    $C_{j-1} \prec_{{\cal MA}(P_{\sigma})} C_{j}$ for $1<j\leq k$. 
    \end{claim}
    \begin{proof}
        Assume for contradiction that $C_j$ does not cover $C_{j-1}$ and that $C_j$ is leftmost with this property. So $C_g$ covers $C_{g-1}$ for $1<g\leq j-1$.         
        Let $A$ be a maximal clique of the lattice such that $A$ covers $C_{j-1}$ and $ A <_{{\cal MA}(P_\sigma)} C_{j}$. Using proposition \ref{consecutivity} on $C_{j-1}$, $C_j$ and $A$, we deduce that $C_{j-1} \cap C_j \subsetneq A $.
        We have two cases: either $A \subseteq C_j \cup C_{j-1}$ or $A \not\subset C_j \cup C_{j-1}$.

        In the first case, since $C_{j-1} \cap C_j \subseteq A$, we have $C_{j-1} \cap C_j \subseteq A \cap C_{j-1}$. Assume for contradiction that $C_{j-1} \cap C_j = C_{j-1} \cap A$. Using  $A \subset C_j \cup C_{j-1}$ and $C_{j-1} \cap C_j = C_{j-1} \cap A$ we can deduce that $A \subset C_j$, which contradicts the maximality of $A$. Therefore  $C_{j-1} \cap C_j \subsetneq A \cap C_{j-1}$. Let $v=\sigma(i)$ be the leftmost vertex of $C_j-C_{j-1}$ in $\sigma$. We again have  two cases: either $v \in A$ or $v \notin A$. In the first case, Chainclique set $C_j^1=(N(v) \cap C_{j-1}) \cup \{v\}$. But since  $C_{j-1} \cap C_j \subsetneq A \cap C_{j-1}$ and $v \in A$, we have $C_{j-1} \cap C_j \subsetneq N(v) \cap C_{j-1}$ and so $C_j^1 \not\subset C_j$. This is a contradiction to the behavior of Chainclique. In the second case, using claim \ref{setdi} on $v$ we deduce that the set $D_i$ of LocalMNS equals $C_{j-1}$. But now let $x \in A -C_{j-1}$. Since $C_{j-1} \cap C_j \subsetneq A \cap C_{j-1}$, we have that $N(v) \cap D_i \subsetneq N(x) \cap D_i$. This is a contradiction to the choice of LocalMNS. 

        In the second case, let $x \in A-(C_j \cup C_{j-1})$.  Assume for contradiction that there exists a maximal clique $B$ such that $x \in B$ and $B <_{{\cal MA}(P_\sigma)} C_{j-1}$. Now $x \in B, A$ and $x \notin C_{j-1}$ contradicting proposition \ref{consecutivity}. Therefore $x$ cannot appear in $\sigma$ before the last vertex of $C_{j-1}$. Since $x \notin C_j$, $\exists y \in C_j$ such that $xy \notin E$. Using lemma \ref{defMA} on $x$, $y$ we deduce that $x <_{P_\sigma} y$ and since $\sigma$ is a linear extension of $P_\sigma$ we know that $x <_{\sigma} y$. But now since $x$ appears after the last vertex of $C_{j-1}$, before the last of $C_j$ and $x$ is not complete to $C_j$, Chainclique must build a clique in the sequence between $C_{j-1}$ and $C_j$, which is a contradiction.

    \end{proof}

    \begin{claim}
    $C_k$ is the set of sinks of $P_\sigma$.
    \end{claim}
    \begin{proof}
        For the final case, we show that $C_k$ is the set of sinks of $P_\sigma$. Assume for contradiction that $x$ is a sink of $P_\sigma$ and $x$ does not belong to $C_k$. All the vertices belong to at least one clique of $\capc$ and let $C_g=C_{last[x]}$. Since $x \notin C_k$, we have $g<k$. So $x$ does not belong to $C_{g+1}$. But now let $y$ be the first vertex in $\sigma$ of $C_{g+1} - C_g$. Since $x \notin C_{g+1}$ we have $xy \notin E$ and $x <_\sigma y$. But this contradicts that $x$ is a sink.  
    \end{proof}

\end{proof}

\begin{corollary}
Let $G$ be a cocomparability graph, then a maximal interval subgraph of $G$ can be computed in $O(n+mlog n)$.
\end{corollary}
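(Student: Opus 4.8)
The plan is to assemble the machinery developed in this section together with the complexity bounds established along the way; no new mathematical idea is needed. First I would obtain a cocomp ordering $\tau$ of $G$. As recalled in subsection~\ref{sec:graphsearches}, such an ordering can be computed in $O(n+m\log n)$ time by an easily implementable algorithm (indeed in linear time by the algorithm of~\cite{MS99}), so this step stays within budget; and if a cocomp ordering is already supplied as part of the input, the step is free.

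Next I would run $LocalMNS^+$ on $(G,\tau)$ to produce $\sigma=LocalMNS^+(G,\tau)$. By Property~\ref{complMNS} this takes $O(n+m\log n)$ time, and by Lemma~\ref{localMNScocomp} the ordering $\sigma$ is again a cocomp ordering, so the transitive orientation $P_\sigma$ of $\overline{G}$ is well defined. I would then apply $Chainclique(G,\sigma)$; by the complexity analysis of Algorithm~\ref{CUI} this runs in $O(n+m)$ time and outputs a chain of cliques $\capc=C_1,\dots,C_k$. Finally, since $G_\capc=(\V,E(\capc))$ is a subgraph of $G$, its edge set has size at most $m$, so the clique sequence can be expanded into an explicit adjacency representation of $G_\capc$ in $O(n+m)$ additional time. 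Summing the stages yields the claimed $O(n+m\log n)$ bound.

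Correctness is entirely inherited: Theorem~\ref{th:maxchain} tells us that $\capc$ is a \emph{maximal} chain of ${\cal MA}(P_\sigma)$, and then Proposition~\ref{brief} (equivalently, Theorem~\ref{interval-chordal}) tells us that $G_\capc$ is a maximal interval subgraph of $G$. So there is no genuine obstacle here. The only points requiring a little care are bookkeeping ones: one must check that the output of $Chainclique$ is delivered in a form from which the interval subgraph can be read off within the stated time, and that the ``$+$'' tie-break inside $LocalMNS^+$ does not push the cost past $O(n+m\log n)$ — but the latter is exactly what Property~\ref{complMNS} guarantees via the tree-based representation of the parts in the partition refinement. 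If one is willing to assume the linear-time cocomp ordering algorithm and a linear-time $LocalMNS$, the bottleneck reduces to the ordered-partition maintenance in $LocalMNS^+$, which is precisely the $O(n+m\log n)$ term.
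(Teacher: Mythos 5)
Your proposal is correct and follows exactly the route the paper intends: compute a cocomp ordering, run $LocalMNS^+$ (the $O(n+m\log n)$ bottleneck by Property~\ref{complMNS}, with Lemma~\ref{localMNScocomp} guaranteeing the result is still a cocomp ordering), then run Chainclique in $O(n+m)$, with correctness delivered by Theorem~\ref{th:maxchain} and Proposition~\ref{brief}. This matches the paper's own (essentially identical) argument, which it spells out only for the analogous maximal chordal subgraph corollary.
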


To finish let us now show that any maximal chain of ${\cal MA}(P_\sigma)$ can be computed by LocalMNS.

\begin{theorem}
    For a cocomparability graph $G$ and a transitive orientation $P$ of $\overline{G}$, every maximal chain of ${\cal MA}(P)$ can be computed using $Chainclique(G, \sigma)$ on some cocomp  LocalMNS ordering $\sigma$.
\end{theorem}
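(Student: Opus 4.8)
The plan is to prove the converse direction to Theorem~\ref{th:maxchain}: given a maximal chain $\capc^\star = C_1^\star \prec_{{\cal MA}(P)} \dots \prec_{{\cal MA}(P)} C_k^\star$ of ${\cal MA}(P)$, I must exhibit a cocomp LocalMNS ordering $\sigma$ for which $Chainclique(G,\sigma)$ reproduces exactly $\capc^\star$. First I would produce a candidate linear order $\sigma$ from $\capc^\star$ itself: by Proposition~\ref{brief} and Corollary~\ref{intervalcharac} the chain $\capc^\star$ defines an interval subgraph $G_{\capc^\star}$ in which the maximal cliques occur consecutively, so there is a natural ``clique-path'' ordering of $V(G)$ obtained by listing the vertices in the order their $first_{\capc^\star}$ index increases, breaking ties among vertices with the same first clique in a way to be fixed below. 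I would then verify that $\sigma$ is a cocomp ordering of $G$ (an umbrella $a<_\sigma b<_\sigma c$ with $ac\in E$, $ab,bc\notin E$ would force, via $first/last$ indices and Claim~\ref{middleclique}, a violation of the consecutiveness Proposition~\ref{consecutivity} in $\capc^\star$).

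The heart of the argument is a two-part consistency check. First, that $Chainclique(G,\sigma)$ outputs precisely $\capc^\star$: since $\sigma$ lists the vertices in clique-path order of $\capc^\star$ and $\capc^\star$ respects consecutiveness, the greedy ``augment the current clique while the next vertex is universal to it'' rule of Algorithm~\ref{CUI} recovers $C_1^\star,\dots,C_k^\star$ in order — here I would use that, by Proposition~\ref{structchain}, $C_1^\star$ is the set of sources, each $C_i^\star$ covers $C_{i-1}^\star$, and $C_k^\star$ is the set of sinks, so no intermediate clique is missed and no clique is split. Second, and this is the delicate step, I must choose the tie-break order on each ``block'' of vertices sharing the same first-clique index so that the resulting $\sigma$ is realizable as $LocalMNS^+(G,\tau)$ for some cocomp ordering $\tau$. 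Concretely I would show $\sigma$ is itself a valid output of LocalMNS (ignoring the ``$+$'' refinement): at step $i$, with $D_i = C_{j_{i-1}}^{p_{i-1}}$ by Claim~\ref{setdi}, the vertex $\sigma(i)$ must have maximal neighbourhood within $D_i$ among the remaining vertices. When $\sigma(i)$ augments the current clique this is automatic ($\sigma(i)$ is universal to $D_i$); when $\sigma(i)$ opens a new clique $C_j^\star$, I need that no unchosen vertex is universal to $D_i = C_{j-1}^\star$, which follows from $C_{j-1}^\star$ being a \emph{maximal} clique together with the covering relation $C_{j-1}^\star \prec_{{\cal MA}(P)} C_j^\star$ (Lemma~\ref{defMA}) — any vertex universal to $C_{j-1}^\star$ not already placed would extend $C_{j-1}^\star$, contradicting maximality. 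Then $\sigma$ being a LocalMNS ordering that is also a cocomp ordering, together with Lemma~\ref{localMNScocomp}, lets me take any $\tau$ with $\sigma = LocalMNS^+(G,\tau)$ — e.g.\ $\tau = \sigma^{-}$ reversed appropriately, or simply observe a LocalMNS ordering of a cocomp graph is a $LocalMNS^+$ ordering of a suitable tie-break.

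The main obstacle I anticipate is the tie-breaking within a block of vertices that all first appear in the same clique $C_j^\star$: LocalMNS alone does not pin down their relative order (they are all universal to $D_i$ at the moment of selection), so I must argue that at least one admissible internal order is consistent with $\capc^\star$ and with the cocomp/umbrella-free property simultaneously, and that it arises from the ``$+$'' rule applied to \emph{some} cocomp $\tau$. I expect to handle this by ordering each block according to the $last_{\capc^\star}$ values (smaller $last$ first, with further ties broken by $P$-comparability, sinks-last) and then checking directly that this refinement never creates an umbrella and never forces Chainclique to close $C_j^\star$ prematurely; the bookkeeping here, rather than any conceptual difficulty, is where the real work lies.
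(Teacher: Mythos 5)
Your overall strategy is the same as the paper's: take $\sigma$ to be the clique-path (interval) ordering determined by the given maximal chain, argue that it is a cocomp LocalMNS ordering, and check that Chainclique reproduces the chain; the paper compresses exactly this into three sentences, so most of your elaboration is welcome detail rather than a departure. One simplification: you do not need a bespoke tie-break inside each block. Since every vertex preceding $\sigma(i)$ in $\sigma$ is already chosen when $\sigma(i)$'s turn comes, taking $\tau=\sigma^{-1}$ (again a cocomp ordering) makes the ``$+$'' rule select the leftmost unchosen vertex of $\sigma$ among the tied candidates, so $LocalMNS^+(G,\sigma^{-1})=\sigma$ for \emph{any} fixed order within blocks --- provided each $\sigma(i)$ has an inclusion-maximal neighbourhood in $D_i$ at its turn.

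That proviso is where your argument has a genuine gap. When $v=\sigma(i)$ opens the clique $C_j$, we have $D_i=C_{j-1}$ and $N(v)\cap C_{j-1}=C_{j-1}\cap C_j$ (by Lemma \ref{defMA}, $v$ covers every element of $C_{j-1}\setminus C_j$). LocalMNS demands that no unchosen $u$ satisfy $N(v)\cap D_i\subsetneq N(u)\cap D_i$; ruling out vertices \emph{universal} to $C_{j-1}$, as you propose, only excludes the case $N(u)\cap C_{j-1}=C_{j-1}$ and says nothing about an unchosen $u$ adjacent to all of $C_{j-1}\cap C_j$ \emph{and} to some $x\in C_{j-1}\setminus C_j$. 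The stronger statement is true but needs the covering relation, not just clique maximality. Sketch: such a $u$ lies outside $C_j$, so by maximality of $C_j$ some $s\in C_j$ is comparable to $u$ in $P$; since $u$ is incomparable to all of $C_j\cap C_{j-1}$, in fact $s\in C_j\setminus C_{j-1}$, and by Lemma \ref{defMA} $s$ covers in $P$ every element of $C_{j-1}\setminus C_j$. Moreover $u\notin C_{j-1}$ yields some $w\in C_{j-1}\setminus C_j$ comparable to $u$, and since $u$ lies in a later clique of the chain, Lemma \ref{powerMA} forces $w<_P u$. Now if $s<_P u$ then $x\prec_P s<_P u$ contradicts $xu\in E$, while if $u<_P s$ then $w<_P u<_P s$ contradicts $s$ covering $w$. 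With this argument inserted at the step you yourself flag as delicate, your plan goes through.
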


\begin{proof}
    Let $C_1\prec_{{\cal MA}(P)}...\prec_{{\cal MA}(P)}C_k$ be a maximal chain of ${\cal MA}(P)$.
Let us take the ordering $\tau$ as the interval ordering for this maximal chain of cliques. Now using $LocalMNS^+$ on $\tau^{-1}$, we get $\tau$. So $\tau$ is a LocalMNS cocomp ordering and using Chainclique$(G, \tau)$  we get $C_1<_{{\cal MA}(P)}...<_{{\cal MA}(P)}C_k$.
\end{proof}

Using Theorem \ref{interval-chordal} we immediately have:

\begin{corollary}
A maximal  chordal subgraph of a cocomparability graph $G$ can be computed with complexity $O(n+mlog n)$.
\end{corollary}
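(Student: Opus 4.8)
The plan is to show that the corollary follows by simply assembling the machinery of this subsection together with Theorem \ref{interval-chordal}, so that no new combinatorial argument is needed; the work is reduced to stringing together algorithms whose correctness and running time have already been established, and checking that the total cost stays within $O(n+m\log n)$.

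First I would recall that, as noted in subsection \ref{sec:graphsearches}, a cocomp ordering $\tau$ of $G$ can be produced in $O(n+m\log n)$ time. Then I would run $LocalMNS^+(G,\tau)$: by Property \ref{complMNS} this costs $O(n+m\log n)$, and by Lemma \ref{localMNScocomp} the resulting ordering $\sigma$ is again a cocomp ordering. Next I would run $Chainclique(G,\sigma)$, which by the complexity property for Chainclique costs only $O(n+m)$ and, by Theorem \ref{th:maxchain}, returns a chain $\capc=C_1<_{{\cal MA}(P_\sigma)}\dots<_{{\cal MA}(P_\sigma)}C_k$ that is in fact a \emph{maximal} chain of ${\cal MA}(P_\sigma)$. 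Summing these three phases gives an $O(n+m\log n)$ bound, and writing down the output $G_\capc$ stays within this budget since it is a spanning subgraph of $G$ (its edge set has size at most $m$).

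It remains to argue correctness, i.e.\ that $G_\capc$ is a maximal chordal subgraph of $G$. By Proposition \ref{brief}, a maximal chain of ${\cal MA}(P_\sigma)$ forms a maximal interval subgraph of $G$, so $G_\capc$ is a maximal interval subgraph. Applying Theorem \ref{interval-chordal}, every maximal interval subgraph of a cocomparability graph is also a maximal chordal subgraph; hence $G_\capc$ is a maximal chordal subgraph of $G$, which is exactly what the corollary asserts.

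I expect no genuine obstacle in this proof: all the difficulty has already been discharged into Theorem \ref{interval-chordal} (whose proof constructs the long induced paths witnessing that no edge of $E(G)\setminus E(\capc)$ can be reinserted without destroying chordality) and into the running-time analyses of $LocalMNS^+$ and $Chainclique$. The only point deserving a line of care is expository, namely making explicit which object is returned (the ordered list of maximal cliques $C_1,\dots,C_k$, equivalently the edge set $E(\capc)$, equivalently the graph $G_\capc$) and noting that emitting it is $O(n+m)$ because $G_\capc$ is a subgraph of $G$.
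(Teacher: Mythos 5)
Your proposal is correct and follows essentially the same route as the paper: compute a cocomp ordering, apply $LocalMNS^+$ (the paper uses its $LocalMCS^+$ implementation), run Chainclique to get a maximal chain of ${\cal MA}(P_\sigma)$, and conclude via Theorem \ref{interval-chordal}; the running time is dominated by the $O(n+m\log n)$ cost of $LocalMNS^+$. Your write-up is merely more explicit about which cited results justify each step.
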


\begin{proof}

Preuve à déplacer au dessus ? 

    The algorithm consists of finding a cocomp ordering, then performing a $LocalMCS^+$ and then using Chainclique. A cocomp ordering can be found in $(O(n+m)$ \cite{MS99} and Chainclique has complexity $O(n + m)$. So the bottleneck of this algorithm lies in $LocalMCS^+$. Using proposition \ref{complMNS} the full algorithm can be computed in $O(n+mlog n)$ time.

\end{proof}

%

\section{Computing  all simplicial vertices}\label{sec:simplicial}

In order to compute simplicial vertices we need to consider some particular maximal cliques, called \textbf{fully comparable cliques}.

\begin{definition}
 In a lattice $\cal L=(X,\leq_{\cal L})$, an element $e \in X$ is said to be \emph{fully comparable} if and only if for every $u \in X$, either $e \leq_{\cal L} u$ or $u \leq_{\cal L} e$.
\end{definition}

 We now prove that if $\sigma$ is an $MNS$ cocomp ordering of $G$ then all the fully comparable cliques of $MA(P_\sigma)$ belong to the sequence of cliques obtained using \ChaCli($G$,$\sigma$). As we will show, these cliques play a decisive role in the problem of finding the simplicial vertices of $G$.

\begin{theorem}\label{CliqueInter}
Let $G$ be a cocomparability graph and $\sigma$ a $MNS$ cocomp ordering of the vertices of $G$.
If $C_b$ is a maximal clique such that $C_b$ is fully comparable in ${\cal MA}(P_\sigma)$ then $C_b$ is a maximal clique of the chain output by $\ChaCli(G,\sigma)$.
\end{theorem}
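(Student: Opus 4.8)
The plan is to show that every fully comparable clique $C_b$ of ${\cal MA}(P_\sigma)$ appears in the chain produced by $\ChaCli(G,\sigma)$, arguing by contradiction. Suppose $C_b$ is fully comparable but is not one of the cliques $C_1,\dots,C_k$ output by $\ChaCli(G,\sigma)$. The idea is to pinpoint where $C_b$ would have to sit relative to the output chain, using the fact that $\ChaCli$'s output is itself a chain of maximal cliques (Proposition \ref{maxclique}, valid here since MNS cocomp orderings are LocalMNS cocomp orderings — this is the point I need to check first, since the theorem is stated for MNS, not LocalMNS; indeed MNS makes decisions based on the full visited set $X$, and since $D_i \subseteq X$ a vertex universal to $D_i$ with maximal neighborhood in $X$ is in particular a LocalMNS-legal choice, so any MNS ordering is a LocalMNS ordering, and Proposition \ref{maxclique} and the structural claims of Theorem \ref{th:maxchain} apply).

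The next step is to locate $C_b$ in the lattice relative to the output chain. Since $C_b$ is fully comparable, for each $C_i$ in the output we have $C_i \leq_{\cal L} C_b$ or $C_b \leq_{\cal L} C_i$. Because the output chain runs from the sources $C_1$ to the sinks $C_k$ (Theorem \ref{th:maxchain}), and $C_b \neq C_i$ for all $i$, there is a unique index $t$ with $C_t <_{{\cal MA}(P_\sigma)} C_b <_{{\cal MA}(P_\sigma)} C_{t+1}$. Now I would use Proposition \ref{consecutivity} (consecutiveness) repeatedly: applying it to $C_t, C_b, C_{t+1}$ gives $C_t \cap C_{t+1} \subseteq C_b$, and since the output chain is a maximal interval subgraph we also know $C_b \neq C_t, C_{t+1}$ forces $C_b$ to contain a vertex $w$ outside $C_t \cup C_{t+1}$ or to split the consecutive interval of some vertex. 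The key is to produce a vertex $w \in C_b$ that is universal to $C_t$ (this follows because $C_t \cap C_{t+1} \subseteq C_b$ and by choosing $w \in C_b - C_t$ appropriately using Lemma \ref{defMA}, or directly: $C_t <_{{\cal MA}} C_b$ together with $C_t$ being covered in the output forces some $w \in C_b$ adjacent to all of $C_t$). Then I derive a contradiction exactly as in the maximality arguments of Proposition \ref{maxclique}: at the moment $\ChaCli$ finished building $C_t$ and chose the first vertex $v$ of $C_{t+1}-C_t$, the MNS/LocalMNS rule should have preferred $w$ (which is universal to $D_i = C_t$, using Claim \ref{setdi}) over $v$ (which is not universal to $C_t$, since $v \notin C_t$ and $C_t$ is a maximal clique). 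This contradicts the search having chosen $v$, unless $w$ was already placed earlier in $\sigma$ — and that case is ruled out again via Proposition \ref{consecutivity} applied to the clique in which $w$ first appears, $C_b$, and $C_{t+1}$, since $w \notin C_{t+1}$ would break consecutiveness, while $w \in C_{t+1}$ contradicts $C_t \cup C_{t+1}$ not already containing a universal-to-$C_t$ vertex outside them.

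I expect the main obstacle to be the careful case analysis establishing the existence of the "witness" vertex $w \in C_b$ that is universal to $C_t$ and whose position in $\sigma$ is after the last vertex of $C_t$: both the existence (one must extract it from the strict inclusion $C_t \cap C_{t+1} \subsetneq C_b \cap$ (something), via Lemma \ref{variation1} and Lemma \ref{defMA}) and the placement argument (ruling out that $w$ was visited too early, which is where Claim \ref{setdi} relating $D_i$ to the partial cliques $C_{j_i}^{p_i}$ does the heavy lifting). The rest — once $w$ is in hand — mirrors the contradiction already worked out in Proposition \ref{maxclique}, so it should go through routinely.
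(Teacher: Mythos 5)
Your proposal has a fatal gap at its very first step: the claim that every MNS cocomp ordering is a LocalMNS cocomp ordering, which you need in order to invoke Proposition \ref{maxclique}, Theorem \ref{th:maxchain} and Claim \ref{setdi}. The inference you give for it is invalid: a vertex whose neighborhood is maximal under inclusion in the visited set $X$ need not have a maximal neighborhood in the subset $D_i \subseteq X$, because two neighborhoods that are incomparable in $X$ can become strictly nested once restricted to $D_i$ (the witnesses of incomparability may all lie in $X - D_i$). The paper's own example in Figure \ref{fig:exemple2} settles this: the LDFS (hence MNS) cocomp ordering $1,3,2,4,6,5$ makes Chainclique output $\{1,2,3\},\{1,2,4\},\{4,5,6\}$, which is \emph{not} a maximal chain of ${\cal MA}(P_\sigma)$ --- it skips $\{1,4,5\}$, which is sandwiched between two consecutive output cliques. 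If your reduction were correct, Theorem \ref{th:maxchain} would make the output a maximal chain and the theorem you are proving would follow trivially (a maximal chain from sources to sinks contains every fully comparable element); the example shows neither holds. Worse, your downstream contradiction argument never uses full comparability of $C_b$ beyond sandwiching it as $C_t <_{{\cal MA}(P_\sigma)} C_b <_{{\cal MA}(P_\sigma)} C_{t+1}$, so if it were sound it would also apply to $C_b=\{1,4,5\}$ in that example and ``prove'' a false statement. The precise point of failure is your appeal to ``$w$ is universal to $D_i = C_t$, so the search should have preferred $w$ over $v$'': that is a LocalMNS argument; MNS compares labels in all of $X$, where $v$ may legitimately be incomparable to $w$ (in the example, $6$ is chosen over $5$ because $36\in E$ while $5$'s extra neighbors lie inside the current clique).

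The paper's proof is necessarily different and does use full comparability essentially. It considers the down-set $V_{C_b}=\{x \mid \exists C_x \ni x,\ C_x \leq_{{\cal MA}(P_\sigma)} C_b\}$ and shows, using the MNS label comparison in $X$ together with Lemma \ref{defMA} and Proposition \ref{consecutivity}, that $\sigma$ must enumerate all of $V_{C_b}$ before any other vertex (here full comparability guarantees every vertex outside $V_{C_b}$ lies only in cliques strictly above $C_b$, which is what makes the label comparison $N(v)\cap V_{C_b} \subset N(x)\cap V_{C_b}$ go through). It then shows that the last clique Chainclique builds on this prefix is exactly the set of sinks of the induced suborder, which is $C_b$. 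If you want to salvage your approach, you would have to replace the LocalMNS machinery with an argument of this kind; as written, the proposal does not establish the theorem.
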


\begin{proof}
    Let $C_b$ be a fully comparable maximal clique in ${\cal MA}(P_\sigma)$. We define $V_{C_b}$ to be $\{x \in V | \exists C_x$ such that $x \in C_x$ and $C_x \leq_{{\cal MA}(P_\sigma)} C_b\}$. 

	Let us first prove that the ordering $\sigma$ starts with all the vertices of $V_{C_b}$. For the sake of a contradiction, let's assume that we have in $\sigma$ a vertex $v$ such that $\exists C_v$, $v \in C_v$ and $C_v >_{{\cal MA}(P_\sigma)} C_b$ before a vertex  $x \in V_{C_b}$ and $v$ is the leftmost such vertex in $\sigma$. Let $C_x$ be a maximal clique such that $x \in C_x$ and $C_x \leq_{{\cal MA}(P_\sigma)} C_b$. We have two cases, either $xv \notin E$ or $xv \in E$.  
	
Case 1: $xv \notin E$.	Using lemma \ref{defMA} on $x,$ $v,$ we deduce that $x <_{P_\sigma} v$ and since $\sigma$ is a linear extension of $P_\sigma$, we know that $x <_{\sigma} v$ contradicting our choice of $v$. 

Case 2:  $xv \in E$. We prove that $N(v) \cap V_{C_b} \subset N(x) \cap  V_{C_b}$. Since $xv \in E$  there exists a maximal clique $D$ such that $\{x,v\} \subset D$ and since $v \notin V_{C_b}$ we know $C_b <_{{\cal MA}(P_\sigma)} D$. Using  proposition \ref{consecutivity} on $C_x,$ $C_b,$ $D$ we deduce that $x \in C_b$. Let $u$ be a vertex of $N(v) \cap V_{C_b}$.  Since $uv \in E$ there exists a maximal clique $A$ such that $\{u,v\} \subset A$ and since $v \notin V_{C_b}$ we know $C_b <_{{\cal MA}(P_\sigma)} A$. Let $C_u$ be a maximal clique such that $u \in C_u$ and $C_u \leq_{{\cal MA}(P_\sigma)} C_b$. Using  proposition \ref{consecutivity} on $C_u,$ $C_b,$ $A$ we deduce that $u \in C_b$. Therefore $ux \in E$ and so $N(v) \cap V_{C_b} \subset N(x) \cap  V_{C_b}$. But now at the time when $v$ was chosen, the label of $v$ can only be equal to the label of $x$. Now $C_b$ is a maximal clique and since $v \notin C_b$, there must exist a vertex $w  \in C_b$ such that $wv \notin E$.   Since $C_v >_{{\cal MA}(P_\sigma)} C_b$,
necessarily  $w <_\sigma v$. Since $x \in C_b$, $wx \in E$ and so the label of $x$ is strictly greater than the label of $v$ when $v$ was chosen which is a contradiction to the choice of MNS. Thus $\sigma$ starts with all the vertices of $V_{C_b}.$
	
	Since $\sigma$ starts with all the vertices of $V_{C_b}$, the ordering of the vertices of $V_{C_b}$ induced by $\sigma$ is a MNS cocomp ordering for the graph induced by $V_{C_b}$. Let $P_{C_b}$ be the transitive orientation of the complement of the graph induced by $V_{C_b}$ obtained using $\sigma$. To prove that $C_b$ belongs to the interval graph computed by $\ChaCli$ we will prove that the last clique that ChainClique  computes using the ordering induced by the vertices of $V_{C_b}$ is the set of sinks of $P_{C_b}$, which is equal to $C_b$. Let $C_1,\dots,C_k$ be the chain of cliques that ChainClique  computes using the ordering induced by the vertices of $V_{C_b}$. Assume for contradiction that $x$ is a sink and $x \notin C_k$. Let $C_g$ be the last clique that contains $x$. Since $x \notin C_k$ we have that $g<k$. Let $y$ be the first vertex in $\sigma$ that belongs to $C_{g+1}-C_g$. Since $x \notin C_{g+1}$, we must have that $xy \notin E$ and so $x <_{P_{C_b}} y,$ therefore contradicting the assumption that $x$ is a sink.
\end{proof}

This result does not hold if  $\sigma$ is not a  MNS cocomp ordering. For example just take a $P_3,$ $u,~v,~w$ and the ordering $u < w < v$ (see Figure \ref{fig:p3fully}). The algorithm cannot output $\{u,v\}$ which satisfies the property.

\begin{figure}[ht]
  \centering
  \includegraphics[scale=1]{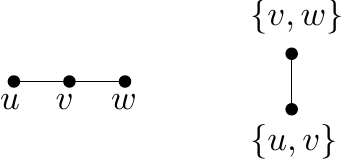}
\begin{center}
\caption{A $P_3$ and its lattice.\label{fig:p3fully}}
\end{center}
\end{figure}  


\begin{theorem}\label{thsimpl}
Let $G$ be  a cocomparability graph and $\sigma$ a cocomp ordering.
If $v$ is a simplicial vertex then there exists a maximal clique $C_v$ such that $v \in C_v$ and $C_v$ is fully comparable in ${\cal MA}(P_\sigma)$.
\end{theorem}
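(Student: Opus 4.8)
The plan is to take $C_v$ to be the closed neighborhood $N[v]=N(v)\cup\{v\}$. Since $v$ is simplicial, $G[N[v]]$ is a clique; it is maximal because any vertex adjacent to all of $N[v]$ is in particular adjacent to $v$ and hence already lies in $N(v)$. (In fact $N[v]$ is the \emph{unique} maximal clique containing $v$.) Throughout I write $P=P_\sigma$ and use that $x,y$ are comparable in $P$ exactly when $xy\notin\E$, and that each maximal clique of $G$ is a maximal antichain of $P$.

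I would argue by contradiction: suppose $C_v$ is not fully comparable, so there is a maximal clique $A$ with $A\parallel_{{\cal MA}(P)}C_v$. Unwinding the definition of $\leq_{{\cal MA}(P)}$ (equivalently, Lemma \ref{reversedef}), $A\not\leq_{{\cal MA}(P)}C_v$ yields some $a_1\in A$ with $a_1\not\leq_P c$ for all $c\in C_v$; then $a_1\notin C_v$, and since $C_v$ is a maximal antichain $a_1$ is comparable to some $c_1\in C_v$, forcing $c_1<_P a_1$, and $c_1\notin A$ (otherwise $c_1<_P a_1$ would be a comparable pair inside the antichain $A$). Symmetrically, $C_v\not\leq_{{\cal MA}(P)}A$ yields $c_2\in C_v\setminus A$ and $a_2\in A\setminus C_v$ with $a_2<_P c_2$. (Equivalently, by Propositions \ref{decompA} and \ref{supinf}, $S_{min}(C_v,A)$ and $S_{max}(C_v,A)$ are both nonempty.)

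The heart of the argument is to push these comparabilities through $v$. Since $a_1\notin C_v=N[v]$ we have $a_1v\notin\E$, so $a_1$ and $v$ are $P$-comparable; if $a_1<_P v$, then $c_1<_P a_1<_P v$ would give $c_1<_P v$, impossible since $c_1\in N[v]$ (either $c_1=v$, or $c_1v\in\E$, i.e. $c_1\parallel_P v$). Hence $v<_P a_1$. The same reasoning applied to $a_2$ and $c_2$ gives $a_2<_P v$. By transitivity $a_2<_P a_1$, and $a_1\neq a_2$ because $a_2<_P v<_P a_1$; this contradicts $A$ being an antichain. Therefore $C_v$ is fully comparable in ${\cal MA}(P_\sigma)$.

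The only step needing care is the extraction: one must check that the witnesses land in $C_v\setminus A$ and $A\setminus C_v$ respectively, so that the comparabilities $c_1<_P a_1$ and $a_2<_P c_2$ are strict and the case split ``$c_i=v$ or $c_i\in N(v)$'' is exhaustive — but this uses only that $A$ and $C_v$ are antichains. I do not expect a genuine obstacle: the statement uses nothing about $\sigma$ beyond $P_\sigma$ being a transitive orientation of $\overline G$, so no graph-search machinery is involved here, unlike Theorem \ref{CliqueInter}.
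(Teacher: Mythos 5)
Your proof is correct, but it takes a more elementary route than the paper's. The paper's argument is essentially three lines: since $v$ is simplicial it lies in a unique maximal clique $C_v$, and if some maximal clique $D$ were incomparable to $C_v$ in ${\cal MA}(P_\sigma)$, then Corollary \ref{updown} (i.e., $(D\cup C_v)\subseteq (D\vee C_v)\cup(D\wedge C_v)$ applied to $v\in C_v - D$) forces $v$ to lie in $D\vee_{{\cal MA}(P_\sigma)}C_v$ or $D\wedge_{{\cal MA}(P_\sigma)}C_v$, both of which differ from $C_v$ because $D\parallel_{{\cal MA}(P_\sigma)}C_v$ — contradicting uniqueness. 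You instead avoid the lattice operations entirely: you unwind the definition of $\leq_{{\cal MA}(P)}$ to extract $a_1\in A\setminus C_v$ with $c_1<_P a_1$ and $a_2\in A\setminus C_v$ with $a_2<_P c_2$, then use the fact that $C_v=N[v]$ to force $v<_P a_1$ and $a_2<_P v$, and conclude by transitivity that $A$ is not an antichain. Your extraction step is sound (reflexivity gives $a_1\notin C_v$, maximality of the antichain $C_v$ gives the comparable witness $c_1$, and the antichain property of $A$ gives $c_1\notin A$), and the case split $c_i=v$ or $c_i\in N(v)$ is exhaustive. What the paper's route buys is brevity and reuse of the already-established supremum/infimum machinery (Proposition \ref{supinf} and Corollary \ref{updown}); what yours buys is self-containedness — it needs only Lemma \ref{reversedef}, the correspondence between maximal cliques and maximal antichains, and transitivity, and it makes visible the concrete order-theoretic obstruction (a chain $a_2<_P v<_P a_1$ inside $A$). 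Both are valid proofs of the statement.
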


\begin{proof}
Clearly a simplicial vertex belongs to a unique maximal clique. For a simplicial vertex $v$ let us denote by $C_v$ the maximal clique such that $v \in C_v$. Assume that there exists a maximal clique $D$ such that $D \parallel_{{\cal MA}(P_\sigma)} C_v$. Since ${\cal MA}(P_\sigma)$ is a lattice, there exists $D \vee_{{\cal MA}(P_\sigma)} C_v$ and $D \wedge_{{\cal MA}(P_\sigma)} C_v$  two other maximal cliques. Using the definition of  $D \vee_{{\cal MA}(P_\sigma)} C_v$ and $D \wedge_{{\cal MA}(P_\sigma)} C_v$, $v$ belongs to either $D \vee_{{\cal MA}(P_\sigma)} C_v$ or $D \wedge_{{\cal MA}(P_\sigma)} C_v$. Therefore $C_v$ is not the only maximal clique that contains $v$, contradicting $v$ is a simplicial vertex.
\end{proof}

Let us now study an algorithm to find all the simplicial vertices in a cocomparability graph. The correctness of the algorithm mainly relies on theorems \ref{CliqueInter} and \ref{thsimpl}.
To compute the simplicial vertices we need for each vertex a couple of values. We compute the value $first$ and $last$  defined by $first[v]=min\{i | v \in C_i\}$ and $last[v]=max\{i | v \in C_i\}$. The value $forward[v]$ will either be $last[v]$ or the index of the last clique in which $v$  has a neighbor  not in the interval graph and so we define $forward[v]$ to be $max \{\{first[u] | u \in N(v)$ and $first[u]>last[v]\} \cup \{last[v]\}\}$. Note that $forward[v] \geq last[v]$.  

The value $backward[v]$ will either be $first[v]$ or the index of the first clique in which $v$ has a neighbor  not in the interval graph and so we define $backward[v]$ to be $min \{\{last[u] | u \in N(v)$ and $last[u]<first[v]\} \cup \{first[v]\}\}$. Note that $backward[v] \leq first[v]$.

\vspace{0.5cm}
\begin{algorithm}[ht] 
 \KwData{$\Gr$ and an MNS cocomp ordering $\sigma$ of $\V$}
 \KwResult{The simplicial vertices of $G$}
 Compute the sequence $C_1,...,C_k$ using $\ChaCli(G,\sigma)$\;  

 Compute $first$, $last$, $backward$, $forward$ for every vertex of $\V$\;

 $S \leftarrow \emptyset$  \hspace{1.5cm} \%\{The simplicial vertices\}\%\;

 \For{$(i\leftarrow 1$ \KwTo $n)$}{
   \lIf{$forward[\sigma(i)]=backward[\sigma(i)]$}{$S \leftarrow S\cup\{\sigma(i)\}$}
 }

 Output $S$\;
\caption{Simplicial vertices in a cocomparability graph \label{SV}}
\end{algorithm}
\vspace{0.5cm}

\begin{theorem}
Let $G$ be a cocomparability graph and $\sigma$  a MNS cocomp ordering of the vertices of $G$.
Then algorithm \ref{SV} outputs all the simplicial vertices of $G$.
\end{theorem}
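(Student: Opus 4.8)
The plan is to prove that the set $S$ returned by Algorithm~\ref{SV} is exactly the set of simplicial vertices of $G$, by analysing the chain of cliques $\capc=C_1,\dots,C_k$ produced by $\ChaCli(G,\sigma)$ together with the arrays $first,last,forward,backward$. The one arithmetic fact I would use repeatedly is that, by the notes accompanying their definitions, $backward[v]\le first[v]\le last[v]\le forward[v]$ for every vertex $v$; hence the test $forward[\sigma(i)]=backward[\sigma(i)]$ succeeds precisely when $forward[v]=last[v]=first[v]=backward[v]$. In that situation $v$ belongs to a single clique $C_i$ of $\capc$ (with $i=first[v]=last[v]$), it has no neighbour $u$ with $first[u]>last[v]$, and it has no neighbour $u$ with $last[u]<first[v]$. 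I would also use two elementary structural facts about $\capc$: every $C_j$ is a clique of $G$, and $uv\in E(\capc)$ if and only if some $C_j$ contains both $u$ and $v$; together with Property~\ref{propconse}, the set $\{\,j : v\in C_j\,\}$ is an interval of indices for every $v$.

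\emph{Soundness} (every vertex of $S$ is simplicial). Suppose $v$ passes the test, and put $i=first[v]=last[v]=forward[v]=backward[v]$, so $v$ lies only in $C_i$. I claim $N(v)\cup\{v\}\subseteq C_i$. Take $u\in N(v)$. If $uv\in E(\capc)$ then some clique of $\capc$ contains both $u$ and $v$; since $v$ lies only in $C_i$, this clique is $C_i$ and $u\in C_i$. If $uv\notin E(\capc)$ then $i\notin[first[u],last[u]]$, so either $last[u]<i=first[v]$ or $first[u]>i=last[v]$; the former makes $backward[v]\le last[u]<i$ and the latter makes $forward[v]\ge first[u]>i$, each contradicting $forward[v]=backward[v]=i$. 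Hence $u\in C_i$ in all cases, $N(v)\cup\{v\}\subseteq C_i$, and $v$ is simplicial because $C_i$ is a clique of $G$.

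\emph{Completeness} (every simplicial vertex is in $S$). Let $v$ be simplicial and let $C_v$ be the unique maximal clique of $G$ with $v\in C_v$, so $N(v)\cup\{v\}=C_v$. By Theorem~\ref{thsimpl} the clique $C_v$ is fully comparable in ${\cal MA}(P_\sigma)$, and since $\sigma$ is an MNS cocomp ordering, Theorem~\ref{CliqueInter} then gives that $C_v$ is one of the cliques $C_i$ of $\capc$. I next argue that $v$ lies in no other clique of $\capc$, i.e., $first[v]=last[v]=i$: if $v\in C_j$ for some $j<i$ (the case $j>i$ is symmetric), then by Property~\ref{propconse} $v$ lies in each of $C_j,C_{j+1},\dots,C_i$, and each of these is a clique of $G$ containing $v$, hence contained in $N(v)\cup\{v\}=C_v$; but the vertex with which $\ChaCli$ opens $C_{j+1}$ is non-adjacent to some vertex of $C_j$, while both of these vertices now lie in the clique $C_v$ of $G$ — a contradiction. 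Finally, every $u\in N(v)$ lies in $C_v=C_i$, so $first[u]\le i\le last[u]$; consequently no neighbour of $v$ contributes to $forward[v]$ or $backward[v]$ beyond $\{last[v]\}$ and $\{first[v]\}$, whence $forward[v]=last[v]=i=first[v]=backward[v]$ and $v\in S$.

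Putting the two directions together yields the theorem. The only non-routine ingredients are Theorems~\ref{thsimpl} and~\ref{CliqueInter}, which are exactly what is needed to guarantee that the (unique) maximal clique of a simplicial vertex survives into the $\ChaCli$ chain; the rest is book-keeping with the definitions of $first,last,forward,backward$ and the consecutiveness of $\capc$. The main point to get right — and the only place where the MNS hypothesis is really used — is this survival step, since Algorithm~\ref{SV} can only detect a simplicial vertex through the clique that contains it. I would also remark that, once $\capc$ is available, the four arrays are computed by a single scan of the adjacency lists, so the procedure runs within the time bound of $\ChaCli$ plus that of producing the MNS cocomp ordering.
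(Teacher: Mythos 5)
Your proof is correct and follows essentially the same route as the paper: Theorems \ref{thsimpl} and \ref{CliqueInter} give that the unique maximal clique of a simplicial vertex survives into the \ChaCli{} chain, and the rest is the $first$/$last$/$forward$/$backward$ bookkeeping combined with the consecutiveness of the chain. You are in fact somewhat more careful than the paper's own (rather terse) argument, notably in verifying explicitly --- via the non-universality of the vertex that opens each new clique --- that a simplicial vertex occurs in exactly one clique of the chain, a point the paper glosses over.
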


\begin{proof}
    Using theorem \ref{CliqueInter} we deduce that all the simplicial vertices and their neighborhood belong to one clique of the sequence output by \ChaCli. So to test if a vertex is a simplicial vertex, we have to check that a vertex belongs to only one clique in the sequence and has no neighbors in the rest of the interval graph. By definition, for any vertex $v$ we have that $last[v] \leq forward[v]$ and $backward[v] \leq first[v]$. So for any simplicial vertex $\sigma(i)$ we have that  $forward[\sigma(i)] = backward[\sigma(i)] = last[\sigma(i)] = first[\sigma(i)]$. 
    If a vertex $\sigma(i)$ is not simplicial then either it belongs to more than one clique in the sequence and so  $last[\sigma(i)]\neq first[\sigma(i)]$ implying $forward[\sigma(i)] \neq backward[\sigma(i)]$ or it has a neighbor outside the sequence and so $last[\sigma(i)] < forward[\sigma(i)]$ or $backward[\sigma(i)] < first[\sigma(i)]$ implying $forward[\sigma(i)] \neq backward[\sigma(i)]$. Therefore our algorithm successfully finds all the simplicial vertices of a cocomparability graph.
\end{proof}

\begin{theorem}
 Simplicial vertices can be computed in linear time on a cocomparability graph, when a cocomp ordering is provided.
 
\end{theorem}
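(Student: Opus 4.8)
The plan is to assemble the algorithm from three linear-time phases and then invoke the correctness statement already proved for Algorithm~\ref{SV}.

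First I would upgrade the supplied ordering to an \emph{MNS} cocomp ordering, since Algorithm~\ref{SV} requires its input to be MNS while the hypothesis only gives us some cocomp ordering $\tau$. The fix is to compute $\sigma := MNS^+(G,\tau)$: its output is an MNS ordering by construction, and it is again a cocomp ordering. The latter can be seen either from the characterization of cocomp-ordering-preserving searches in \cite{BigArt}, or by a direct argument essentially identical to the proof of Lemma~\ref{localMNScocomp} (the ``flipping property'': if $u<_\sigma v$ and $u<_\tau v$ for a non-neighbour pair with $u$ leftmost in $\sigma$, the $+$ rule forces a previously visited witness $w$ with $uw\in\E$, $vw\notin\E$, and minimality of $u$ then produces an umbrella $u<_\tau v<_\tau w$ in $\tau$; the converse direction is unchanged). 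Since MNS runs in $O(n+m)$ via MCS and partition refinement, this phase costs $O(n+m)$.

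Second, run $\ChaCli(G,\sigma)$ to obtain the chain $C_1,\dots,C_k$; by the complexity property of Chainclique this is $O(n+m)$, and the total size $\sum_{j}|C_j|$ of the output is also $O(n+m)$, because all edges inside the $C_j$'s lie in $E(\capc)\subseteq\E$ (contributing $O(m)$) plus at most one extra term per vertex for singleton cliques. Next compute $first$, $last$, $backward$, $forward$ for every vertex: $first$ and $last$ are filled by a single left-to-right scan of $C_1,\dots,C_k$ in time $O(\sum_j|C_j|)=O(n+m)$; then $forward[v]$ (resp.\ $backward[v]$) is computed, once $first$ and $last$ are available, by one pass over $N(v)$ taking the relevant maximum (resp.\ minimum), for a total of $O(\sum_v|N(v)|)=O(n+m)$.

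Third, the final \textbf{for} loop of Algorithm~\ref{SV} performs the $O(1)$ test $forward[\sigma(i)]=backward[\sigma(i)]$ for each $i$, which is $O(n)$. Correctness is exactly the preceding theorem on Algorithm~\ref{SV}, which in turn rests on Theorems~\ref{CliqueInter} and \ref{thsimpl}: every simplicial vertex lies in a fully comparable maximal clique of ${\cal MA}(P_\sigma)$, hence in a clique of the $\ChaCli$ chain, and the $forward/backward$ equality detects precisely the vertices that sit in a single clique of the chain and have no neighbour elsewhere in it. Adding the three phases yields the claimed $O(n+m)$ running time. The only genuinely delicate point is the first phase — being sure that $MNS^+$ (not merely $LocalMNS^+$) both preserves cocomp orderings and admits a linear-time implementation; the remaining work is bookkeeping on arrays of total length $O(n+m)$.
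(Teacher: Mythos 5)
Your three-phase decomposition and the second and third phases (running $\ChaCli$, filling $first$/$last$ by one scan of the clique chain whose total size is $O(n+m)$, then $forward$/$backward$ by one pass over each neighbourhood, then the $O(n)$ test) match the paper's proof exactly, and your appeal to the correctness theorem for Algorithm~\ref{SV} is the right one. The problem is the first phase, and you have correctly sensed where the trouble lies but not resolved it. You propose to compute $\sigma=MNS^+(G,\tau)$ and assert it costs $O(n+m)$ ``since MNS runs in $O(n+m)$ via MCS and partition refinement.'' That justifies linear time for \emph{plain} MNS (via MCS), not for the $+$ variant: the $+$ rule requires extracting, among all currently eligible vertices, the \emph{rightmost in $\tau$}, and for an MCS-style implementation this means keeping every degree-bucket ordered by $\tau$ as vertices migrate between buckets. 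The paper itself treats exactly this ordering of parts as a $\log n$ bottleneck when it bounds LocalMCS$^+$ by $O(n+m\log n)$ (Proposition~\ref{complMNS}); the same obstruction applies to MCS$^+$, so your first phase is not known to be linear as written. For generic MNS$^+$ the situation is worse still, since one must identify the vertices with inclusion-maximal label sets.

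The paper closes this gap by choosing a \emph{specific} instance of MNS whose $+$ version is linear: it applies LBFS$^+(G,\tau)$. Every LBFS ordering is an MNS ordering, so the hypothesis of Algorithm~\ref{SV} (and of Theorem~\ref{CliqueInter}) is met; LBFS$^+$ of a cocomp ordering is again a cocomp ordering (this is the result of \cite{DusH13}, or a special case of the characterization in \cite{BigArt}, or your flipping-property argument); and LBFS$^+$ runs in $O(n+m)$ because partition refinement keeps each class ordered by $\tau$ for free. Substituting LBFS$^+$ for MNS$^+$ in your first phase repairs the argument, and the remainder of your proof then coincides with the paper's.
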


\begin{proof}
To apply algorithm \ref{SV}, we need a MNS cocomp ordering. To obtain such an ordering, if a cocomp ordering $\sigma$ is provided,
we can simply apply LBFS$^+(G, \sigma)$. This can be done in linear time.
$\ChaCli$ has complexity  $O(n+m)$. Enumerating all the cliques takes $O(n+m).$ Computing $first$ and $last$ can be done by enumerating all the cliques of the spanning interval graph. Computing the  $forward$ and  $backward$ functions can be done by enumerating for each vertex its neighborhood after computing $first$ and $last$ and so can be done in $O(n+m)$. Enumerating the vertices can be done in $O(n)$.

\end{proof}

\section{Conclusions and perspectives}\label{CONCL}

In sections \ref{background} and  \ref{list} we presented a number of examples of problems where simple interval graph algorithms
can be ``lifted'' to similar algorithms for cocomparability graphs.  These algorithms are typically based on cocomp orderings 
produced by graph searches, most notably LDFS and LBFS.  The underlying question is whether there is a structural
feature that indicates which problems can be ``lifted" in this way.  In an attempt to answer this question, we have examined the
maximal clique lattice of a cocomparability graph and have presented a characterization theorem of such lattices.  This characterization
has lead to new algorithms for finding a maximal interval subgraph of a cocomparability graph and for finding the set of simplicial
vertices in a cocomparability graph.  Both of these algorithms roughly follow maximal chains of this lattice and in 
the maximal interval subgraph case uses a new graph search, LocalMNS.  In \cite{JD} some other interesting applications of this framework 
have been developed; for example to compute a minimal clique separator decomposition of a cocomparability graph in linear time.

\medskip
\noindent
Our work raises a number of algorithmic questions:

\begin{itemize}
\item Does there exist a LocalMNS that can be implemented in linear time when used as a $+$sweep on cocomp ordering $\sigma$?
Can techniques similar to those used in \cite{KohlerM14} help?
\item Are there other polynomial time solvable interval graph problems that are amenable to the ChainClique approach?

\end{itemize}

\noindent
Similarly our work raises a number of structural questions:

\begin{itemize}
\item
Given our characterization theorem of maximal clique lattices of a cocomparability graph, a natural question is to study the structure 
imposed on cocomparability graphs by restrictions of the lattice structure.
\item  Can anything of interest be found about
the clique structure of  AT-free graphs, the natural generalization of cocomparability graphs?
\item In section \ref{sec:intervalstruct}, we have exhibited some relationships shared by cocomparability graphs and interval graphs and the importance of graph searches in cocomparability graphs. But, we still have not managed to give a full answer to the question of why some interval graph algorithms can be ``lifted'' to work on cocomparability graphs.  Does there exist some generic greedoid structure for  cocomparability graphs that explains why these greedy algorithms work? So far we have no good answer for this question. 
\end{itemize}

\bigskip
\noindent
\textbf{Acknowledgements:}
DGC wishes to thank the Natural Sciences and Engineering Research Council (NSERC) of Canada for financial support of this research.


\begin{thebibliography}{10}

\bibitem{Berhendt}
Behrendt B.
\newblock Maximal antichains in partially ordered sets.
\newblock {\em Ars Combinatoria}, 25C:149--157, 1988.

\bibitem{Birkhoff}
G.~Birkhoff.
\newblock Lattice theory.
\newblock {\em American Mathematical Society}, 25(3), 1967.

\bibitem{BLS99}
Andreas Brandst\"adt, Van~B. Le, and Jeremy~P. Spinrad.
\newblock {\em Graph Classes, a survey}.
\newblock SIAM Monographs on Discrete Mathematics and Applications, 1999.

\bibitem{CLM12}
N.~Caspard, B.~Leclerc, and B.~Monjardet.
\newblock {\em Finite Ordered Sets Concepts, Results and Uses}.
\newblock Cambridge University Press, 2012.

\bibitem{CDH13}
Derek~G. Corneil, Barnaby Dalton, and Michel Habib.
\newblock {LDFS}-based certifying algorithm for the minimum path cover problem
  on cocomparability graphs.
\newblock {\em SIAM J. Comput.}, 42(3):792--807, 2013.

\bibitem{BigArt}
Derek~G. Corneil, J\'er\'emie Dusart, Michel Habib, and E.~K\"ohler.
\newblock On the power of graph searching for cocomparability graphs.
\newblock {\em {SIAM} J. Discrete Math.}, 30(1):569--591, 2016.

\bibitem{COS299}
Derek~G. Corneil, Stephan Olariu, and Lorna Stewart.
\newblock Linear time algorithms for dominating pairs in asteroidal triple-free
  graphs.
\newblock {\em {SIAM} J. Comput.}, 28(4):1284--1297, 1999.

\bibitem{DOS09}
Derek~G. Corneil, Stephan Olariu, and Lorna Stewart.
\newblock The \textsc{{LBFS}} structure and recognition of interval graphs.
\newblock {\em SIAM J. Discrete Math.}, 23(4):1905--1953, 2009.

\bibitem{Todinca1}
Christophe Crespelle and Ioan Todinca.
\newblock An ${O}(n^2)$-time algorithm for the minimal interval completion
  problem.
\newblock {\em Theor. Comput. Sci.}, 494:75--85, 2013.

\bibitem{DP02}
Brian~A. Davey and Hilary~A. Priestley.
\newblock {\em Introduction to Lattices and Order (2. ed.)}.
\newblock Cambridge University Press, 2002.

\bibitem{DS88}
P.M. Dearing, D.R. Shier, and D.D. Warner.
\newblock Maximal chordal subgraphs.
\newblock {\em Discrete Applied Mathematics}, 20(3):181 -- 190, 1988.

\bibitem{JD}
J\'er\'emie Dusart.
\newblock {\em Graph Searches with Applications to Cocomparability Graphs}.
\newblock PhD thesis, Universit\'{e} Paris Diderot, June 2014.

\bibitem{DusH13}
J\'er\'emie Dusart and Michel Habib.
\newblock A new {LBFS}-based algorithm for cocomparability graph recognition.
\newblock {\em to appear in Discrete Applied Mathematics}, 2016.

\bibitem{DM41}
Ben Dushnik and E.~W. Miller.
\newblock Partially ordered sets.
\newblock {\em American Journal of Mathematics}, 63(3):pp. 600--610, 1941.

\bibitem{Fishburn85}
P.C. Fishburn.
\newblock {\em Interval orders and interval graphs}.
\newblock Wiley, 1985.

\bibitem{GH64}
P.C. Gilmore and A.J. Hoffman.
\newblock A characterization of comparability graphs and of interval graphs.
\newblock {\em Canad. J. Math.}, 16:539--548, 1964.

\bibitem{GOL}
Martin~Charles Golumbic.
\newblock {\em Algorithmic Graph Theory and Perfect Graphs (Annals of Discrete
  Mathematics, Vol 57)}.
\newblock North-Holland Publishing Co., Amsterdam, The Netherlands, The
  Netherlands, 2004.

\bibitem{G68}
G.~Gr\"atzer.
\newblock {\em General Lattice Theory}.
\newblock Birkh\"auser, 1968.

\bibitem{HMPV00}
Michel Habib, Ross~M. McConnell, Christophe Paul, and Laurent Viennot.
\newblock Lex-{BFS} and partition refinement, with applications to transitive
  orientation, interval graph recognition and consecutive ones testing.
\newblock {\em Theor. Comput. Sci.}, 234(1-2):59--84, 2000.

\bibitem{HM91}
Michel Habib and Rolf~H. M\"{o}hring.
\newblock Treewidth of cocomparability graphs and a new order-theoretic
  parameter.
\newblock {\em Order}, 11(3):47--60, February 1991.

\bibitem{HMPR91}
Michel Habib, Michel Morvan, Maurice Pouzet, and Jean-Xavier Rampon.
\newblock Extensions intervallaires minimales.
\newblock In {\em Compte Rendu \`a l'Acad\'emie des Sciences Paris,
  pr\'esent\'e en septembre 91, par le Pr. G. Choquet}, volume 313, pages
  893--898, 1991.

\bibitem{HMPR92}
Michel Habib, Michel Morvan, Maurice Pouzet, and Jean-Xavier Rampon.
\newblock Incidence structures, coding and lattice of maximal antichains.
\newblock Technical Report 92-079, LIRMM, 1992.

\bibitem{Todinca2}
Pinar Heggernes, Karol Suchan, Ioan Todinca, and Yngve Villanger.
\newblock Minimal interval completions.
\newblock In Gerth~St{\o}lting Brodal and Stefano Leonardi, editors, {\em ESA},
  volume 3669 of {\em Lecture Notes in Computer Science}, pages 403--414.
  Springer, 2005.

\bibitem{Jakubik91}
J\'an Jakub\'ik.
\newblock Maximal antichains in a partially ordered set.
\newblock {\em Czechoslovak Mathematical Journal}, 41(1):75--84, 1991.

\bibitem{KohlerM14}
Ekkehard K{\"{o}}hler and Lalla Mouatadid.
\newblock Linear time {LexDFS} on cocomparability graphs.
\newblock In {\em Algorithm Theory - {SWAT} 2014 - 14th Scandinavian Symposium
  and Workshops, Copenhagen, Denmark, July 2-4, 2014. Proceedings}, pages
  319--330, 2014.

\bibitem{KrSt93}
Dieter Kratsch and Lorna Stewart.
\newblock Domination on cocomparability graphs.
\newblock {\em SIAM J. Discrete Math.}, 6(3):400--417, 1993.

\bibitem{Lekkerkerker1962}
C.~Lekkerkerker and J.~Boland.
\newblock Representation of a finite graph by a set of intervals on the real
  line.
\newblock {\em Fundamenta Mathematicae}, 51(1):45--64, 1962.

\bibitem{Mark1}
George Markowsky.
\newblock The factorization and representation of lattices.
\newblock {\em Transactions of the American Mathematical Society},
  203:185--200, 1975.

\bibitem{Mark2}
George Markowsky.
\newblock Primes, irreductibles and extremal lattices.
\newblock {\em Order}, 9:265--290, 1992.

\bibitem{MS99}
Ross~M. McConnell and Jeremy Spinrad.
\newblock Modular decomposition and transitive orientation.
\newblock {\em Discrete Mathematics}, 201(1-3):189--241, 1999.

\bibitem{Meister05}
Daniel Meister.
\newblock Recognition and computation of minimal triangulations for {AT}-free
  claw-free and co-comparability graphs.
\newblock {\em Discrete Applied Mathematics}, 146(3):193--218, 2005.

\bibitem{MC}
George~B. Mertzios and Derek~G. Corneil.
\newblock A simple polynomial algorithm for the longest path problem on
  cocomparability graphs.
\newblock {\em SIAM J. Discrete Math.}, 26(3):940--963, 2012.

\bibitem{APPMM}
Rolf~H. M\"ohring.
\newblock Algorithmic aspects of comparability graphs and interval graphs.
\newblock In Ivan Rival, editor, {\em Graphs and Order}, volume 147 of {\em
  NATO ASI Series}, pages 41--101. Springer Netherlands, 1985.

\bibitem{MohRo1996}
Rolf~H. M\"{o}hring.
\newblock Triangulating graphs without asteroidal triples.
\newblock {\em Discrete Appl. Math.}, 64(3):281--287, February 1996.

\bibitem{Parra}
Andreas Parra.
\newblock {\em Structural and algorithmic aspects of chordal graph embeddings}.
\newblock PhD thesis, Technische Universität Berlin, 1996.

\bibitem{Reuter91}
K.~Reuter.
\newblock The jump number and the lattice of maximal antichains.
\newblock {\em Discrete Mathematics}, 88:289--307, 1991.

\bibitem{S02}
B.~S.~W. Schr\"oder.
\newblock {\em Ordered sets, an introduction}.
\newblock Birkh\"auser, 2002.

\bibitem{Trotter99}
William~T. Trotter.
\newblock Combinatorial aspects of interval orders and interval graphs.
\newblock {\em Electronic Notes in Discrete Mathematics}, 2:153, 1999.

\end{thebibliography}
\end{document}